\setlist[enumerate]{topsep=4pt, itemsep=3pt}
\newtheorem{theorem}{Theorem}[section]
\newtheorem{corollary}[theorem]{Corollary}
\newtheorem{lemma}[theorem]{Lemma}
\newtheorem{claim}[theorem]{Claim}
\newtheorem{definition}[theorem]{Definition}
\newtheorem{remark}[theorem]{Remark}
\newtheorem*{conjecture*}{Conjecture}
\newtheoremstyle{nonindented}{1ex}{1ex}{}{}{\bfseries}{.}{.5em}{}
\newtheoremstyle{indented}{1ex}{1ex}{\itshape\addtolength{\leftskip}{0.6cm}\addtolength{\rightskip}{0.6cm}}{}{\bfseries}{.}{.5em}{}
\theoremstyle{nonindented}
\theoremstyle{indented}
\theoremstyle{plain}
\newenvironment{alg}{\begin{algorithm}\begin{onehalfspace}\begin{algorithmic}[1]}{\end{algorithmic}\end{onehalfspace}\end{algorithm}}
\newcommand{\cross}{\times}
\newcommand{\bvec}[1]{\boldsymbol{ #1 }}
\newcommand{\set}[1]{\left\{ #1 \right\}}
\newcommand{\union}{\cup}
\newcommand{\ceil}[1]{\lceil {#1} \rceil}
\renewcommand{\hat}{\widehat}
\renewcommand{\tilde}{\widetilde}
\renewcommand{\bar}{\overline}
\DeclareMathOperator{\poly}{poly}
\def\max{\qopname\relax n{max}}
\def\argmax{\qopname\relax n{argmax}}
\def\Pr{\qopname\relax n{\mathbf{Pr}}}
\def\Ex{\qopname\relax n{\mathbf{E}}}
\newcommand{\RR}{\mathbb{R}}
\def\A{\mathcal{A}}
\def\B{\mathcal{B}}
\def\C{\mathcal{C}}
\def\D{\mathcal{D}}
\def\K{\mathcal{K}}
\def\M{\mathcal{M}}
\def\P{\mathcal{P}}
\def\O{\mathcal{O}}
\def\eps{\epsilon}
\def\SS{\mathbb{S}}
\newcommand{\eat}[1]{}
\newcommand{\INPUT}{\item[\textbf{Input:}]}
\newcommand{\OUTPUT}{\item[\textbf{Output:}]}
\newcommand{\PARAMETER}{\item[\textbf{Parameter:}]}
\newcommand{\maxi}[1]{\mbox{maximize} & {#1 } & \\}
\newcommand{\maxis}[1]{\mbox{max} & {#1 } & \\}
\newcommand{\st}{\mbox{subject to} }
\newcommand{\sts}{\mbox{s.t.} }
\newcommand{\con}[1]{&#1 & \\}
\newcommand{\qcon}[2]{&#1, & \mbox{for } #2.  \\}
\newenvironment{lp}{\begin{equation}  \begin{array}{lll}}{\end{array}\end{equation} }
\newenvironment{lp*}{\begin{equation*}  \begin{array}{lll}}{\end{array}\end{equation*}}
\begin{document}

\title{Algorithmic Bayesian Persuasion}

\author{
Shaddin Dughmi\thanks{Supported in part by NSF CAREER Award CCF-1350900.} \\
Department of Computer Science\\
University of Southern California\\
{\tt shaddin@usc.edu}
\and
Haifeng Xu\thanks{Supported by NSF grant CCF-1350900.} \\
Department of Computer Science\\
University of Southern California\\
{\tt haifengx@usc.edu}
}


\begin{titlepage}
\clearpage\maketitle
\thispagestyle{empty}

\begin{abstract}
	
	\emph{Persuasion}, defined as the act of exploiting an informational advantage in order to effect the decisions of others, is ubiquitous. Indeed, persuasive communication has been estimated to account for almost a third of all economic activity in the US. This paper examines persuasion through a computational lens, focusing on what is perhaps the most basic and fundamental model in this space: the celebrated \emph{Bayesian persuasion} model of Kamenica and Gentzkow~\cite{Kamenica11}. Here there are two players, a \emph{sender} and a \emph{receiver}. The receiver must take one of a number of actions with a-priori unknown payoff, and the sender has access to additional information regarding the payoffs of the various actions for both players. The sender can commit to revealing a noisy signal regarding the realization of the payoffs of various actions, and would like to do so as to maximize her own payoff  in expectation assuming that the receiver rationally acts to maximize his own payoff. When the payoffs of various actions follow a joint distribution (the common prior), the sender's problem is nontrivial, and its computational complexity depends on the representation of this prior.

	We examine  the sender's optimization task  in three of the most natural input models for this problem, and essentially pin down its computational complexity in each. When the payoff distributions of the different actions are i.i.d.\ and given explicitly, we exhibit a polynomial-time (exact) algorithmic solution, and a ``simple'' $(1-1/e)$-approximation algorithm. Our optimal scheme for the i.i.d.\ setting involves an analogy to auction theory, and makes use of Border's characterization of  the space of reduced-forms for single-item auctions. 
	When action payoffs are independent but non-identical with marginal distributions given explicitly, we show that  it is \#P-hard to compute the optimal expected sender utility. In doing so, we rule out a \emph{generalized Border's theorem}, as defined by Gopalan et al~\cite{Gopalan15}, for this setting.  Finally, we consider a general (possibly correlated) joint distribution of action payoffs presented by a black box sampling oracle, and exhibit a fully polynomial-time approximation scheme (FPTAS)  with a bi-criteria guarantee. Our FPTAS is based on Monte-Carlo sampling, and its analysis relies on the principle of deferred decisions. Moreover, we show that this result is the best possible in the black-box model for information-theoretic reasons. 

\end{abstract}

\end{titlepage}

\newpage

\section{Introduction}


\begin{quote}
	``\it  One quarter of the GDP is persuasion.''
\end{quote}

This is both the title, and the thesis, of a 1995 paper by McCloskey and Klamer \cite{McCloskey95}. Since then, persuasion as a share of economic activity appears to be growing --- a more recent estimate places the figure at 30\% \cite{antioch}. As both papers make clear, persuasion is intrinsic in most human endeavors.  When the tools of ``persuasion'' are tangible  --- say goods, services, or money --- this is the domain of traditional \emph{mechanism design}, which steers the actions of one or many self-interested agents towards a designer's objective. What \cite{McCloskey95, antioch} and much of the relevant literature refer to as persuasion, however, are scenarios in which the power to persuade derives from an \emph{informational advantage} of some party over others. This is also the sense in which we use the term. Such scenarios are increasingly common in the information economy, and it is therefore unsurprising that  persuasion has been the subject of a large body of work in recent years, motivated by domains as varied as auctions~\cite{Bergemann07,Eso07,Emek12,Bergemann14}, advertising  \cite{Anderson06,Johnson06,Chakraborty12}, voting \cite{Alonso14},  security \cite{Xu15,Rabinovich15}, multi-armed bandits \cite{crowds,bandit_exploration}, medical research \cite{kolotilin_experimental}, and financial regulation \cite{stresstest1,stresstest2}. (For an empirical survey of persuasion, we refer the reader to \cite{Dellavigna10}). What is surprising, however, is the lack of systematic study of persuasion through a computational lens; this is what we embark on in this paper.

In the large body of literature devoted to persuasion, perhaps no model is more basic and fundamental than the \emph{Bayesian Persuasion} model of Kamenica and Gentzkow \cite{Kamenica11}, generalizing an earlier model by Brocas and Carrillo \cite{Brocas07}. Here there are two players, who we call the \emph{sender} and the \emph{receiver}.  The receiver is faced with selecting one of a number of \emph{actions}, each of which is associated with an a-priori unknown payoff to both players. The \emph{state of nature}, describing the payoff to the sender and receiver from each action, is drawn from a prior distribution known to both players. However, the sender possesses an informational advantage, namely access to the \emph{realized} state of nature prior to the receiver choosing his action. In order to persuade the receiver to take a more favorable action for her, the sender can \emph{commit} to a policy, often known as an \emph{information structure} or \emph{signaling scheme}, of releasing information about the realized state of nature to the receiver before the receiver makes his choice.  This policy may be simple, say by always announcing the payoffs of the various actions or always saying nothing, or it may be intricate, involving partial information and added noise. Crucially, the receiver is aware of the sender's committed policy, and moreover is rational and Bayesian. We examine the sender's algorithmic problem of implementing the optimal signaling scheme in this paper. A solution to this problem, i.e., a signaling scheme, is an algorithm which takes as input the description of a state of nature and outputs a signal, potentially utilizing some internal randomness.


\subsection{Two Examples}

To illustrate the intricacy of Bayesian Persuasion, \citet{Kamenica11} use a simple example in which the sender is a prosecutor, the receiver is a judge, and the state of nature is the guilt or innocence of a defendant. The receiver (judge) has two actions, conviction and acquittal, and wishes to maximize the probability of rendering the correct verdict. On the other hand, the sender (prosecutor) is interested in maximizing the probability of conviction. As they show, it is easy to construct examples in which the optimal signaling scheme for the sender releases noisy partial information regarding the guilt or innocence of the defendant. For example, if the defendant is guilty with probability $\frac{1}{3}$, the prosecutor's best strategy is to claim ``guilt" whenever the defendant is guilty, and also claim ``guilt" just under half the time when the defendant is innocent. 
As a result, the defendant will be convicted  whenever the prosecutor claims ``guilt'' (happening with probability just under $\frac{2}{3}$), assuming that the judge is fully aware of the prosecutor's signaling scheme. We note that it is not in the prosecutor's interest to always claim ``guilt'', since a rational judge aware of such a policy would ascribe no meaning to such a signal, and render his verdict based solely on his prior belief --- in this case, this would always lead to acquittal.\footnote{In other words, a signal is an abstract object with no intrinsic meaning, and is only imbued with meaning by virtue of how it is used. In particular, a signal has no meaning beyond the posterior distribution on states of nature it induces.}

A somewhat less artificial example of persuasion is in the context of providing financial advice. Here, the receiver is an investor,  actions correspond to stocks, and the sender is a stockbroker or financial adviser with access to stock return projections which are a-priori unknown to the investor. When the adviser's commission or return is not aligned with the investor's returns, this is a nontrivial Bayesian persuasion problem. In fact, interesting examples exist when stock returns are independent from each other, or even i.i.d. Consider the following simple example which fits into the i.i.d. model considered in Section \ref{sec:iid}: there are two stocks, each of which is a-priori equally likely to generate  low (L), moderate (M), or high (H) short-term returns to the investor (independently). We refer to L/M/H as the \emph{types} of a stock, and associate them with short-term returns of $0$, $1+\eps$, and $2$ respectively. Suppose, also, that stocks of type L or H are associated with poor long-term returns of $0$; in the case of H, high short-term returns might be an indication of volatility or overvaluation, and hence poor long-term performance. This leaves stocks of type M as the only solid performers with long-term returns of $1$. Now suppose that the investor is myopically interested in maximizing short-term returns, whereas the forward-looking financial adviser is concerned with maximizing long-term returns, perhaps due to reputational considerations. Simple calculation shows that providing full information to the myopic investor results in an expected long-term reward of $\frac{1}{3}$, as does providing no information. An optimal signaling scheme, which guarantees that the investor chooses a stock with type M whenever such a stock exists, is the following: when exactly one of the stocks has type M recommend that stock, and otherwise recommend a stock uniformly at random. A simple calculation using Bayes' rule shows that the investor prefers to follow the recommendations of this partially-informative scheme, and it follows that the expected long-term return is $\frac{5}{9}$. 

\subsection{Results and Techniques}
\label{sec:intro:results}
Motivated by these intricacies, we study the computational complexity of  optimal and near-optimal persuasion in the presence of multiple actions. We first observe that a linear program with a variable for each (state-of-nature, action) pair computes a description of the optimal signaling scheme. However, when action payoffs are distributed according to a joint distribution --- say exhibiting some degree of independence across different actions --- the number of states of nature may be exponential in the number of actions; in such settings, both the number of  variables and constraints of this linear program are exponential in the number of actions. It is therefore unsurprising that the computational complexity of persuasion depends on how the prior distribution on states of nature is presented as input. We therefore consider three natural input models in increasing order of generality, and mostly pin down the complexity of optimal and near-optimal persuasion in each. Our first model assumes that action payoffs are drawn i.i.d. from an explicitly described marginal distribution. Our second model considers independent yet non-identical actions, again with explicitly-described marginals. Our third and most general model considers an arbitrary joint distribution of action payoffs presented by a black-box sampling oracle. 
In proving our results, we draw connections to techniques and concepts developed in the context of Bayesian mechanism design (BMD), exercising and generalizing them along the way as needed to prove our results. We mention some of these connections briefly here, and elaborate on the similarities and differences from the BMD literature in Appendix \ref{app:intro:related_bmd}. 

We start with the i.i.d model, and show two results: a ``simple'' and polynomial-time $\frac{e-1}{e}$-approximate signaling scheme, and a polynomial-time  implementation of the optimal scheme. Both results hinge on a ``symmetry characterization'' of the optimal scheme in the i.i.d. setting, closely related  to the symmetrization result from BMD by~\cite{weinberg_symmetry} but with an important difference which we discuss in Appendix~\ref{app:intro:related_bmd}. Our ``simple'' scheme  decouples the signaling problem for the different actions and signals \emph{independently} for each. This result implies that  signaling in this setting can be ``distributed'' among multiple non-coordinating persuaders without much loss. 
Our optimal scheme  involves a connection to Border's characterization of the space of feasible reduced-form auctions \cite{Border91,Border07}, as well as its algorithmic properties~\cite{Cai12a, Saeed12}. This connection involves proving  a correspondence between ``symmetric'' signaling schemes and a subset of ``symmetric'' single-item auctions; one in which actions in persuasion correspond to bidders in an auction.  

Next, we consider Bayesian persuasion with independent non-identical actions. One might expect that the partial correspondence between signaling schemes and single-item auctions in the i.i.d. model generalizes here, in which case Border's theorem --- which extends to single-item auctions with independent non-identical bidders ---  would analogously  lead to polynomial time algorithm for persuasion in this setting. However, we surprisingly show that this analogy to single-item auctions ceases to hold for non-identical actions:  we prove that there is no \emph{generalized Border's theorem}, in the sense of \citet{Gopalan15}, for persuasion with independent actions. Specifically, we show that it is \#P-hard to exactly compute the expected sender utility for the optimal scheme, ruling out Border's-theorem-like approaches to this problem unless the polynomial hierarchy collapses. Our proof starts from the ideas of \cite{Gopalan15}, but our reduction is much more involved and goes through the membership problem for an implicit polytope which encodes a \#P-hard problem --- we elaborate on these differences in Appendix~\ref{app:intro:related_bmd}.  We note that whereas we do rule out computing an explicit representation of the optimal scheme which permits evaluating optimal sender utility, we do not rule out other approaches which might sample the optimal scheme ``on the fly'' in the style of Myerson's optimal auction \cite{Myerson81}--- we leave the intriguing question of whether this is possible as an open problem.

Finally, we consider the black-box model with general distributions, and prove essentially-matching positive and negative results. For our positive result, we exhibit fully polynomial-time approximation scheme (FPTAS) with a bicriteria guarantee. Specifically, our scheme loses an additive $\eps$ in both expected sender utility and incentive-compatibility (as defined in Section \ref{sec:prelim}),  and runs in time polynomial in the number of actions and $\frac{1}{\eps}$. Our negative results show that this is essentially the best possible for information-theoretic reasons: any polynomial-time scheme in the black box model which comes close to optimality must significantly sacrifice incentive compatibility, and vice versa. We note that our scheme is related to some prior work on BMD with black-box distributions \cite{Cai12b, Weinberg14}, but is significantly simpler and more efficient: instead of using the ellipsoid method to optimize over ``reduced forms'', our scheme simply solves a single linear program on a sample from the prior distribution on states of nature. Such simplicity is possible in our setting due to the different notion of incentive compatibility in persuasion, which reduces to incentive compatibility on the sample using the principle of deferred decisions. We elaborate on this connection in Appendix~\ref{app:intro:related_bmd}.

We remark that our results suggest that the differences between persuasion and auction design serve as a double-edged sword. This is evidenced by our negative result for independent model and our ``simple''  positive result for the black-box model.



\subsection{Additional Discussion of Related Work}
\label{sec:intro:related_models}
To our knowledge, \citet{Brocas07} were the first to explicitly consider persuasion through information control. They consider a sender with the ability to costlessly acquire information regarding the payoffs of the receiver's actions, with the stipulation that acquired information is available to both players.  This is technically equivalent to our (and Kamenica and Gentzkow's~\cite{Kamenica11})  informed sender who commits to a signaling scheme. Brocas and Carrillo restrict attention to a particular setting with  two states of nature and three actions,  and characterize optimal policies for the sender and their associated payoffs. %
Kamenica and Gentzkow's \cite{Kamenica11} Bayesian Persuasion model naturally generalizes \cite{Brocas07} to finite (or infinite yet compact) states of nature and action spaces. They establish a number of properties of optimal information structures in this model; most notably, they characterize settings in which signaling strictly benefits the sender in terms of the convexity/concavity of the sender's payoff as a function of the receiver's posterior belief. 
%
%
%

Since \cite{Brocas07} and \cite{Kamenica11}, an explosion of interest in persuasion problems followed. The basic Bayesian persuasion model underlies, or is closely related to, recent work in a number of different domains: price discrimination by \citet{Bergemann14}, advertising by \citet{Chakraborty12},   security games by \citet{Xu15} and \citet{Rabinovich15},  multi-armed bandits by \citet{crowds} and \citet{bandit_exploration},
medical research by \citet{kolotilin_experimental}, and financial regulation by \citet{stresstest1} and \citet{stresstest2}. Generalizations and variants of the Bayesian persuasion model have also been considered:  \citet{Gentzkow11} consider  multiple senders,  \citet{Alonso14} consider  multiple receivers in a voting setting,  \citet{Gentzkow14}  consider costly information acquisition,   \citet{Rayo10} consider an outside option for the receiver, and \citet{kolotilin_persuasion_private} considers a receiver with private side information. 

Optimal persuasion is a special case of \emph{information structure design} in games, also known as \emph{signaling}. The space of information structures, and their induced equilibria, are characterized by \citet{Bergemann14Bayes}.   Recent work in the CS community has also examined the design of information structures algorithmically. Work by \citet{Emek12}, \citet{Miltersen12}, \citet{Guo13}, and \citet{Dughmi14b}, examine optimal signaling in a variety of auction settings, and presents polynomial-time algorithms and hardness results.  \citet{Dughmi14a} exhibits hardness results for signaling in two-player zero-sum games, and \citet{mixture_selection} present an algorithmic framework and apply it to a number of different signaling problems.

Also related to the Bayesian persuasion model is the extensive literature on \emph{cheap talk} starting with  \citet{Crawford82}. Cheap talk can be viewed as the analogue of persuasion when the sender cannot commit to an information revelation policy. Nevertheless, the commitment assumption in persuasion has been justified on the grounds that it arises organically in repeated cheap talk interactions with a long horizon --- in particular when the sender must balance his short term payoffs with long-term credibility. We refer the reader to the discussion of this phenomenon in \cite{Rayo10}. Also to this point, \citet{Kamenica11} mention that an earlier model of repeated 2-player games with asymmetric information by \citet{Aumann95} is mathematically analogous to Bayesian persuasion.

Various recent models on \emph{selling information} in \cite{Babaioff12,Bergemann15a,Bergemann15b} are quite similar to Bayesian persuasion, with the main difference being that the sender's utility function is replaced with revenue. Whereas \citet{Babaioff12} consider the algorithmic question of selling information when states of nature are explicitly given as input, the analogous algorithmic questions to ours have not been considered in their model. We speculate that some of our algorithmic techniques might be applicable to models for selling information when the prior distribution on states of nature is represented succinctly.

As discussed previously, our results involve exercising and generalizing ideas from prior work in Bayesian mechanism design. We view drawing these connections as one of the contributions of our paper. In Appendix~\ref{app:intro:related_bmd}, we discuss these connections and differences at length.

\section{Preliminaries}
\label{sec:prelim}
\def\emd{\qopname\relax n{EMD}}

In a persuasion game, there are two players: a \emph{sender} and a  \emph{receiver}. The receiver is faced with selecting an action from $[n] = \set{1,\ldots,n}$, with an a-priori-unknown payoff to each of the sender and receiver. We assume payoffs  are a function of an unknown \emph{state of nature} $\theta$, drawn from an abstract set $\Theta$ of potential realizations of nature. Specifically, the sender and receiver's payoffs are functions $s,r : \Theta \cross [n] \to \RR$, respectively. We use $\bvec{r}=\bvec{r}(\theta) \in \RR^n$ to denote the receiver's payoff vector as a function of the state of nature, where $r_i(\theta)$ is the receiver's payoff if he takes action $i$ and the state of nature is  $\theta$. Similarly $\bvec{s}=\bvec{s}(\theta) \in \RR^n$ denotes the sender's payoff vector, and $s_i(\theta)$ is the sender's payoff if the receiver takes action $i$ and the state is $\theta$. Without loss of generality, we often conflate the abstract set $\Theta$ indexing states of nature with the set of realizable payoff vector pairs $(\bvec{s},\bvec{r})$ --- i.e., we think of  $\Theta$ as a subset of  $\RR^n \cross \RR^n$. 
We assume that $\Theta$ is finite for notational convenience, though this is not needed for our results in Section \ref{sec:general}.


In Bayesian persuasion, it is assumed that the state of nature is  a-priori unknown to the receiver, and drawn from a common-knowledge prior distribution $\lambda$ supported on $\Theta$. The sender, on the other hand, has access to the realization of $\theta$, and can commit to a policy of partially revealing information regarding its realization before the receiver selects his action. Specifically, the sender commits to a \emph{signaling scheme} $\varphi$, mapping (possibly randomly) states of nature $\Theta$ to a family of \emph{signals} $\Sigma$. For $\theta \in \Theta$, we use $\varphi(\theta)$ to denote the (possibly random) signal selected when the state of nature is $\theta$. Moreover, we use $\varphi(\theta,\sigma)$ to denote the probability of selecting the signal $\sigma$ given a state of nature $\theta$. An algorithm \emph{implements} a signaling scheme $\varphi$ if it takes as input a state of nature $\theta$, and samples the random variable $\varphi(\theta)$.

Given a signaling scheme $\varphi$ with signals $\Sigma$, each signal $\sigma \in \Sigma$ is realized with probability $\alpha_\sigma = \sum_{\theta \in \Theta} \lambda_\theta \varphi(\theta,\sigma)$. Conditioned on the signal $\sigma$, the expected payoffs to the receiver of the various actions are summarized by the vector $\bvec{r}(\sigma) =\frac{1}{\alpha_\sigma} \sum_{\theta \in \Theta} \lambda_\theta \varphi(\theta,\sigma) \bvec{r}(\theta)$. Similarly, the sender's payoff as a function of the receiver's action are summarized by $\bvec{s}(\sigma) =\frac{1}{\alpha_\sigma} \sum_{\theta \in \Theta} \lambda_\theta \varphi(\theta,\sigma) \bvec{s}(\theta)$. 
On receiving a signal $\sigma$, the receiver  performs a Bayesian update and selects an action $i^*(\sigma) \in \argmax_i r_i(\sigma)$ with expected receiver utility $\max_i r_i(\sigma)$. This induces utility $s_{i^*(\sigma)}(\sigma)$ for the sender.   In the event of ties when selecting $i^*(\sigma)$, we assume those ties are broken in favor of the sender.

We adopt the perspective of a sender looking to design $\varphi$  to maximize her expected utility $\sum_{\sigma} \alpha_\sigma s_{i^*(\sigma)}(\sigma)$, in which case we say $\varphi$ is  \emph{optimal}.  When $\varphi$  yields expected sender utility within an additive [multiplicative] $\eps$ of the best possible, we say it is \emph{$\eps$-optimal} [$\eps$-approximate]  in the additive [multiplicative] sense.  
A simple revelation-principle style argument \cite{Kamenica11} shows that an optimal signaling scheme need not use more than $n$ signals, with one \emph{recommending} each action. Such a \emph{direct} scheme $\varphi$ has signals  $\Sigma=\set{\sigma_1, \ldots, \sigma_n}$, and satisfies $r_i(\sigma_i) \geq r_j(\sigma_i)$ for all $i,j \in [n]$. 
We think of $\sigma_i$ as a signal recommending action $i$, and the requirement $r_i(\sigma_i) \geq \max_j r_j(\sigma_i)$ as an \emph{incentive-compatibility (IC)} constraint on our signaling scheme. We can now write the sender's optimization problem as the following LP with variables $\set{\varphi(\theta,\sigma_i) : \theta \in \Theta, i \in [n] }$.
\begin{lp}\label{lp:persuasion}
	\maxi{\sum_{\theta \in \Theta} \sum_{i=1}^n \lambda_\theta \varphi(\theta,\sigma_i) s_i(\theta)}
	\st
	\qcon{\sum_{i=1}^n \varphi(\theta,\sigma_i) = 1}{\theta \in \Theta}
	\qcon{\sum_{\theta \in \Theta} \lambda_\theta \varphi(\theta,\sigma_i) r_i(\theta) \geq \sum_{\theta \in \Theta} \lambda_\theta \varphi(\theta,\sigma_i) r_j(\theta)}{i,j \in [n]}
	\qcon{\varphi(\theta,\sigma_i) \geq 0}{\theta \in \Theta, i \in [n]}
\end{lp}%
%
%
For our results in Section \ref{sec:general}, we relax our incentive constraints by assuming that the receiver follows the recommendation  so long as it approximately maximizes his utility --- for a parameter $\eps > 0$, we relax our requirement to  $r_i(\sigma_i) \geq \max_j r_j(\sigma_i) - \eps$, which translates to the relaxed IC constraints    $\sum_{\theta \in \Theta} \lambda_\theta \varphi(\theta,\sigma_i) r_i(\theta) \geq \sum_{\theta \in \Theta} \lambda_\theta \varphi(\theta,\sigma_i) (r_j(\theta)- \eps)$ in LP \eqref{lp:persuasion}. We call such  schemes \emph{$\eps$-incentive compatible ($\eps$-IC)}. We judge the suboptimality of an $\eps$-IC scheme relative to the best (absolutely) IC scheme; i.e., in a bi-criteria sense.

Finally, we note that expected utilities, incentive compatibility, and optimality are properties not only of a signaling scheme $\varphi$, but also of the distribution $\lambda$ over its inputs. When  $\lambda$ is not clear from context and $\varphi$ is supported on a superset of  $\lambda$, we often say that a signaling scheme $\varphi$ is IC [$\eps$-IC] for $\lambda$, or optimal [$\eps$-optimal] for $\lambda$. We also use $u_{s}(\varphi,\lambda)$ to denote the expected sender utility $\sum_{\theta \in \Theta} \sum_{i=1}^n \lambda_\theta \varphi(\theta,\sigma_i) s_i(\theta)$.

\section{Persuasion with I.I.D. Actions}
\label{sec:iid}
In this section, we assume the payoffs of different actions are independently and identically distributed (i.i.d.) according to an explicitly-described marginal distribution. Specifically,  each state of nature $\theta$ is a vector in $\Theta=[m]^n$ for a parameter $m$, where $\theta_i \in [m]$ is the \emph{type} of action $i$. Associated with each type $j\in [m]$ is a pair $(\xi_j,\rho_j) \in \RR^2$, where $\xi_j$ [$\rho_j$] is the payoff to the sender [receiver] when the receiver chooses an action with type $j$.  We are given a marginal distribution over types, described by a vector $\bvec{q}=(q_1,...,q_m) \in \Delta_m$. We assume each action's type is drawn independently according to $\bvec{q}$; specifically, the prior distribution $\lambda$ on states of nature is given by $\lambda(\theta) = \prod_{i\in[n]}q_{\theta_{i}}$. For convenience, we let $\bvec{\xi}=(\xi_{1},...,\xi_{m}) \in \mathbb{R}^m$ and $\bvec{\rho}=(\rho_{1},...,\rho_{m}) \in \mathbb{R}^m$ denote the type-indexed vectors of sender and receiver payoffs, respectively.    We assume $\bvec{\xi}$, $\bvec{\rho}$, and $\bvec{q}$ --- the parameters describing an i.i.d. persuasion instance --- are given explicitly. 

Note that the number of states of nature is $m^n$, and therefore the natural representation of a signaling scheme has $n m^n$ variables. Moreover, the  natural linear program for the persuasion problem in Section~\ref{sec:prelim} has an exponential in $n$ number of both variables and constraints. Nevertheless, as mentioned in Section \ref{sec:prelim} we seek only to implement an optimal or near-optimal scheme $\varphi$ as an oracle which takes as input $\theta$ and samples a signal $\sigma \sim \varphi(\theta)$. Our algorithms  will run in time polynomial in $n$ and $m$, and will optimize over a space of succinct  ``reduced forms'' for signaling schemes which we term \emph{signatures}, to be described next.

For a state of nature $\theta$, define the matrix $M^\theta \in \set{0,1}^{n \times m}$ so that $M^\theta_{ij} =1$ if and only if action $i$ has type $j$ in $\theta$ (i.e. $\theta_i=j$). Given an i.i.d~prior $\lambda$ and  a signaling scheme $\varphi$  with signals $\Sigma =\set{\sigma_1,\ldots,\sigma_n}$, for each $i \in [n]$ let $\alpha_i=  \sum_{\theta}\lambda(\theta)\varphi(\theta,\sigma_i)$ denote the probability of sending $\sigma_i$, and let $M^{\sigma_i} =   \sum_{\theta}\lambda(\theta)\varphi(\theta,\sigma_i)M^{\theta}$.  Note that $M^{\sigma_i}_{jk}$ is the joint probability that action $j$ has type $k$ and the scheme outputs $\sigma_i$. Also note that each row of $M^{\sigma_i}$ sums to $\alpha_i$, and the $j$th row represents the un-normalized posterior type distribution of action $j$  given signal $\sigma_i$. We call $\mathcal{M}=(M^{\sigma_1},...,M^{\sigma_n}) \in \mathbb{R}^{n\times m \times n}$ the {\it signature} of  $\varphi$. The sender's objective and receiver's IC constraints can both be expressed in terms of the signature. In particular, using $M_j$ to denote the $j$th row of a matrix $M$, the IC constraints are $\bvec{\rho} \cdot M^{\sigma_i}_{i} \geq \bvec{\rho} \cdot M^{\sigma_i}_{j}$ for all $i,j\in[n]$, and the sender's expected utility assuming the receiver follows the scheme's recommendations is $\sum_{i\in[n]}\bvec{\xi} \cdot M^{\sigma_i}_{i}$.   

We say  $\mathcal{M}=(M^{\sigma_1},...,M^{\sigma_n}) \in \mathbb{R}^{n\times m \times n}$ is  \emph{realizable} if there exists a signaling scheme $\varphi$ with $\mathcal{M}$ as its signature. Realizable signatures constitutes a polytope $\mathcal{P} \subseteq \mathbb{R}^{n\times m \times n}$, which has an exponential-sized extended formulation as shown Figure \ref{fig:extended}. 
Given this characterization, the sender's optimization problem can be written as a linear program in the space of signatures, shown in Figure \ref{fig:optiid}:


\begin{figure}
	\centering
	\begin{minipage}{.5\textwidth}
		\centering
		\begin{lp*}
			\qcon{M^{\sigma_i}=\sum_{\theta}\lambda(\theta)\varphi(\theta,\sigma_i)M^{\theta} }{i=1,\ldots,n}
			\qcon{\sum_{i =1}^{n} \varphi(\theta,\sigma_i) = 1 }{\theta \in \Theta}
			\qcon{\varphi(\theta,\sigma_i) \geq 0}{\theta \in \Theta, i \in [n]}
		\end{lp*}%
		\vspace{-14pt}
		\captionof{figure}{Realizable Signatures $\P$}
		\label{fig:extended}
	\end{minipage}%
	\begin{minipage}{.5\textwidth}
		\centering
		\begin{lp*}
			\maxis{\sum_{i = 1}^{n} \bvec{\xi} \cdot M_{i}^{\sigma_i} }   
			\sts
			\qcon{ \bvec{\rho} \cdot M_{i}^{\sigma_i}\geq  \bvec{\rho} \cdot M_{j}^{\sigma_i} }{ i,j \in [n]}
			\con{ (M^{\sigma_1},...,M^{\sigma_n}) \in \mathcal{P} }%
		\end{lp*}
		\vspace{-14pt}
		\captionof{figure}{Persuasion in Signature Space}
		\label{fig:optiid}
	\end{minipage}%
\end{figure}

\subsection{Symmetry of the Optimal Signaling Scheme}
\label{sec:iid:symmetry}

We now show that there always exists a ``symmetric" optimal  scheme when actions are i.i.d. Given a signature $\mathcal{M}=(M^{\sigma_1},...,M^{\sigma_n})$, it will  sometimes be convenient to  think of it as the set of pairs $\{(M^{\sigma_i},\sigma_i)\}_{i\in[n]}$. 

\begin{definition}\label{def:symmetry}
	A signaling scheme $\varphi$ with signature     
	$\{ (M^{\sigma_i},\sigma_i)\}_{i \in [n]}$  is \emph{symmetric} if there exist $\bvec{x},\bvec{y}\in \RR^{m}$ 
	such that $M^{\sigma_i}_{i}=\bvec{x}$ for all $i\in [n]$ and $M^{\sigma_i}_{j}=\bvec{y}$ for all $j \not = i$. The pair $(\bvec{x},\bvec{y})$ is  the \emph{$s$-signature} of $\varphi$.
\end{definition}
In other words, a symmetric signaling scheme sends each signal with equal probability $||\bvec{x}||_1$, and induces only two different posterior type distributions  for actions: $\frac{\bvec{x}}{||\bvec{x}||_1}$ for the recommended action, and $\frac{\bvec{y}}{||\bvec{y}||_1}$ for  the others. We call $(\bvec{x},\bvec{y})$ \emph{realizable} if there exists a signaling scheme with $(\bvec{x},\bvec{y})$ as its $s$-signature. The family of realizable $s$-signatures  constitutes a polytope $\mathcal{P}_s$, and has an extended formulation by adding the variables $\bvec{x},\bvec{y} \in \RR^m$ and constraints $M^{\sigma_i}_i = \bvec{x}$ and $M^{\sigma_i}_j = \bvec{y}$ for all $i,j \in [n]$ with $i \neq j$ to the extended formulation of (asymmetric) realizable signatures from Figure \ref{fig:extended}.

We make two simple observations regarding realizable $s$-signatures. First,   $||\bvec{x}||_{1}=||\bvec{y}||_{1}=\frac{1}{n}$ for each $(\bvec{x},\bvec{y})\in \mathcal{P}_s$, and this is because both $||\bvec{x}||_1$ and $||\bvec{y}||_1$ equal the probability of each of the $n$ signals. Second, since the signature must be consistent with prior marginal distribution $\bvec{q}$,  we have  $\bvec{x} + (n-1) \bvec{y} = \sum_{i=1}^n M^{\sigma_i}_{1}=\bvec{q}$. We show that restricting to symmetric signaling schemes is without loss of generality.
\begin{theorem}
	\label{thm:symmetry}When the action payoffs are i.i.d., there exists an optimal and incentive-compatible signaling
	scheme which is symmetric.
\end{theorem}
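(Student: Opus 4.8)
The plan is to start from an arbitrary optimal and incentive-compatible scheme $\varphi$ with signature $\mathcal{M} = (M^{\sigma_1},\ldots,M^{\sigma_n})$, and ``average it out'' over two different symmetry groups acting on the instance: the group $S_n$ permuting actions, and the group $S_n$ permuting signals. The key structural fact is that the i.i.d.\ prior $\lambda$ is invariant under permuting the action coordinates of $\theta$, and the LP in Figure~\ref{fig:optiid} (objective $\sum_i \bvec{\xi}\cdot M^{\sigma_i}_i$, IC constraints $\bvec{\rho}\cdot M^{\sigma_i}_i \ge \bvec{\rho}\cdot M^{\sigma_i}_j$, feasibility in the polytope $\P$) is likewise symmetric under the simultaneous action of a permutation $\pi \in S_n$ on action-indices and signal-indices. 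Concretely, given $\pi \in S_n$, I would define a new scheme $\varphi^\pi$ by relabeling: on input $\theta$, apply $\pi^{-1}$ to the action coordinates, run $\varphi$, and relabel the output signal $\sigma_j \mapsto \sigma_{\pi(j)}$. Because $\lambda$ is permutation-invariant, $\varphi^\pi$ is a valid scheme with signature obtained from $\mathcal{M}$ by the corresponding coordinate permutation, it is still IC, and it achieves the same sender utility. Since the set of realizable signatures $\P$ is convex (it is a polytope with the extended formulation of Figure~\ref{fig:extended}), the average signature $\bar{\mathcal{M}} = \frac{1}{n!}\sum_{\pi \in S_n} \mathcal{M}^\pi$ is realizable, is IC (IC constraints are linear, hence preserved under averaging), and has the same objective value (the objective is linear) — so it is optimal.

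Next I would check that the averaged signature $\bar{\mathcal{M}}$ is in fact symmetric in the sense of Definition~\ref{def:symmetry}. After averaging over all of $S_n$ acting simultaneously on actions and signals, the diagonal blocks $\bar M^{\sigma_i}_i$ must all be equal to a common vector $\bvec{x} \in \RR^m$ (any permutation can map the pair $(i,i)$ to any pair $(i',i')$), and the off-diagonal entries $\bar M^{\sigma_i}_j$ for $j \ne i$ must all equal a common vector $\bvec{y}\in\RR^m$ (any ordered pair of distinct indices maps to any other ordered pair of distinct indices). The two observations already recorded before the theorem statement — $\|\bvec{x}\|_1 = \|\bvec{y}\|_1 = \tfrac1n$ and $\bvec{x} + (n-1)\bvec{y} = \bvec{q}$ — follow automatically and are consistency checks rather than things that need separate proof. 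Concluding, $\bar{\mathcal{M}}$ is a realizable, symmetric, IC signature with optimal sender utility, so the corresponding scheme is the desired symmetric optimal IC scheme.

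The one point requiring a little care — and which I'd expect to be the main technical obstacle — is verifying that $\varphi^\pi$ is genuinely realizable with the claimed (permuted) signature, i.e.\ that relabeling the input action-coordinates by $\pi$ and the output signal-labels by $\pi$ really does transform the signature $\mathcal M$ into its coordinate-permuted version. This hinges precisely on the i.i.d.\ assumption: $\lambda(\theta) = \prod_i q_{\theta_i}$ is unchanged when the coordinates of $\theta$ are permuted, so the measure-weighted sums defining $M^{\sigma_i} = \sum_\theta \lambda(\theta)\varphi(\theta,\sigma_i) M^\theta$ behave correctly under the relabeling. (This is also where the contrast with the non-identical case in the next section lives — there the prior is not symmetric and this step breaks.) Everything else — convexity of $\P$, linearity of the objective and IC constraints — is routine, so the bulk of the write-up is just making the relabeling bijection and its effect on signatures precise.
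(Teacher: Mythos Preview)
Your proposal is correct and follows essentially the same route as the paper: define the permuted scheme $\varphi_\pi(\theta)=\pi(\varphi(\pi^{-1}(\theta)))$, use the i.i.d.\ invariance of $\lambda$ to show it is realizable, IC, and optimal with the permuted signature, then average over $S_n$ and observe that the diagonal and off-diagonal rows collapse to common vectors $\bvec{x},\bvec{y}$. The only imprecision is your opening line about ``two different symmetry groups'' --- as you yourself make clear two sentences later, it is a \emph{single} copy of $S_n$ acting diagonally on action-indices and signal-indices, not two independent group actions; the paper's explicit computation of $A^{\sigma_i}_i$ and $A^{\sigma_i}_j$ is exactly your orbit argument written out.
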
 
Theorem~\ref{thm:symmetry} is proved in Appendix~\ref{app:iid:symmetry}. At a high level, we show that optimal signaling schemes are closed with respect to two operations:  \emph{convex combination} and \emph{permutation}. Specifically, a convex combination of realizable signatures --- viewed as vectors in $\RR^{n \times m \times n}$ --- is realized by the corresponding ``random mixture'' of signaling schemes, and this operation preserves optimality. The proof of this fact follows easily from the fact  that linear program in Figure \ref{fig:optiid} has a convex family of optimal solutions. Moreover, given a permutation $\pi \in \SS_n$ and an optimal signature $\M = \{(M^{\sigma_i},\sigma_i)\}_{i\in[n]}$ realized by signaling scheme $\varphi$, the ``permuted'' signature $\pi ( \M) = \{(\pi M^{\sigma_i},\sigma_{\pi(i)})\}_{i\in[n]}$ --- where premultiplication of a matrix by $\pi$ denotes permuting the rows of the matrix ---  is realized by the ``permuted'' scheme $\varphi_{\pi}(\theta) = \pi (\varphi(\pi^{-1} (\theta)))$, which is also optimal. The proof of this fact follows from the ``symmetry'' of the (i.i.d.) prior distribution about the different actions.  Theorem~\ref{thm:symmetry}  is then proved constructively as follows: given a realizable optimal signature $\M$, the ``symmetrized'' signature $\bar{\M}=\frac{1}{n!} \sum_{\pi \in \SS_n} \pi(\M)$ is realizable, optimal, and symmetric.

\def\high{\qopname\relax n{\mathbf{HIGH}}}
\def\low{\qopname\relax n{\mathbf{LOW}}}

\subsection{Implementing the Optimal Signaling Scheme}
\label{sec:iid:optimal}
We now exhibit a polynomial-time algorithm for persuasion in the i.i.d.~model. Theorem \ref{thm:symmetry} 
permits re-writing the optimization problem in Figure \ref{fig:optiid} as follows, with variables $\bvec{x},\bvec{y}\in \mathbb{R}^{m}$.
\begin{lp}\label{lp:iid:opt}	
	\maxi{n\bvec{\xi} \cdot \bvec{x} }   
	\st
	\con{ \bvec{\rho} \cdot \bvec{x} \geq \bvec{\rho} \cdot \bvec{y} }
	\con{ (\bvec{x},\bvec{y})\in\mathcal{P}_{s} }
\end{lp}%
Problem \eqref{lp:iid:opt}  cannot be solved directly, since $\P_s$ is defined by an extended formulation with exponentially many variables and constraints, as described in Section~\ref{sec:iid:symmetry}.  Nevertheless, we make use of a 
connection between symmetric signaling schemes and single-item
auctions with i.i.d.~bidders to solve \eqref{lp:iid:opt} using the Ellipsoid method. Specifically, we show  a one-to-one correspondence between symmetric signatures  and (a subset of)  symmetric reduced forms of single-item auctions with i.i.d.~bidders,  defined as follows.
\begin{definition}[\cite{Border91}]
	Consider a single-item auction setting with $n$ i.i.d. bidders and $m$ types for each bidder, where each bidder's type is distributed according to $\bvec{q} \in \Delta_m$. An  \emph{allocation rule} is a randomized function $A$ mapping a type profile $\theta \in [m]^n$ to a winner $A(\theta) \in [n] \union \set{*}$, where $*$ denotes not allocating the item. 
	We say the allocation rule has \emph{symmetric reduced form}  $\bvec{\tau}\in [0,1]^m$ if for each bidder $i \in [n]$ and type $j \in [m]$,  $\tau_j$ is the conditional probability of $i$ receiving the item given she has type $j$.
\end{definition}
\noindent When $\bvec{q}$ is clear from context, we say $\bvec{\tau}$ is \emph{realizable} if there exists an allocation rule with $\bvec{\tau}$ as its symmetric reduced form. We say an algorithm \emph{implements} an allocation rule $A$ if it takes as input $\theta$, and samples $A(\theta)$. 
\begin{theorem}
	\label{thm:i.i.d. opt} Consider the Bayesian Persuasion problem with $n$ i.i.d. actions and $m$ types, with parameters $\bvec{q} \in \Delta_m$, $\bvec{\xi} \in \RR^m$, and $\bvec{\rho} \in \RR^m$ given explicitly. An optimal and incentive-compatible signaling scheme can be implemented in $\poly(m,n)$ time.
\end{theorem}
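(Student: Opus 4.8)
The plan is to solve LP~\eqref{lp:iid:opt} via the Ellipsoid method by exhibiting a polynomial-time separation oracle for the polytope $\P_s$ of realizable $s$-signatures, and then to convert an optimal $s$-signature into an implementation of the corresponding symmetric signaling scheme. The key conceptual step is the promised one-to-one correspondence between symmetric signatures $(\bvec{x},\bvec{y})$ and (a subset of) symmetric reduced forms $\bvec\tau$ of single-item auctions with $n$ i.i.d.\ bidders and type distribution $\bvec q$. Concretely, I would map an allocation rule $A$ to a symmetric signaling scheme by interpreting ``bidder $i$ wins'' as ``action $i$ is recommended'': given state of nature $\theta \in [m]^n$, run $A(\theta)$; if $A(\theta)=i \in [n]$ send $\sigma_i$, and if $A(\theta)=*$ send a signal chosen uniformly at random among $\sigma_1,\dots,\sigma_n$ (this ``leftover mass'' step is what makes every symmetric reduced form correspond to a valid scheme, since a scheme must always send \emph{some} signal). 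Under this map, $\tau_j$ — the probability action $i$ is recommended given $\theta_i=j$ — determines the recommended-action posterior $\bvec x$, while the non-recommended posterior $\bvec y$ is pinned down by the marginal-consistency identity $\bvec x + (n-1)\bvec y = \bvec q$ noted before Theorem~\ref{thm:symmetry}. I would verify that this correspondence is a bijection onto exactly the set of $(\bvec x,\bvec y)\in\P_s$, translating membership/separation for $\P_s$ into membership/separation for the Border polytope of realizable symmetric reduced forms.

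Given this, I would invoke Border's theorem together with its known algorithmic form~\cite{Border91,Border07,Cai12a,Saeed12}: the polytope of realizable symmetric reduced forms for i.i.d.\ single-item auctions admits a polynomial-time separation oracle, and moreover any realizable $\bvec\tau$ can be decomposed in polynomial time into a distribution over ``greedy/ordinal'' allocation rules that can be sampled efficiently. Pulling this back through the correspondence gives a polynomial-time separation oracle for $\P_s$, so the Ellipsoid method solves LP~\eqref{lp:iid:opt} — which has only $2m$ variables and one explicit IC constraint besides the $\P_s$ constraint — in $\poly(m,n)$ time, yielding an optimal $s$-signature $(\bvec x^\star,\bvec y^\star)$. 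Then I would run the algorithmic Border decomposition on the auction side to obtain an efficiently-samplable allocation rule realizing $\bvec\tau^\star$, and use the map above (including the uniform-random signal on the $*$ event) to implement the optimal symmetric scheme: on input $\theta$, sample the allocation, output the corresponding recommendation signal. Correctness and incentive-compatibility follow because the resulting scheme has exactly $s$-signature $(\bvec x^\star,\bvec y^\star)$, which is feasible for LP~\eqref{lp:iid:opt} and hence IC, and optimal among all schemes by Theorem~\ref{thm:symmetry}.

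The main obstacle I anticipate is establishing the correspondence cleanly and in the right direction: one must show not just that every allocation rule yields a realizable $s$-signature, but that \emph{every} realizable $s$-signature arises this way (so that optimizing over Border's polytope is optimizing over all of $\P_s$, with no loss), and that the map is computationally effective in both directions. The subtlety is bookkeeping the ``$*$'' / leftover-probability mass: an auction need not allocate, whereas a signaling scheme must always emit a signal, and a symmetric scheme must emit each of the $n$ signals with probability exactly $1/n$; I need to check that spreading the non-allocation mass uniformly is exactly the right adjustment, and that it is consistent with the posterior constraints (equivalently, that $\|\bvec x\|_1 = 1/n$ forces the reduced form to satisfy $\bvec q \cdot \bvec\tau = $ the right normalization). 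A secondary point is making sure the algorithmic Border results quoted apply to the symmetric i.i.d.\ case in the form needed (separation \emph{and} efficient sampling of a decomposition), and that the Ellipsoid method's guarantees transfer through the bijection without blowing up bit-complexity — but these are by now standard, so I expect the real work to be in the correspondence lemma.
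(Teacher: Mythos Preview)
Your proposal is correct and follows essentially the same route as the paper: reduce to Border's characterization via the map $\tau_j = x_j/q_j$, solve LP~\eqref{lp:iid:opt} by Ellipsoid using the known separation oracle for symmetric reduced forms, and then implement the scheme by invoking the algorithmic Border results of~\cite{Cai12a,Saeed12} on the auction side.

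Two small points where the paper's bookkeeping differs from yours and which resolve the ``obstacle'' you anticipate. First, the $*$ case never arises: the constraint $\|\bvec{x}\|_1 = 1/n$ forces $\sum_j q_j \tau_j = 1/n$ for each bidder, so the allocation rule allocates with total probability exactly $1$ and your uniform-spreading step is unnecessary (and would in fact muddy the correspondence if $*$ did occur, since $\Pr[A(\theta)=* \mid \theta_i=j]$ can depend on $j$). Second, even when $A$ always allocates and has symmetric reduced form $\bvec\tau$, the induced scheme $\varphi_A(\theta)=\sigma_{A(\theta)}$ need not be \emph{fully} symmetric in the sense of Definition~\ref{def:symmetry}: the diagonal rows $M^{\sigma_i}_i$ are automatically $\bvec x$, but the off-diagonal rows $M^{\sigma_i}_j$ for $j\neq i$ need not all coincide. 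The paper handles this by post-symmetrizing $\varphi_A$ (draw a random permutation $\pi$, output $\pi(\varphi_A(\pi^{-1}(\theta)))$), after which the identity $\bvec x + (n-1)\bvec y = \bvec q$ does pin down $\bvec y$ as you say.
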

Theorem \ref{thm:i.i.d. opt} is a consequence of the following set of lemmas.
\begin{lemma}\label{lem:reduced_form}
	Let $(\bvec{x},\bvec{y}) \in [0,1]^m \cross [0,1]^m$, and define $\bvec{\tau}=(\frac{x_1}{q_1},...,\frac{x_m}{q_m})$.  The pair $(\bvec{x},\bvec{y})$ is a realizable $s$-signature if and only if  (a) $||\bvec{x}||_1 = \frac{1}{n}$,  (b) $\bvec{x}+(n-1)\bvec{y}=\bvec{q}$, and (c) $\bvec{\tau}$ is a realizable symmetric reduced form of an allocation rule with $n$ i.i.d. bidders, $m$ types, and type distribution $\bvec{q}$. Moreover, assuming $\bvec{x}$ and $\bvec{y}$ satisfy (a), (b) and (c), and given black-box access to an allocation rule $A$ with symmetric reduced form $\bvec{\tau}$, a signaling scheme with $s$-signature $(\bvec{x},\bvec{y})$ can be implemented in $\poly(n,m)$ time.
\end{lemma}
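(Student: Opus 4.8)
My plan is to establish the biconditional by unwinding the definitions on both sides and exhibiting an explicit correspondence between signaling schemes and allocation rules that preserves the relevant ``marginals.'' For the forward direction, suppose $(\bvec{x},\bvec{y})$ is a realizable $s$-signature, witnessed by some symmetric signaling scheme $\varphi$. Conditions (a) and (b) are exactly the two elementary observations recorded just before Theorem~\ref{thm:symmetry}: $\|\bvec{x}\|_1 = \|\bvec{y}\|_1 = \tfrac1n$ since $\|\bvec{x}\|_1$ is the probability of each of the $n$ signals, and $\bvec{x} + (n-1)\bvec{y} = \bvec{q}$ since the signature must be consistent with the prior marginal $\bvec{q}$. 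For (c), I would define the allocation rule $A$ that, on input type profile $\theta \in [m]^n$, runs $\varphi(\theta)$ to obtain a signal $\sigma_i$ and allocates the item to bidder $i$ (never outputting $*$). Then the probability that bidder $i$ receives the item \emph{and} has type $j$ is precisely $M^{\sigma_i}_{i,j} = x_j$ by symmetry; dividing by $\Pr[\theta_i = j] = q_j$ gives that the symmetric reduced form of $A$ is exactly $\bvec{\tau} = (x_1/q_1,\ldots,x_m/q_m)$, so $\bvec{\tau}$ is realizable.

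For the converse, assume (a), (b), (c) hold, and let $A$ be an allocation rule with symmetric reduced form $\bvec{\tau}$. I would define a signaling scheme $\varphi$ by: on input $\theta$, sample $A(\theta)$; if $A(\theta) = i \in [n]$, output $\sigma_i$; if $A(\theta) = *$, output a signal $\sigma_i$ with $i$ chosen uniformly at random from $[n]$ (this ``fills in'' the unallocated mass symmetrically, which is needed so that $\sum_i x_j^{(i)}$ recovers all of $q_j$). I then need to verify that $\varphi$ has $s$-signature $(\bvec{x},\bvec{y})$. The recommended-action row: $M^{\sigma_i}_i$ has $j$-th entry equal to $\Pr[A \text{ allocates to } i,\ \theta_i = j] = q_j \tau_j = x_j$ — the uniform-random fallback contributes nothing here because conditioning on $\theta_i = j$ does not change the probability that a *-outcome is redirected to $i$ versus some other bidder, but more carefully I should check that by symmetry of $A$ and of $\bvec{q}$ the fallback redistribution doesn't leak type information into the recommended row; in fact the cleanest argument is that $A$ itself can be taken symmetric (averaging $A$ over all coordinate permutations of the profile preserves its symmetric reduced form), and then the redirection of $*$-mass is type-independent by symmetry, so the recommended row is exactly $q_j\tau_j = x_j$. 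The non-recommended rows $M^{\sigma_i}_j$ for $j \neq i$ are then all equal by symmetry of $A$ and $\bvec{q}$, and since $\sum_i M^{\sigma_i}_1 = \bvec{q}$ we get $\bvec{x} + (n-1) M^{\sigma_1}_2 = \bvec{q}$, forcing $M^{\sigma_i}_j = \bvec{y}$ by (b). Finally, $\|\bvec{x}\|_1 = \tfrac1n$ from (a) guarantees each signal is sent with probability $\tfrac1n$, consistent with being a valid scheme.

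For the ``moreover'' (implementation) claim, the construction above is already algorithmic: given black-box access to $A$, the scheme $\varphi$ just calls $A(\theta)$ once and performs an $O(\log n)$-cost symmetrization/redirection step, so it runs in $\poly(n,m)$ time per invocation. The one subtlety is making $A$ symmetric on the fly: rather than averaging explicitly, on input $\theta$ I would draw a uniformly random permutation $\pi \in \SS_n$, run $A(\pi^{-1}\theta)$, and apply $\pi$ to the winner — this samples the symmetrized allocation rule in $\poly(n)$ time using only one black-box call.

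\medskip
\noindent\textbf{Main obstacle.} The delicate point is the handling of the $*$ (no-allocation) outcome and the verification that the fallback redistribution does not corrupt the recommended row's posterior — i.e., proving rigorously that one may assume $A$ is symmetric and that under symmetry the redirected $*$-mass is independent of the winner's type. Everything else (conditions (a), (b), and the reduced-form computation) is essentially bookkeeping with Bayes' rule and the definitions, but getting the symmetrization argument airtight — especially reconciling $\|\bvec{\tau}\|$ not necessarily summing to $n$ (since $A$ may leave the item unallocated) with the requirement $\bvec{x} + (n-1)\bvec{y} = \bvec{q}$ — is where the real content lies.
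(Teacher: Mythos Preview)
Your approach mirrors the paper's: the forward direction is essentially identical, and for the converse you use the same construction (signal $\sigma_{A(\theta)}$, then symmetrize via a uniformly random permutation). The one substantive difference is your handling of the no-allocation outcome $*$, which you flag as the ``main obstacle.''

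That obstacle is illusory. Condition~(a) gives $\sum_j q_j \tau_j = \sum_j x_j = \|\bvec{x}\|_1 = \tfrac{1}{n}$, so each bidder wins with probability exactly $\tfrac{1}{n}$; summing over the $n$ bidders shows the item is allocated with probability $1$, i.e.\ $A(\theta)\neq *$ almost surely. This is precisely the observation the paper makes at the start of the ``if'' direction, and it removes the issue entirely. Your symmetry justification for why the uniform fallback ``does not corrupt the recommended row'' is not correct as written: even if redirection is type-independent, it would still contribute $\tfrac{1}{n}\Pr[A{=}*,\ \theta_i{=}j]$ to $M^{\sigma_i}_{i,j}$, pushing it strictly above $q_j\tau_j = x_j$ whenever $\Pr[A{=}*]>0$. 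So your construction is correct only because that probability is zero --- a fact you never establish, and which is exactly what condition~(a) delivers. Once this is in place, the remainder of your argument (symmetrize $A$ on the fly via a random permutation, then read off the off-diagonal rows from~(b)) is fine and matches the paper.
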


\begin{lemma}\label{lem:iid_solve}
	An optimal realizable $s$-signature, as described by LP  \eqref{lp:iid:opt}, is computable in $\poly(n,m)$ time. 
\end{lemma}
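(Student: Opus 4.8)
The plan is to solve LP \eqref{lp:iid:opt} via the Ellipsoid method, using Lemma~\ref{lem:reduced_form} to replace the intractable polytope $\P_s$ by an equivalent description in terms of realizable symmetric reduced forms, for which a polynomial-time separation oracle is already known from the auction-theory literature. First I would use Lemma~\ref{lem:reduced_form} to rewrite the feasible region of \eqref{lp:iid:opt}: a pair $(\bvec{x},\bvec{y})$ lies in $\P_s$ if and only if (a) $\|\bvec{x}\|_1 = \tfrac1n$, (b) $\bvec{x} + (n-1)\bvec{y} = \bvec{q}$, and (c) the vector $\bvec{\tau} = (x_1/q_1,\ldots,x_m/q_m)$ is a realizable symmetric reduced form for $n$ i.i.d.\ bidders with type distribution $\bvec{q}$. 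Constraints (a) and (b) are explicit linear equalities, and $\bvec{y}$ is determined by $\bvec{x}$ through (b); so after eliminating $\bvec{y}$, the only nontrivial constraint is that $\bvec{\tau}(\bvec{x})$ belongs to the symmetric-reduced-form polytope. Thus \eqref{lp:iid:opt} becomes: maximize the linear objective $n\bvec{\xi}\cdot\bvec{x}$ subject to the single IC inequality $\bvec{\rho}\cdot\bvec{x} \geq \bvec{\rho}\cdot\bvec{y}$ (now linear in $\bvec{x}$), the equalities (a)/(b), and membership of $\bvec{\tau}(\bvec{x})$ in Border's polytope.

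Next I would invoke the known separation/membership results for symmetric reduced forms of single-item auctions with i.i.d.\ bidders --- Border's characterization \cite{Border91,Border07} together with its algorithmic form \cite{Cai12a,Saeed12} --- which provide a polynomial-time (in $n$ and $m$) separation oracle for the set of realizable $\bvec{\tau}$. Composing this with the affine map $\bvec{x}\mapsto\bvec{\tau}(\bvec{x}) = (x_1/q_1,\ldots,x_m/q_m)$ and adding the trivially-checkable linear constraints (a), (b), and the IC inequality yields a polynomial-time separation oracle for the entire feasible region of the rewritten LP. Running the Ellipsoid method with this oracle then computes an optimal vertex $(\bvec{x}^*,\bvec{y}^*)$ in $\poly(n,m)$ time, as desired. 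One technical point I would address is that $q_j$ may be zero for some type $j$; in that case $x_j = 0$ is forced and the corresponding coordinate is simply dropped, so the affine map is well-defined on the relevant subspace.

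The main obstacle, such as it is, is marshalling the external ingredients cleanly: one must verify that the algorithmic Border's theorem of \cite{Cai12a,Saeed12} applies in exactly the i.i.d.\ symmetric form needed here (it does --- i.i.d.\ is the easiest and original case), and that the reduction produces a separation oracle whose running time and bit-complexity are controlled well enough for Ellipsoid to terminate in polynomial time with an exact optimal vertex (standard, since all constraints have polynomially-bounded description once $\bvec{q},\bvec{\xi},\bvec{\rho}$ are given explicitly). No genuinely new combinatorial or analytic difficulty arises in this lemma: the real content was already packaged into Lemma~\ref{lem:reduced_form}, and here we are essentially performing a change of variables and quoting the literature. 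The only thing to be careful about is that Ellipsoid returns the optimum of the LP over $\P_s$ exactly (not merely approximately), which follows because $\P_s$ is a rational polytope of polynomial facet-complexity, so one can round to a vertex in the usual way.
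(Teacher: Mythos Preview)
Your proposal is correct and follows essentially the same approach as the paper: rewrite LP~\eqref{lp:iid:opt} via Lemma~\ref{lem:reduced_form} so that the only implicit constraint is membership of $\bvec{\tau}(\bvec{x})$ in Border's symmetric reduced-form polytope, then invoke the polynomial-time separation oracle from \cite{Border91,Border07,Cai12a,Saeed12} and run Ellipsoid. The paper's proof is terser and omits the technical caveats you raise (zero-probability types, bit complexity), but the substance is identical.
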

%
%
%
%

\begin{lemma} (See \cite{Cai12a,Saeed12}) \label{lem:iid_implement}
	Consider a single-item auction setting with $n$ i.i.d. bidders and $m$ types for each bidder, where each bidder's type is distributed according to $\bvec{q} \in \Delta_m$.  Given a realizable symmetric reduced form $\bvec{\tau} \in [0,1]^m$, an allocation rule with reduced form $\bvec{\tau}$ can be implemented in $\poly(n,m)$ time.
\end{lemma}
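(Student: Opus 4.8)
The plan is to reconstruct, at a high level, the algorithmic version of Border's theorem from \cite{Cai12a,Saeed12}, specialized to the symmetric i.i.d.\ setting. Let $\mathcal{R} \subseteq [0,1]^m$ be the polytope of realizable symmetric reduced forms. The argument has four parts: a polynomial-time separation oracle for $\mathcal{R}$; a structural description of the vertices of $\mathcal{R}$ as reduced forms of ``hierarchical'' allocation rules, each of which is trivially implementable; a routine that writes the given realizable $\bvec{\tau}$ as a convex combination of at most $m+1$ such vertices; and finally the implementation of $A$ by sampling one component rule and running it.

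First I would recall Border's characterization: $\bvec{\tau} \in [0,1]^m$ is realizable if and only if, for every $S \subseteq [m]$, $n \sum_{j \in S} q_j \tau_j \le 1 - \big(1 - \sum_{j \in S} q_j\big)^n$ --- the probability that the item is awarded to a bidder whose type lies in $S$ cannot exceed the probability that some bidder has a type in $S$. Although there are $2^m$ such inequalities, a standard argument (already in \cite{Border91}) shows it suffices to check the $m$ ``threshold'' sets $S = \{ j : \tau_j \ge t \}$; equivalently, the most-violated inequality is found by a simple sorting computation. This both confirms the hypothesis that $\bvec{\tau}$ is realizable and provides a $\poly(n,m)$-time separation oracle for $\mathcal{R}$.

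Next I would establish the structural fact underlying Border's theorem: for any linear objective $\sum_{j} w_j q_j \tau_j$, the maximum over $\mathcal{R}$ is attained by the reduced form of a \emph{hierarchical} allocation rule --- order the types by the weights $w_j$ (with a terminal ``do not allocate'' tier for non-positive weights), and on type profile $\theta$ award the item to a uniformly random bidder whose type lies in the highest nonempty tier. Such a rule is implementable in $\poly(n,m)$ time (on input $\theta$ one only identifies the highest occupied tier and picks a uniform winner among its occupants), and its symmetric reduced form has a closed form. Feeding the separation oracle into the ellipsoid method and the equivalence of separation and optimization yields a polynomial-time optimization oracle over $\mathcal{R}$ that, for each linear objective, returns such a hierarchical rule at an optimal vertex. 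I then run the standard convex-decomposition routine: maintain a residual point, initialized to $\bvec{\tau}$; repeatedly find a hierarchical-rule vertex of $\mathcal{R}$ in a suitable direction and subtract the largest multiple of it that keeps the residual in $\mathcal{R}$; recurse. By Carath\'{e}odory this terminates after at most $m+1$ iterations, producing $\bvec{\tau} = \sum_{\ell=1}^{k} p_\ell \bvec{\tau}^{(\ell)}$ with $\sum_\ell p_\ell = 1$ and each $\bvec{\tau}^{(\ell)}$ the reduced form of an implementable hierarchical rule $A^{(\ell)}$.

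Finally, $A$ is implemented as follows: on input $\theta$, draw $\ell \in [k]$ with probability $p_\ell$ and output $A^{(\ell)}(\theta)$. By linearity of expectation its symmetric reduced form equals $\sum_\ell p_\ell \bvec{\tau}^{(\ell)} = \bvec{\tau}$, and every step runs in $\poly(n,m)$ time. The main obstacle is the structural claim in the third paragraph --- that optimizing a linear function over the reduced-form polytope is solved by a simple, directly implementable hierarchical rule, so that the convex decomposition only ever needs to touch such rules; this is precisely the content of Border's theorem and its algorithmic refinements in \cite{Cai12a,Saeed12}, while the remaining ingredients (the threshold-set separation oracle and the ellipsoid-based decomposition) are mechanical. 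In the symmetric i.i.d.\ case one might alternatively seek a direct combinatorial decomposition avoiding the ellipsoid method, but the route above already yields the stated $\poly(n,m)$ bound.
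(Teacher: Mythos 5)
The paper does not prove this lemma --- the ``(See \cite{Cai12a,Saeed12})'' in the statement indicates it is imported verbatim from those works --- so there is no in-paper argument to compare against. Your reconstruction is a correct and faithful high-level account of the algorithmic Border's theorem as developed in \cite{Cai12a} (and, in a different guise, \cite{Saeed12}): the separation oracle via the Border inequalities restricted to threshold sets ordered by $\tau_j$, the identification of the linear-optimization corners with hierarchical allocation rules, the ellipsoid-driven Carath\'eodory decomposition of the given $\bvec{\tau}$ into at most $m+1$ such corners, and the final implementation by sampling a component rule. The one place to be slightly careful is the threshold-set claim: the standard exchange argument (remove a low-$\tau$ type, add a high-$\tau$ type, and compare both marginal changes against $F'(q(S^*))$ where $F(x)=1-(1-x)^n$) shows the most-violated constraint is attained on an upper level set of $\tau$, so the sorting-based oracle is indeed valid; this is implicit in your sketch but worth making explicit if you were to flesh it out.
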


The proofs of Lemmas \ref{lem:reduced_form} and \ref{lem:iid_solve} can be found in Appendix \ref{app:iid:opt}. 
The proof of Lemma~\ref{lem:reduced_form} builds  a correspondence between $s$-signatures of signaling schemes and certain reduced-form allocation rules. Specifically, actions correspond to bidders,  action types correspond to bidder types, and signaling $\sigma_i$ corresponds to assigning the item to bidder $i$. The expression of the reduced form in terms of the s-signature then follows from Bayes' rule. Lemma~\ref{lem:iid_solve} follows from Lemma \ref{lem:reduced_form}, the ellipsoid method, and the fact that symmetric reduced forms admit an efficient separation oracle (see \cite{Border91,Border07,Cai12a,Saeed12}).


\subsection{A Simple $(1-\frac{1}{e})$-Approximate Scheme}
\label{sec:iid:simple}

Our next result is a ``simple'' signaling scheme which obtains a $(1-1/e)$ multiplicative approximation  when payoffs are nonnegative. This algorithm has the distinctive property that it signals \emph{independently} for each action, and therefore implies that approximately optimal persuasion can be parallelized among multiple colluding  senders, each of whom only has access to the type of one or more of the actions. 

Recall from Section~\ref{sec:iid:symmetry} that an s-signature $(\bvec{x},\bvec{y})$ satisfies $||\bvec{x}||_1=||\bvec{y}||_1=\frac{1}{n}$ and $\bvec{x} + (n-1) \bvec{y} = \bvec{q}$. Our simple scheme, shown in Algorithm~\ref{alg:1-1/e}, works with the following explicit linear programming relaxation of optimization problem~\eqref{lp:iid:opt}.
\begin{lp}\label{lp:iid:apx}	
	\maxi{n\bvec{\xi} \cdot \bvec{x} }   
	\st
	\con{ \bvec{\rho} \cdot \bvec{x} \geq \bvec{\rho} \cdot \bvec{y} }
	\con{ \bvec{x} + (n-1) \bvec{y} = \bvec{q} }
	\con{ ||\bvec{x}||_{1}=\frac{1}{n} }
	\con{ \bvec{x},\bvec{y}\geq0 }
\end{lp}%
\begin{alg}
	\INPUT Sender payoff vector $\bvec{\xi}$, receiver payoff vector
	$\bvec{\rho}$, prior distribution $\bvec{q}$
	
	\INPUT State of nature $\theta\in[m]^{n}$
	
	\OUTPUT An $n$-dimensional binary signal $\sigma\in\{\high,\low\}^{n}$
	
	\STATE Compute an optimal solution $(\bvec{x}^{*},\bvec{y}^{*})$ linear program~\eqref{lp:iid:apx}.
	\STATE For each action $i$ independently, set  component signal $o_i$ to $\high$ with probability
	$\frac{x_{\theta_{i}}^{*}}{q_{\theta_{i}}}$ and to $\low$ otherwise, where $\theta_i$ is the type of action $i$ in the input state $\theta$.
	\STATE Return $\sigma=(o_{1},...,o_{n})$. 		
	\caption{Independent Signaling Scheme\label{alg:1-1/e}}
\end{alg}

Algorithm~\ref{alg:1-1/e} has a simple and instructive interpretation. It computes the optimal solution $(\bvec{x}^*,\bvec{y}^*)$ to  the relaxed problem \eqref{lp:iid:apx}, and uses this solution as a guide for signaling \emph{independently} for each action. The algorithm selects, independently for each action $i$, a component signal $o_i \in \set{\high,\low}$. In particular, each $o_i$ is chosen so that $\Pr[o_i=\high] = \frac{1}{n}$, and moreover the events $o_i=\high$ and $o_i=\low$ induce the  posterior beliefs $n \bvec{x}^*$ and $n \bvec{y}^*$, respectively, regarding the type of action $i$.

The signaling scheme implemented by Algorithm~\ref{alg:1-1/e} approximately matches the optimal value of~\eqref{lp:iid:apx}, as shown in Theorem \ref{thm:iid_apx}, assuming the receiver is rational and therefore selects an action with a $\high$ component signal if one exists. We note that the scheme of Algorithm~\ref{alg:1-1/e}, while not a direct scheme as described, can easily be converted into one; specifically, by recommending an action whose component signal is  $\high$ when one exists (breaking ties arbitrarily), and recommending an arbitrary action otherwise.
Theorem \ref{thm:iid_apx} follows from the fact that $(\bvec{x}^*,\bvec{y}^*)$ is an optimal solution to LP \eqref{lp:iid:apx}, the fact that the posterior type distribution of an action $i$ is $n\bvec{x}^*$ when $o_i=\high$ and $n \bvec{y}^*$ when $o_i=\low$, and the fact that each component signal is high independently with probability $\frac{1}{n}$. We defer the formal proof to Appendix \ref{app:iid:apx}.
\begin{theorem}\label{thm:iid_apx}
	Algorithm~\ref{alg:1-1/e} runs in $poly(m,n)$ time, and serves as a $(1-\frac{1}{e})$-approximate signaling scheme for the Bayesian Persuasion problem with $n$ i.i.d. actions, $m$ types, and nonnegative payoffs.
\end{theorem}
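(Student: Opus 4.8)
The plan is to sandwich the performance of Algorithm~\ref{alg:1-1/e} between two quantities: show that the optimum of the explicit LP~\eqref{lp:iid:apx} is an upper bound on the optimal sender utility $\mathrm{OPT}$, and that the scheme (converted to a direct scheme) extracts a $(1-\frac{1}{e})$ fraction of that optimum. The running-time claim is immediate, since LP~\eqref{lp:iid:apx} has only $2m$ variables and $O(m)$ constraints and is solved once, after which each of the $n$ actions is signaled by one independent coin flip. For the relaxation claim, recall from Section~\ref{sec:iid:symmetry} that every realizable $s$-signature $(\bvec{x},\bvec{y})\in\mathcal{P}_s$ satisfies $\|\bvec{x}\|_1=\frac1n$, $\bvec{x}+(n-1)\bvec{y}=\bvec{q}$, and $\bvec{x},\bvec{y}\ge\zero$ --- precisely the constraints of~\eqref{lp:iid:apx}. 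Hence the feasible region of~\eqref{lp:iid:opt} embeds into that of~\eqref{lp:iid:apx}, and by Theorem~\ref{thm:symmetry} (which lets us rewrite the sender's problem as~\eqref{lp:iid:opt}) the value of~\eqref{lp:iid:apx} is at least $\mathrm{OPT}$. Write $(\bvec{x}^*,\bvec{y}^*)$ for an optimal solution of~\eqref{lp:iid:apx}, so its value is $n\,\bvec{\xi}\cdot\bvec{x}^*\ge\mathrm{OPT}$.

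Next I would compute the posteriors induced by the algorithm. Fixing an action $i$ and marginalizing over $\theta_i\sim\bvec{q}$, $\Pr[o_i=\high]=\sum_j q_j\cdot\frac{x^*_j}{q_j}=\|\bvec{x}^*\|_1=\frac1n$, and these events are independent across $i$. By Bayes' rule the posterior on $\theta_i$ given $o_i=\high$ is $n\bvec{x}^*$, and given $o_i=\low$ it is $\frac{\bvec{q}-\bvec{x}^*}{1-1/n}=n\bvec{y}^*$ using $\bvec{x}^*+(n-1)\bvec{y}^*=\bvec{q}$. Because component signals and types of distinct actions are mutually independent, these posteriors are unchanged by further conditioning on the signals of the other actions.

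Now consider the direct scheme: recommend an action whose component signal is $\high$ if one exists, and an arbitrary action otherwise. For incentive compatibility, condition on action $i$ being recommended. Then $o_i=\high$, so the receiver's posterior value for $i$ is $n\,\bvec{\rho}\cdot\bvec{x}^*$; for any $j\ne i$ the posterior on $\theta_j$ is a convex combination of $n\bvec{x}^*$ (when $o_j=\high$) and $n\bvec{y}^*$ (when $o_j=\low$), so the posterior value of $j$ is at most $n\,\bvec{\rho}\cdot\bvec{x}^*$ by the LP inequality $\bvec{\rho}\cdot\bvec{x}^*\ge\bvec{\rho}\cdot\bvec{y}^*$; when no action is $\high$ every posterior equals $n\bvec{y}^*$ and any recommendation is trivially IC. For the objective, let $E$ be the event that some action has $o_i=\high$; conditioned on $E$ the recommended action's posterior type distribution is $n\bvec{x}^*$, so the sender's conditional expected utility is $n\,\bvec{\xi}\cdot\bvec{x}^*$, and $\Pr[E]=1-(1-\frac1n)^n\ge1-\frac1e$. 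Since payoffs are nonnegative, the sender's expected utility is at least $\Pr[E]\cdot n\,\bvec{\xi}\cdot\bvec{x}^*\ge(1-\frac1e)\,n\,\bvec{\xi}\cdot\bvec{x}^*\ge(1-\frac1e)\,\mathrm{OPT}$, which is the claim.

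The step I expect to require the most care is the incentive-compatibility argument for the converted scheme: one must check that conditioning on the globally-defined recommendation event still leaves action $i$'s posterior at $n\bvec{x}^*$ and leaves every competing action's posterior a convex combination of $n\bvec{x}^*$ and $n\bvec{y}^*$ --- this is exactly where the independence-across-actions structure of Algorithm~\ref{alg:1-1/e} is used --- and then fold in the single constraint $\bvec{\rho}\cdot\bvec{x}^*\ge\bvec{\rho}\cdot\bvec{y}^*$. The remaining ingredients are routine Bayes-rule bookkeeping and the standard estimate $(1-1/n)^n\le1/e$.
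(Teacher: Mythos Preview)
Your proposal is correct and follows essentially the same approach as the paper: both establish that LP~\eqref{lp:iid:apx} upper-bounds the optimum, compute the posteriors $n\bvec{x}^*$ and $n\bvec{y}^*$ via Bayes' rule and independence, use $\bvec{\rho}\cdot\bvec{x}^*\ge\bvec{\rho}\cdot\bvec{y}^*$ to argue the receiver favors $\high$ actions, and bound $\Pr[\text{some }\high]\ge 1-1/e$. The only cosmetic difference is that the paper analyzes the indirect scheme (the receiver sees the full vector $(o_1,\ldots,o_n)$ and best-responds) while you verify incentive compatibility directly for the converted recommendation scheme via a case split on whether some $\high$ exists---both are equivalent by the revelation principle.
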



\begin{remark}
	Algorithm \ref{alg:1-1/e} signals independently for each action. This conveys an interesting conceptual message. That
	is, even though the optimal signaling scheme might induce posterior beliefs which correlate different actions, it is nevertheless true that signaling for each action independently yields an approximately optimal signaling scheme. As a consequence, collaborative persuasion by multiple parties (the senders), each of whom observes the payoff of one or more actions, is a task that can be parallelized, requiring no coordination when actions are identical and independent and only an approximate solution is sought.  We leave open the question of whether this is possible when action payoffs are independently but not identically distributed.
\end{remark}


\section{Complexity Barriers to Persuasion with Independent Actions}
\label{sec:indep}

In this section, we consider optimal persuasion with independent   action payoffs as in Section \ref{sec:iid}, albeit with action-specific marginal distributions given explicitly. Specifically, for each action $i$ we are given a distribution $q^i \in \Delta_{m_i}$ on $m_i$ types, and each type $j \in [m_i]$ of action $i$ is associated with a sender payoff $\xi^i_j \in \RR$ and a receiver payoff $\rho^i_j \in \RR$. The positive results of Section \ref{sec:iid} draw a connection between optimal persuasion in the special case of identically distributed actions and Border's characterization of reduced-form single-item auctions with i.i.d.~bidders. One might expect this connection to generalize to the independent non-identical persuasion setting, since Border's theorem extends to single-item auctions with independent non-identical bidders. Surprisingly, we show that this analogy to Border's characterization fails to generalize. We prove the following theorem. 
\begin{theorem}\label{thm:hardness}
	Consider the Bayesian Persuasion problem with independent actions, with action-specific payoff distributions given explicitly. It is $\#P$-hard to compute the optimal expected sender utility.
\end{theorem}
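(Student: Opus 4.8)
The plan is to reduce from a known $\#P$-hard counting problem via the optimal value of the persuasion LP, following the high-level strategy of Gopalan et al.~\cite{Gopalan15} but working through a membership-testing problem for an implicitly-defined polytope. Concretely, I would start from a $\#P$-complete problem such as counting the number of solutions to a knapsack-type or subset-sum instance (or, more directly, computing a coefficient of a product of polynomials / a weighted independent-set count), and design an independent-action persuasion instance whose optimal sender utility encodes this count. The natural encoding is to let each action correspond to one item/variable of the counting instance, with its type distribution $q^i$ chosen so that the product distribution $\lambda(\theta) = \prod_i q^i_{\theta_i}$ assigns weights that track the quantity being counted; the receiver payoffs $\rho^i_j$ are then set up so that the IC constraints in LP~\eqref{lp:persuasion} become a constraint of the form ``the recommended-action posterior lies in a particular implicit polytope,'' and the sender payoffs $\xi^i_j$ are chosen so that the LP optimum is a known affine function of the count.

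\textbf{Key steps, in order.}
First, I would set up the gadget: specify, for a given counting instance, the number of actions $n$, the per-action type sets $[m_i]$, the marginals $q^i$, and the payoff vectors $\bvec{\xi},\bvec{\rho}$, and verify that the instance has size polynomial in the input. Second, I would analyze the structure of the optimal direct signaling scheme for this gadget: because payoffs are independent across actions, I would argue (in the spirit of the signature/reduced-form viewpoint of Section~\ref{sec:iid}) that the optimal scheme is determined by a reduced-form object $\bvec{\tau}$ — the conditional probability that each action is recommended given its type — and that the sender's LP value is a linear functional of $\bvec{\tau}$ subject to $\bvec{\tau}$ lying in the (implicitly described) ``feasibility polytope'' of realizable reduced forms plus the IC inequalities. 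Third, I would show that, by the particular choice of $\bvec{\rho}$ and $\bvec{\xi}$, optimizing this linear functional over the feasible region amounts to deciding membership of a specific, easily-computed point in the reduced-form polytope, and that this membership question is equivalent to evaluating the $\#P$-hard count. Finally, I would conclude: if optimal expected sender utility were computable in polynomial time, one could solve the membership problem and hence the counting problem, so the former is $\#P$-hard, and in particular no ``generalized Border's theorem'' (a polynomial-size description of the reduced-form polytope with efficient membership) can exist unless the polynomial hierarchy collapses.

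\textbf{Main obstacle.} The crux — and the place where the argument is substantially more involved than in~\cite{Gopalan15} — is engineering the gadget so that the reduced-form membership question genuinely encodes a $\#P$-hard quantity rather than something polynomial-time computable. One must exploit the combinatorial explosion of states of nature ($\prod_i m_i$ of them) while keeping the input description polynomial, and route the count through the constraint that a reduced form be \emph{realizable} by an actual allocation-like rule. I would expect the technical heart to be: (i) choosing the payoff parameters so that the IC constraints and the realizability constraints interact to single out exactly the point whose membership encodes the count, and (ii) proving the reverse direction — that from the optimal sender utility one can \emph{recover} the count exactly (not just approximately), which requires controlling the LP optimum precisely, likely via a careful rational-arithmetic / scaling argument showing the relevant threshold behavior is sharp. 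Establishing this exact correspondence, rather than an approximate one, is what makes the reduction go through to $\#P$-hardness of the exact optimal value.
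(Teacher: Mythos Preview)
Your high-level outline --- reduce a $\#P$-hard problem to membership in an implicit polytope, then encode that membership test as the optimal value of a carefully designed independent-action persuasion instance --- does match the paper's strategy. But the second step of your plan contains a genuine gap that would cause the argument to fail as written.

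You propose to argue that ``the optimal scheme is determined by a reduced-form object $\bvec{\tau}$ --- the conditional probability that each action is recommended given its type.'' This is precisely the Border-style reduced form from Section~\ref{sec:iid}, and the whole point of Theorem~\ref{thm:hardness} is that this decomposition \emph{does not} extend to non-identical actions. The incentive-compatibility constraint for signal $\sigma_i$ compares $r_i(\sigma_i)$ to $r_j(\sigma_i)$ for $j\neq i$; the latter depends on the posterior distribution of action~$j$'s type given $\sigma_i$, which is not captured by your $\bvec{\tau}$. In the i.i.d.\ case this was harmless because the symmetry of the optimal scheme (Theorem~\ref{thm:symmetry}) pinned down all off-diagonal rows via $\bvec{x}+(n-1)\bvec{y}=\bvec{q}$. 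With non-identical actions there is no such symmetry, and a per-action reduced form loses exactly the cross-action information the IC constraints need. So your step ``the sender's LP value is a linear functional of $\bvec{\tau}$ subject to $\bvec{\tau}$ lying in the feasibility polytope plus IC'' does not go through.

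The paper avoids this obstacle by a different mechanism. It does not try to compress an arbitrary persuasion instance into a Border-like reduced form. Instead it constructs a very specific instance in which the optimal scheme provably uses only \emph{two} signals, so that the relevant implicit polytope is a polytope of two-signal signatures (the ``Khintchine polytope''), not a per-action reduced form. The collapse to two signals is engineered by making all regular actions \emph{zero-sum} between sender and receiver and adding one deterministic ``special'' action with tiny sender payoff $\epsilon$: any finer information about regular actions helps the receiver and hurts the sender, so the sender optimally pools all regular-action recommendations into one signal. The source problem is the Khintchine constant $K(a)=\Ex_{\theta\sim\{\pm1\}^n}[|\theta\cdot a|]$; the per-action parameters $a_i,b_i$ are chosen by solving a $2\times 2$ linear system so that the candidate two-signal signature, if realizable, attains a specific sender utility threshold; and $\epsilon$ is chosen with polynomial bit complexity (via a vertex bit-complexity argument on the two-signal polytope) small enough that no optimal scheme can trade away any payoff from the ``negative'' signal. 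Your proposal correctly anticipates that a sharp threshold/scaling argument is needed, but the ingredients that make it possible --- the zero-sum structure forcing two signals, and the Khintchine-polytope formulation --- are absent from your plan and are not supplied by the reduced-form viewpoint you invoke.
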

Invoking the framework of \citet{Gopalan15}, this rules out a \emph{generalized Border's theorem} for our setting, in the sense defined by~\cite{Gopalan15}, unless the polynomial hierarchy collapses to $P^{NP}$. We view this result as illustrating some of the important differences between persuasion and mechanism design. 

The proof of Theorem \ref{thm:hardness} is rather involved. We defer the full proof to Appendix \ref{app:indep}, and only present a sketch here.  Our proof starts from the ideas of  \citet{Gopalan15}, who show the \#P-hardness for revenue or welfare maximization in several mechanism design problems. In one case, \cite{Gopalan15} reduce from the $\#P$-hard problem of computing the \emph{Khintchine constant} of a vector. Our reduction also starts from this problem, but is much more involved:\footnote{In \cite{Gopalan15}, Myerson's characterization is used to show that optimal mechanism design in a public project setting directly encodes computation of the Khintchine constant. No analogous direct connection seems to hold here.} First, we exhibit a polytope which we term the \emph{Khintchine polytope}, and show that computing the Khintchine constant reduces to linear optimization over the Khintchine polytope. Second, we present a reduction from the membership problem for the Khintchine polytope to the computation of optimal sender utility in a particularly-crafted instance of persuasion with independent actions. Invoking the polynomial-time equivalence between membership checking and optimization (see, e.g., \cite{Groetschel88}), we conclude the \#P-hardness of our problem. The main technical challenge we overcome is in the second step of our proof: given a point $x$ which may or may not be in the Khintchine polytope $\K$, we construct a persuasion instance and a threshold $T$ so that points in $\K$ encode signaling schemes, and the optimal sender utility is at least $T$ if and only if $x \in \K$ and the scheme corresponding to $x$ results in sender utility $T$.

\subsection*{Proof Sketch of Theorem~\ref{thm:hardness}}
The \emph{Khintchine problem},  shown to be \#P-hard in \cite{Gopalan15}, is to compute the \emph{Khintchine constant} $K(a)$ of a given vector $a \in \RR^n$, defined as $K(a) =  \Ex_{\theta \sim \{\pm 1\}^n} [ |\theta \cdot a|]$ where  $\theta$ is drawn uniformly at random from $\{\pm 1\}^n$. To relate the Khintchine problem to Bayesian persuasion, we begin with a persuasion instance with $n$ \emph{i.i.d.}~actions and two action types, 
which we refer to as \emph{type -1} and \emph{type +1}. The state of nature is a uniform random draw from the set $\{\pm 1 \}^n$, with the $i$th entry specifying the type of action~$i$. 
We call this instance the \emph{Khintchine-like} persuasion setting.  As in Section \ref{sec:iid}, we still use the \emph{signature} to capture the payoff-relevant features of a signaling scheme, but we pay special attention to signaling schemes which use only \emph{two} signals, in which case we represent them using a \emph{two-signal signature} of the form $(M^1,M^2) \in \RR^{n \times 2} \cross \RR^{n \times 2}$. 
The \emph{Khintchine polytope} $\mathcal{K}(n)$ is then defined as the (convex) family of all \emph{realizable} two-signal signatures for the Khintchine-like persuasion problem with an additional constraint: each signal is sent  with probability exactly $\frac{1}{2}$. We first prove that general linear optimization over $\K(n)$ is \#P-hard by encoding computation of the Khintchine constant as linear optimization over $\K(n)$. In this reduction, the optimal solution in $\K(n)$ is the signature of  the two-signal scheme $\varphi(\theta) = sign(\theta \cdot a)$, which signals $+$ and $-$ each with  probability $\frac{1}{2}$.


To reduce the membership problem for the Khintchine polytope to optimal Bayesian persuasion, the main challenges come from our restrictions on $\K(n)$, namely to schemes with two signals which are equally probable. 
Our reduction incorporates three key ideas. The \emph{first} is to design a persuasion instance in which the optimal signaling scheme uses only two signals. The instance we define will have $n+1$ actions. Action $0$ is \emph{special} -- it deterministically results in sender utility $\epsilon>0$ (small enough) and receiver utility $0$. 
The other $n$ actions are \emph{regular}. Action $i>0$ \emph{independently} results in sender utility $- a_i$ and receiver utility $a_i$ with probability $\frac{1}{2}$ (call this type $1_i$), or sender utility $-  b_i$ and receiver utility $b_i$ with probability $\frac{1}{2}$ (call this type $2_i$), for  $a_i$ and $b_i$ to be set later.  Note that the sender and receiver utilities are \emph{zero-sum} for both types.  
Since the special action is deterministic and the probability of its (only) type is $1$ in any signal, we  can interpret any $(M^1,M^2)\in \mathcal{K}(n)$ as a two-signal signature for our persuasion instance (the row corresponding to the special action $0$ is implied). 
We show that restricting to two-signal schemes is without loss of generality in this persuasion instance. The proof tracks the following intuition: due to the zero-sum nature of regular actions, any additional information regarding  regular actions would benefit the receiver and harm the sender. Consequently, sender does not reveal any information which distinguishes between different regular actions. Formally, we prove that there always exists an optimal signaling scheme with only two signals: one signal recommends the special action, and the other recommends some regular action. 

We denote the  signal that recommends the special action $0$ by $\sigma_+$ (indicating that the sender derives positive utility $\epsilon$), and denote the other signal by $\sigma_-$ (indicating that the sender derives negative utility, as we show). The \emph{second} key idea concerns choosing appropriate values for $\{a_i\}_{i=1}^{n},\{b_i\}_{i=1}^n$ for a given two-signature $(M^1,M^2)$ to be tested. We choose these values to satisfy the following two properties: (1) For all regular actions, the signaling scheme  implementing $(M^1,M^2)$ (if it exists) results in the same sender utility $-1$  (thus receiver utility $1$)  conditioned on $\sigma_-$ and the same sender utility $0$  conditioned on $\sigma_+$; (2)  the \emph{maximum possible} expected sender utility from $\sigma_-$, i.e., the sender utility conditioned on $\sigma_-$ multiplied by the probability of $\sigma_-$, is $-\frac{1}{2}$.  As a result of Property (1), if $(M^1,M^2) \in \K (n)$ then the corresponding signaling scheme $\varphi$ is IC and results in expected sender utility $T=\frac{1}{2}\epsilon - \frac{1}{2}$  (since each signal is sent with probability $\frac{1}{2}$). Property (2) implies that $\varphi$ results in the maximum possible expected sender utility from $\sigma_-$.

We now run into a challenge: the existence of a signaling scheme with expected sender utility  $T=\frac{1}{2}\epsilon - \frac{1}{2}$ does not necessarily imply that $(M^1,M^2) \in \K(n)$ if $\eps$ is large. Our \emph{third} key idea is to set $\eps>0$  ``sufficiently small''  so that any optimal signaling scheme must result in  the maximum possible expected sender utility  $-\frac{1}{2}$ from signal $\sigma_-$ (see Property (2) above). In other words, we must make $\epsilon$  so small so that the sender prefers to not sacrifice \emph{any} of her payoff from $\sigma_-$ in order to gain utility from the special action recommended by $\sigma_+$. We show that such an $\epsilon$ exists with polynomially many bits. We prove its existence by arguing that the polytope of incentive-compatible two-signal signatures has polynomial bit complexity, and therefore an $\eps>0$ that is smaller than the ``bit complexity'' of the vertices would suffice.



As a result of this choice of $\epsilon$, if the optimal sender utility is precisely $T=\frac{1}{2}\epsilon - \frac{1}{2}$ then we know that signal $\sigma_+$ must be sent with probability $\frac{1}{2}$ since the expected sender utility  from signal $\sigma_-$ must be $-\frac{1}{2}$. We show that this, together with the specifically constructed $\{a_i\}_{i=1}^{n},\{b_i\}_{i=1}^n$, is sufficient to guarantee that the optimal signaling scheme must implement  the given two-signature $(M^1,M^2)$, i.e., $(M^1,M^2) \in \K(n)$. When the optimal optimal sender utility is strictly greater than $\frac{1}{2}\epsilon - \frac{1}{2}$, the optimal signaling scheme does not implement $(M^1,M^2)$, but we show that it can be post-processed into one that does. 

\section{The General Persuasion Problem}
\label{sec:general}

We now turn our attention to the Bayesian Persuasion problem when the payoffs of different actions are arbitrarily correlated, and the joint distribution $\lambda$ is presented as  a black-box sampling oracle.  We assume that payoffs are normalized to lie in the bounded interval, and prove essentially matching positive and negative results. Our positive result is a fully polynomial-time approximation scheme for optimal persuasion with a bi-criteria guarantee; specifically, we achieve approximate optimality and approximate incentive compatibility in the additive sense described in Section \ref{sec:prelim}. Our negative results show that such a bi-criteria loss is inevitable in the black box model for information-theoretic reasons.



\subsection{A Bicriteria FPTAS}


\begin{theorem}\label{thm:blackbox}
	Consider the Bayesian Persuasion problem in the black-box oracle model with $n$ actions and payoffs in $[-1,1]$, and let $\eps >0$ be a parameter. An $\eps$-optimal and $\eps$-incentive compatible  signaling scheme can be implemented in $\poly(n,\frac{1}{\eps})$ time.
\end{theorem}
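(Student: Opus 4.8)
**

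The plan is to obtain the signaling scheme by solving a sample-based version of LP~\eqref{lp:persuasion}. Concretely, I would draw $N = \poly(n,\frac1\eps)$ i.i.d.\ samples $\theta^{(1)},\dots,\theta^{(N)}$ from the black-box oracle, form the empirical distribution $\hat\lambda$ that places mass $\frac1N$ on each sample, and solve LP~\eqref{lp:persuasion} (with the relaxed $\eps/2$-IC constraints) on the instance $(\Theta,\hat\lambda)$ — this is a linear program with $nN$ variables and $\poly(n,N)$ constraints, hence solvable in $\poly(n,\frac1\eps)$ time. The resulting scheme $\hat\varphi$ is defined on the sampled states; on an arbitrary input state $\theta$ we need a rule for signaling. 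Here is where the \emph{principle of deferred decisions} enters: rather than thinking of $\hat\varphi$ as a fixed object applied to fresh draws, we think of drawing the $N$ samples, picking one uniformly at random to be the ``real'' state of nature, and signaling according to $\hat\varphi$ on that index. Equivalently, the implementable scheme draws a fresh $\theta$, but because $\theta$ and the $N$ training samples are exchangeable, analyzing ``signal according to the LP solution on a uniformly random coordinate'' is distributionally identical to ``train on $N-1$ samples, then apply to an independent $N$-th sample.'' This reduces the question to: does the optimal value and the IC slack of the sampled LP concentrate around those of the true LP?

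The key steps in order: \textbf{(1)} Show the sampled LP's optimum is not much worse than the true optimum — feasibility transfer. Take an optimal IC scheme $\varphi^*$ for $\lambda$ with value $\mathrm{OPT}$; restricted to the sample it gives a feasible point of the sampled LP (IC constraints and the objective are all of the form $\frac1N\sum_k f(\theta^{(k)})$ for bounded $f$), so by a Hoeffding/Chernoff bound each such empirical average is within $\eps/4$ of its expectation with probability $1-\exp(-\Omega(N\eps^2))$; a union bound over the $O(n^2)$ IC constraints and the objective gives, for $N = \Theta(\frac{1}{\eps^2}\log\frac n\eps)$, that $\varphi^*$ restricted to the sample is $(\eps/4)$-IC for $\hat\lambda$ with empirical value $\ge \mathrm{OPT}-\eps/4$. \textbf{(2)} Show the converse direction — that a scheme which is $(\eps/2)$-IC and near-optimal \emph{for the sample} is $\eps$-IC and near-optimal \emph{for $\lambda$}. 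This is the subtle direction: the sampled LP solution $\hat\varphi$ depends on the samples, so we cannot directly apply a Hoeffding bound to $\frac1N\sum_k f_{\hat\varphi}(\theta^{(k)})$ since the summands are not independent of the choice of $\hat\varphi$. \textbf{(3)} Resolve (2) via the deferred-decisions / exchangeability argument: the implemented scheme is ``apply the LP solution to a uniformly random one of the $N$ coordinates,'' and by exchangeability the expected (over all $N$ samples and the random coordinate) sender utility equals the expected empirical value of $\hat\varphi$, while its true IC violation equals its expected empirical IC violation — so no uniform convergence over a scheme class is needed; we only need concentration of $\hat\varphi$'s \emph{own} empirical value around $\mathrm{OPT}$, which follows by combining step (1) with the fact that the sampled LP's optimum is a monotone-in-sample quantity bounded above (again via Hoeffding in the other direction on a fixed near-optimal scheme? no — actually the upper bound that $\mathrm{val}(\hat\lambda) \le \mathrm{OPT} + O(\eps)$ requires more care, see below).

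The main obstacle, and the place I'd spend the most effort, is precisely the upper-bound side of step (2)/(3): ensuring that the sampled LP cannot wildly \emph{overestimate} what is achievable, i.e., that its optimal value is at most $\mathrm{OPT} + O(\eps)$ with high probability, and that relaxing IC by $\eps/2$ on the sample does not translate into a much larger IC violation on $\lambda$. The honest way to handle this is the exchangeability framing in step (3): define the implementable scheme $\varphi$ so that on a true draw $\theta\sim\lambda$ it behaves as if $\theta$ were inserted as a random one of $N$ exchangeable samples and $\hat\varphi$ were recomputed; then $\Ex[u_s(\varphi,\lambda)] = \Ex[\mathrm{val}_{\hat\lambda}(\hat\varphi)] \ge \mathrm{OPT} - \eps/4$ by step (1), and the expected true IC-violation of $\varphi$ equals the expected empirical IC-violation of $\hat\varphi$, which is $\le \eps/2$ by construction — giving exactly an $\eps$-optimal (after also accounting for the $\eps/4$ from rounding the relaxation), $\eps$-IC scheme in expectation, which one then converts to a high-probability guarantee by a final Markov/Chernoff step and mild increase of $N$. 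Writing this exchangeability coupling carefully — stating precisely what the ``implemented'' algorithm does on a fresh input, and why that is distributionally the deferred-decisions experiment — is the crux; everything else is standard Hoeffding bookkeeping with payoffs in $[-1,1]$ and $O(n^2)$ constraints.
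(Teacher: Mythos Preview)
Your high-level architecture is exactly the paper's: insert the true state as a random coordinate among $K$ samples, solve the $\eps$-relaxed empirical LP, and signal accordingly; the deferred-decisions/exchangeability argument then transfers empirical $\eps$-IC and empirical value to the true distribution (this is the paper's Lemmas~\ref{lem:blackbox:ic} and~\ref{lem:blackbox:util}). So steps (2)--(3) are fine and match the paper.

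The gap is in step (1). You claim that because each IC constraint is an empirical average of a bounded function, Hoeffding gives that $\varphi^*$ restricted to the sample is $(\eps/4)$-IC for $\hat\lambda$. But the $\eps$-IC relaxation in LP~\eqref{lp:blackbox:empirical} is \emph{normalized}: the slack is $\eps\cdot\hat\alpha_i$ where $\hat\alpha_i=\frac1K\sum_k\varphi^*(\theta_k,\sigma_i)$ is the empirical signal probability. Hoeffding only gives you an additive $\delta$ on the \emph{unnormalized} quantity $\frac1K\sum_k \varphi^*(\theta_k,\sigma_i)(r_i(\theta_k)-r_j(\theta_k))$, so you get $\geq -\delta$, not $\geq -\eps\hat\alpha_i$. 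For signals with small $\hat\alpha_i$ this can fail badly: if only a handful of sampled states map to $\sigma_i$ and those happen to favor action $j$, the normalized constraint is violated by $\Omega(1)$, not $O(\eps)$. The paper explicitly flags this (``polynomially-many samples are not sufficient to approximately preserve the IC constraints corresponding to low-probability signals'') and fixes it by replacing $\varphi^*$ with a modified scheme $\hat\varphi$ in which every signal is either \emph{large} (probability $\geq \eps/(4n)$ under $\lambda$) or \emph{honest} (only recommends $i$ on states where $i$ is truly receiver-optimal). Honest signals are IC on \emph{any} sample pointwise; large signals have $\hat\alpha_i\gtrsim\eps/n$ with high probability, so a Hoeffding error of order $\eps^2/n$ suffices---which is why the paper needs $K=\Theta(n^2\eps^{-4}\log(n/\eps))$ samples rather than your $\Theta(\eps^{-2}\log(n/\eps))$. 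Without this large/honest decomposition (or an equivalent device), your feasibility-transfer step does not go through, and your sample bound is too small by a factor of $n^2/\eps^2$.
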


To prove Theorem~\ref{thm:blackbox}, we show that a simple Monte-Carlo algorithm implements an approximately optimal and approximately incentive compatible scheme $\varphi$. Notably, our algorithm does not compute a representation of the entire signaling scheme $\varphi$ as in Section \ref{sec:iid}, but rather merely samples its output $\varphi(\theta)$ on a given input $\theta$. At a high level, when given as input a state of nature $\theta$, our algorithm first takes $K=\poly(n,\frac{1}{\eps})$ samples from the prior distribution $\lambda$ which, intuitively, serve to place the true state of nature $\theta$ in context. Then the algorithm uses a linear program to compute the optimal $\eps$-incentive compatible  scheme $\tilde{\varphi}$ for the empirical distribution of samples augmented with the input $\theta$. Finally, the algorithm signals as  suggested by $\tilde{\varphi}$ for $\theta$. Details are in Algorithm~\ref{alg:blackbox}, which we instantiate with $\eps >0$ and $K = \ceil{ \frac{256 n^2}{\eps^4} \log(\frac{4n}{\eps})}$. 

We note that relaxing incentive compatibility  is necessary for convergence to the optimal sender utility --- we prove this formally in Section \,\ref{sec:blackbox:barriers}. This is why LP \eqref{lp:blackbox:empirical} 
features relaxed incentive compatibility constraints. Instantiating Algorithm\,\ref{alg:blackbox} with $\eps=0$ results in an exactly incentive compatible scheme which could be far from the optimal sender utility for any finite number of samples $K$, as reflected in Lemma \,\ref{lem:blackbox:apx}.

\begin{alg}
	\PARAMETER  $\eps \geq 0$
	\PARAMETER Integer $K \geq 0$
	\INPUT Prior distribution $\lambda$ supported  on $[-1,1]^{2n}$, given by a sampling oracle
	\INPUT State of nature $\theta \in [-1,1]^{2n}$ 
	\OUTPUT Signal $\sigma \in \Sigma$, where $\Sigma= \set{\sigma_1,\ldots,\sigma_n}$.
	\STATE Draw integer $\ell$ uniformly at random from $\set{1,\ldots,K}$, and denote $\theta_{\ell} = \theta$.
	\STATE Sample $\theta_1, \ldots, \theta_{\ell-1}, \theta_{\ell+1}\ldots,\theta_K$ independently from $\lambda$, and let the multiset $\tilde{\lambda} = \set{\theta_1, \ldots, \theta_K}$ denote the empirical distribution  augmented with the input state  $\theta= \theta_\ell$.
	\STATE Solve linear program \eqref{lp:blackbox:empirical} to obtain the signaling scheme $\tilde{\varphi}: \tilde{\lambda}  \to \Delta(\Sigma)$. \label{step:lpempirical}
	\STATE Output a sample from $ \tilde{\varphi}(\theta) = \tilde{\varphi}(\theta_\ell) $.
	\caption{Signaling Scheme for a Black Box Distribution}
	\label{alg:blackbox}
\end{alg}

\begin{figure}
	\centering
	\begin{lp}
		\label{lp:blackbox:empirical}
		\maxi{\sum_{k=1}^K \sum_{i=1}^n \frac{1}{K} \tilde{\varphi}(\theta_k,\sigma_i) s_i(\theta_k)}
		\st
		\qcon{\sum_{i=1}^n \tilde{\varphi}(\theta_k,\sigma_i) = 1}{k \in [K]}
		\qcon{\sum_{k=1}^K \frac{1}{K} \tilde{\varphi}(\theta_k,\sigma_i) r_i(\theta_k) \geq \sum_{k=1}^K \frac{1}{K} \tilde{\varphi}(\theta_k,\sigma_i) (r_j(\theta_k) - \eps)} {i,j \in [n]}
		\qcon{\tilde{\varphi}(\theta_k,\sigma_i) \geq 0}{k \in [K], i \in [n]}
	\end{lp}
	Relaxed Empirical Optimal Signaling Problem
\end{figure}

Theorem \ref{thm:blackbox} follows from three lemmas pertaining to the scheme $\varphi$ implemented by Algorithm~\ref{alg:blackbox}. Approximate incentive compatibility for $\lambda$ (Lemma \ref{lem:blackbox:ic}) follows from the principle of deferred decisions, linearity of expectations, and the fact that $\tilde{\varphi}$ is approximately incentive compatible for the augmented empirical distribution $\tilde{\lambda}$. 
A similar argument, also based on the principal of deferred decisions and linearity of expectations,  
shows that the expected sender utility from our scheme when $\theta \sim \lambda$ equals the expected optimal value of linear program \eqref{lp:blackbox:empirical}, as stated in  Lemma \ref{lem:blackbox:util}. 
Finally, we show in Lemma \ref{lem:blackbox:apx} that the optimal value of LP \eqref{lp:blackbox:empirical} is  close to the optimal sender utility for $\lambda$ with high probability, and hence also in expectation, when $K=\poly(n,\frac{1}{\eps})$ is chosen appropriately; the proof of this fact invokes standard tail bounds as well as structural properties of linear program \eqref{lp:blackbox:empirical}, and exploits the fact that LP \eqref{lp:blackbox:empirical} relaxes the incentive compatibility constraint. We prove all three lemmas in Appendix~\ref{app:general:fptas}. Even though our proof of Lemma \ref{lem:blackbox:apx} is self-contained, we note that it can be shown to follow from  \cite[Theorem 6]{Weinberg14} with some additional work. 

\begin{lemma}\label{lem:blackbox:ic}
	Algorithm \ref{alg:blackbox} implements an $\eps$-incentive compatible  signaling scheme for prior distribution $\lambda$.
\end{lemma}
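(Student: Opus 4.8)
The plan is to show that the $\eps$-IC constraints of LP~\eqref{lp:blackbox:empirical}, which hold for the augmented empirical distribution $\tilde\lambda$ by construction, imply the $\eps$-IC constraints for the true prior $\lambda$ in expectation over the algorithm's internal randomness. The crux is the principle of deferred decisions: in Algorithm~\ref{alg:blackbox}, the index $\ell$ is drawn uniformly from $[K]$ and $\theta$ is \emph{planted} into slot $\ell$, while the other $K-1$ states are drawn i.i.d.\ from $\lambda$. Hence the unordered multiset $\tilde\lambda = \{\theta_1,\dots,\theta_K\}$ has exactly the same distribution as a multiset of $K$ i.i.d.\ draws from $\lambda$ — the planting is undetectable once we forget which slot was special. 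Equivalently, conditioned on the multiset $\tilde\lambda$, the input $\theta$ is a uniformly random element of it. This is the symmetry I will exploit.

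First I would fix $i,j \in [n]$ and write down the quantity we must bound, namely the expectation over $\theta \sim \lambda$ and over the algorithm's randomness of $\varphi(\theta,\sigma_i)\,(r_i(\theta) - r_j(\theta) + \eps)$, and argue it is $\geq 0$; this is precisely the statement that the implemented scheme $\varphi$ is $\eps$-IC for $\lambda$ (multiplying the constraint $\sum_\theta \lambda_\theta \varphi(\theta,\sigma_i) r_i(\theta) \geq \sum_\theta \lambda_\theta \varphi(\theta,\sigma_i)(r_j(\theta)-\eps)$ through). Here $\varphi(\theta,\sigma_i) = \Ex[\tilde\varphi(\theta_\ell,\sigma_i)]$ where the expectation is over the sampled multiset and the internal randomness of solving the LP; note $\tilde\varphi$ depends only on the multiset $\tilde\lambda$. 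Second, using the deferred-decisions symmetry, I would rewrite this expectation: sampling $\theta\sim\lambda$ and then $K-1$ further i.i.d.\ samples and planting $\theta$ at a uniform slot $\ell$ is the same as sampling the multiset $\tilde\lambda$ of $K$ i.i.d.\ draws and then letting $\theta=\theta_\ell$ for $\ell$ uniform in $[K]$. Under this coupling, for a \emph{fixed} multiset $\tilde\lambda$, the conditional expectation over the uniform choice of $\ell$ of $\tilde\varphi(\theta_\ell,\sigma_i)(r_i(\theta_\ell)-r_j(\theta_\ell)+\eps)$ equals $\frac{1}{K}\sum_{k=1}^K \tilde\varphi(\theta_k,\sigma_i)(r_i(\theta_k)-r_j(\theta_k)+\eps)$, which is exactly the $(i,j)$ relaxed-IC constraint of LP~\eqref{lp:blackbox:empirical} (rearranged), hence $\geq 0$ since $\tilde\varphi$ is feasible for that LP. Taking the outer expectation over $\tilde\lambda$ preserves nonnegativity, and since this holds for every pair $i,j$, the scheme $\varphi$ is $\eps$-IC for $\lambda$.

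I anticipate the main obstacle to be stating the deferred-decisions coupling cleanly enough that the exchange of expectations is rigorous: one must be careful that $\tilde\varphi$ is a (possibly randomized, but measurable) function of the multiset $\tilde\lambda$ only — not of $\ell$ or of which element was the "real" $\theta$ — so that conditioning on $\tilde\lambda$ and averaging over $\ell$ legitimately produces the empirical-average form of the constraint. A secondary subtlety is the multiset/ordering bookkeeping when $\lambda$ has atoms (so repeated states can occur): I would handle this by working with ordered tuples throughout and invoking exchangeability, i.e.\ the joint law of $(\theta_1,\dots,\theta_K)$ produced by the algorithm is exchangeable and has the same marginal structure as $K$ i.i.d.\ draws, which is all that is needed since both the LP and the averaging over $\ell$ are symmetric in the coordinates. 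Once the coupling is set up, the remainder is a one-line rearrangement of the LP constraint, so there are no further technical hurdles.
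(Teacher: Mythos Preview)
Your proposal is correct and follows essentially the same approach as the paper: condition on the augmented sample $\tilde{\lambda}$, invoke the principle of deferred decisions to conclude that the input $\theta$ is uniform on $\tilde{\lambda}$, use feasibility of $\tilde{\varphi}$ in LP~\eqref{lp:blackbox:empirical} to get the $\eps$-IC inequality conditionally, and then remove the conditioning by linearity of expectation. Your treatment is in fact more careful than the paper's about the exchangeability/ordering bookkeeping and about $\tilde{\varphi}$ being a function of the sample alone, but the underlying argument is the same.
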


\begin{lemma}\label{lem:blackbox:util}
	Assume $\theta \sim \lambda$, and assume the receiver follows the recommendations of Algorithm \ref{alg:blackbox}.  The expected  sender utility  equals the expected optimal value of the linear program \eqref{lp:blackbox:empirical} solved in Step \ref{step:lpempirical}. Both expectations are taken over the random input $\theta$ as well as internal randomness and Monte-Carlo sampling performed by the algorithm.
\end{lemma}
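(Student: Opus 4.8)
The plan is to apply the principle of deferred decisions together with linearity of expectation. The key observation is that, although Algorithm~\ref{alg:blackbox} draws the random index $\ell$ and then samples $\theta_1,\ldots,\theta_{\ell-1},\theta_{\ell+1},\ldots,\theta_K$ conditioned on placing the true input $\theta=\theta_\ell$ at position $\ell$, this two-stage process is distributionally identical to the following symmetric process: draw $\theta_1,\ldots,\theta_K$ i.i.d.\ from $\lambda$, then pick $\ell$ uniformly from $[K]$ and declare $\theta_\ell$ to be ``the input.'' Indeed, since $\theta$ itself is drawn from $\lambda$, the multiset $\tilde\lambda=\{\theta_1,\ldots,\theta_K\}$ in the algorithm has exactly the distribution of $K$ i.i.d.\ draws from $\lambda$, and $\ell$ is independent and uniform. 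Under this reformulation, the algorithm's behaviour factors cleanly: first $\tilde\lambda$ is sampled (determining the optimal solution $\tilde\varphi$ of LP~\eqref{lp:blackbox:empirical} and its objective value, which I will call $\mathrm{OPT}(\tilde\lambda)$), and then $\ell$ is drawn and the algorithm outputs a signal from $\tilde\varphi(\theta_\ell)$.

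First I would fix $\tilde\lambda$ and condition on it. Conditioned on $\tilde\lambda$, the algorithm's output is: choose $\ell\sim\mathrm{Unif}[K]$, then recommend action $i$ with probability $\tilde\varphi(\theta_\ell,\sigma_i)$; since the receiver follows the recommendation, the sender collects $s_i(\theta_\ell)$. Hence the expected sender utility conditioned on $\tilde\lambda$ is
\[
  \Ex_{\ell}\sum_{i=1}^n \tilde\varphi(\theta_\ell,\sigma_i)\, s_i(\theta_\ell)
  = \sum_{k=1}^K \frac1K \sum_{i=1}^n \tilde\varphi(\theta_k,\sigma_i)\, s_i(\theta_k),
\]
which is exactly the objective value $\mathrm{OPT}(\tilde\lambda)$ of LP~\eqref{lp:blackbox:empirical}. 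Then I would take the expectation over the draw of $\tilde\lambda$: by the tower rule, the overall expected sender utility equals $\Ex_{\tilde\lambda}[\mathrm{OPT}(\tilde\lambda)]$, i.e.\ the expected optimal value of the LP solved in Step~\ref{step:lpempirical}. That is precisely the claimed identity.

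The one genuine subtlety — and the step I expect to take a little care — is justifying the distributional equivalence between the algorithm's actual sampling procedure and the symmetric ``draw $K$ i.i.d.\ then pick a uniform index'' procedure; this is the ``deferred decisions'' step. It relies on the exchangeability of i.i.d.\ samples: conditioning a tuple of $K$ i.i.d.\ draws on the event that the $\ell$-th coordinate equals a fresh independent $\lambda$-sample, with $\ell$ uniform and independent, yields the same joint law as $K$ unconstrained i.i.d.\ draws with a uniform independent marker. Once this is in place, everything else is linearity of expectation and the definition of the LP objective. I should also note explicitly that the randomness being averaged over — the input $\theta$, the index $\ell$, the auxiliary samples, and the internal randomness used to sample from $\tilde\varphi(\theta_\ell)$ — is exactly the randomness referenced in the lemma statement, so no hidden conditioning is being swept aside.
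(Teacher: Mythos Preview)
Your proposal is correct and follows essentially the same approach as the paper: condition on the augmented sample $\tilde\lambda$, use the principle of deferred decisions to observe that the input $\theta$ is then uniform over $\tilde\lambda$, identify the conditional expected sender utility with the LP objective, and remove the conditioning by linearity of expectation. Your write-up is in fact more explicit than the paper's about the distributional-equivalence step, but the argument is the same.
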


\begin{lemma}\label{lem:blackbox:apx}
	Let  $OPT$ denote the expected sender utility induced by the optimal incentive compatible signaling scheme for distribution $\lambda$. When Algorithm \ref{alg:blackbox} is instantiated with $K \geq \frac{256 n^2}{\eps^4} \log(\frac{4n}{\eps})$ and its input  $\theta$ is drawn from $\lambda$,  the expected optimal value of the linear program \eqref{lp:blackbox:empirical} solved in Step \ref{step:lpempirical} is at least $OPT-\eps$.  Expectation is over the random input $\theta$ as well as the Monte-Carlo sampling performed by the algorithm.
\end{lemma}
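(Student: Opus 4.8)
The plan is to compare the optimal value of LP~\eqref{lp:blackbox:empirical} on the augmented empirical distribution $\tilde\lambda$ to $OPT$ by exhibiting a feasible solution to the empirical LP whose objective is close to $OPT$. Let $\varphi^*$ be the optimal (absolutely) IC signaling scheme for $\lambda$ achieving sender utility $OPT$. Since $\theta_\ell = \theta \sim \lambda$ and $\theta_1,\ldots,\theta_{\ell-1},\theta_{\ell+1},\ldots,\theta_K$ are i.i.d.\ from $\lambda$ (and $\ell$ is uniform and independent), the multiset $\tilde\lambda = \{\theta_1,\ldots,\theta_K\}$ is distributed exactly as $K$ i.i.d.\ draws from $\lambda$. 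So it suffices to show: with high probability over $K$ i.i.d.\ samples from $\lambda$, the empirical LP has a feasible solution of value at least $OPT-\eps$. The natural candidate is $\tilde\varphi(\theta_k,\sigma_i) := \varphi^*(\theta_k,\sigma_i)$, i.e., restrict $\varphi^*$ to the sampled states. This trivially satisfies the normalization and nonnegativity constraints. Its empirical objective is $\frac1K\sum_{k=1}^K\sum_i \varphi^*(\theta_k,\sigma_i) s_i(\theta_k)$, which is an empirical average of the bounded random variable $g(\theta) := \sum_i \varphi^*(\theta,\sigma_i) s_i(\theta) \in [-1,1]$ with mean $OPT$; by Hoeffding's inequality it is within $\eps/2$ of $OPT$ except with probability $2\exp(-K\eps^2/8)$.

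The one subtlety is incentive compatibility: $\varphi^*$ is IC for $\lambda$, meaning $\sum_\theta \lambda_\theta \varphi^*(\theta,\sigma_i)(r_i(\theta)-r_j(\theta)) \ge 0$ for all $i,j$, but the empirical version $\frac1K\sum_k \varphi^*(\theta_k,\sigma_i)(r_i(\theta_k)-r_j(\theta_k))$ need not be nonnegative. This is exactly why LP~\eqref{lp:blackbox:empirical} uses the \emph{relaxed} ($\eps$-IC) constraints. For each fixed pair $(i,j)$, the empirical quantity $\frac1K\sum_k \varphi^*(\theta_k,\sigma_i) r_i(\theta_k) - \frac1K\sum_k \varphi^*(\theta_k,\sigma_i) r_j(\theta_k)$ is an empirical average of a bounded ($[-2,2]$-valued, or $[-1,1]$ after centering) random variable whose expectation is $\ge 0$; by Hoeffding again, it is at least $-\eps/2$ — actually we want it at least $-\eps\cdot(\text{prob.\ of }\sigma_i)$, but since the relaxed constraint reads $\frac1K\sum_k \varphi^*(\theta_k,\sigma_i) r_i(\theta_k) \ge \frac1K\sum_k \varphi^*(\theta_k,\sigma_i)(r_j(\theta_k)-\eps)$, i.e.\ the slack term is $\eps\cdot\frac1K\sum_k\varphi^*(\theta_k,\sigma_i) \ge 0$; to be safe one instead shows the raw difference is $\ge -\eps/2$, which suffices when $\frac1K\sum_k\varphi^*(\theta_k,\sigma_i)$ is not too small, and handles the small-probability signals separately (or simply notes $-\eps/2 \ge -\eps\cdot(\text{anything}\le 1/2)$ may fail, so one argues the difference is $\ge -\eps\cdot\frac1K\sum_k\varphi^*(\theta_k,\sigma_i)$ directly by a relative/multiplicative concentration or by bounding $|r_i-r_j|\le 2$ against the scaled slack). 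The cleanest route: show each of the $n$ individual sums $\frac1K\sum_k\varphi^*(\theta_k,\sigma_i) r_i(\theta_k)$ and the $n^2$ cross sums concentrate around their means to within $\eps/4$, union-bound over the $O(n^2)$ events, so all relaxed IC constraints hold and the objective is within $\eps/2 < \eps$ of $OPT$, simultaneously, with probability $\ge 1 - O(n^2)\exp(-K\eps^2/c)$ for an absolute constant $c$.

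Finally I would pin down the sample size: with $O(n^2)$ bad events each of probability $\le 2\exp(-K\eps^2/32)$ (say), choosing $K \ge \frac{256 n^2}{\eps^4}\log\frac{4n}{\eps}$ makes the total failure probability at most, e.g., $\eps/2$ (in fact far smaller), so the empirical optimum is $\ge OPT - \eps/2$ with probability $\ge 1-\eps/2$; since the empirical optimum is always $\ge -1$ (the all-one-signal scheme is feasible and has value $\ge -1$, or one can bound it trivially), taking expectations gives expected empirical optimum $\ge (1-\eps/2)(OPT-\eps/2) + (\eps/2)(-1) \ge OPT - \eps$ using $|OPT|\le 1$. The main obstacle is the second paragraph's point: getting the relaxed IC constraints to hold simultaneously for all $(i,j)$ while losing only $\eps$ and not more — this is where the union bound over $n^2$ pairs forces the $n^2$ and the extra $\log(n/\eps)$ factors in $K$, and where one must be careful that the $\eps$-relaxation in LP~\eqref{lp:blackbox:empirical} (whose effective slack is $\eps$ times a probability that could be as small as $0$) is enough — which it is, because we prove the \emph{unrelaxed} empirical difference is already $\ge -\eps/4 > -\eps \cdot 0$ is vacuous, so instead we use that it is $\ge -\eps/4$ and the relaxed constraint only requires $\ge -\eps \cdot(\text{prob of }\sigma_i)$; when that probability is $\ge 1/4$ the relaxation gives us room $\ge \eps/4$, and when it is $< 1/4$ we note $|\frac1K\sum_k \varphi^*(\theta_k,\sigma_i)(r_i(\theta_k)-r_j(\theta_k))| \le 2\cdot\frac1K\sum_k\varphi^*(\theta_k,\sigma_i)$ deterministically, which is $\le \eps\cdot\frac1K\sum_k\varphi^*(\theta_k,\sigma_i)$ once... this last inequality needs $2 \le \eps$, false — so the correct fix is to note the constraint for small-probability $\sigma_i$ can be made to hold by instead slightly mixing $\varphi^*$ toward a trivially-IC scheme on those signals, or by observing the whole argument only needs the objective and IC to hold and small-probability signals contribute $O(\eps)$ to the objective anyway and can be merged. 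I would present the merge argument: redistribute the mass of any signal $\sigma_i$ with empirical probability below $\eps/2$ onto the recommendation that is best for the receiver among present states, losing at most $\eps/2$ more in the objective, after which all surviving signals have probability $\ge \eps/2$ and the $\eps$-relaxation comfortably absorbs the $\eps/4$ concentration error.
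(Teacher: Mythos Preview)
Your high-level approach is the same as the paper's: exhibit a near-optimal IC scheme as a feasible solution to the empirical LP by (i) redirecting the mass of low-probability signals to ``honest'' receiver-optimal recommendations, and (ii) using Hoeffding plus a union bound over the $O(n^2)$ IC constraints for the remaining signals. However, your final quantitative claims are both wrong as stated, and the proof does not go through with your parameters.

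First, with threshold $\eps/2$ on (empirical) signal probability, the total mass of merged signals can be as large as $n\cdot \eps/2$, so the objective loss can be $\Theta(n\eps)$, not ``at most $\eps/2$.'' Second, a surviving signal with probability $\alpha \geq \eps/2$ gives the relaxed IC constraint slack only $\eps\alpha \geq \eps^2/2$, which does \emph{not} absorb an additive concentration error of $\eps/4$. The fix is exactly what the given sample size $K=\tfrac{256n^2}{\eps^4}\log\tfrac{4n}{\eps}$ is calibrated for: with this $K$, Hoeffding gives additive error $O(\eps^2/n)$ (not $\eps/4$) on each of the $n^2+1$ quantities, and the correct small-signal threshold is $\eps/(4n)$ (not $\eps/2$). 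With threshold $\eps/(4n)$ the total small-signal mass is at most $n\cdot \eps/(4n)=\eps/4$, so the objective loss is at most $\eps/2$; and for a surviving signal the relaxation slack is at least $\eps\cdot \eps/(4n)=\eps^2/(4n)$, which now matches the concentration error. This is precisely how the paper proceeds (Claims in the appendix): it first constructs a fixed scheme $\hat\varphi$ from $\varphi^*$ by rerouting signals of \emph{true} probability below $\eps/(4n)$ to honest recommendations --- note the paper modifies on the true distribution rather than post hoc on the sample, which is slightly cleaner since $\hat\varphi$ is then fixed and concentration applies directly --- and then shows $\hat\varphi$ is $\eps$-IC and $\eps/4$-optimal on $\tilde\lambda$ with probability $1-\eps/8$. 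Your plan is salvageable with these corrected parameters; the rambling middle portion (the attempts involving thresholds $1/4$ and the deterministic bound $2\leq \eps$) should simply be deleted.
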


\subsection{Information-Theoretic Barriers}
\label{sec:blackbox:barriers}

We now show that our bi-criteria FPTAS is close to the best we can hope for: there is no bounded-sample signaling scheme in the black box model which guarantees incentive compatibility and $c$-optimality for any constant $c < 1$, nor is there such an algorithm which guarantees optimality and $c$-incentive compatibility for any $c < \frac{1}{4}$. Formally, we consider algorithms which implement direct signaling schemes. Such an algorithm takes as input a black-box distribution $\lambda$ supported on $[-1,1]^{2n}$ and a state of nature $\theta \in [-1,1]^{2n}$, where $n$ is the number of actions, and outputs a signal $\sigma \in \set{\sigma_1,\ldots,\sigma_n}$ recommending an action. We say such an algorithm is $\eps$-incentive compatible [$\eps$-optimal] if for every distribution $\lambda$ the signaling scheme $\A(\lambda)$ is $\eps$-incentive compatible [$\eps$-optimal] for $\lambda$. We define the \emph{sample complexity} $SC_\A(\lambda,\theta)$  as the expected number of queries made by $\A$ to the blackbox given inputs $\lambda$ and $\theta$, where expectation is taken  the randomness inherent in the Monte-Carlo sampling from $\lambda$ as well as any other internal coins of $\A$.  We show that the worst-case sample complexity is not bounded by any function of $n$ and the approximation parameters  unless we allow bi-criteria loss in both optimality and incentive compatibility. More so, we show a stronger negative result for exactly incentive compatible algorithms:  the average sample complexity over $\theta \sim \lambda$ is also not bounded by a function of $n$ and the suboptimality parameter. Whereas our results imply that we should give up on exact incentive compatibility, we leave open the question of whether an optimal and $\eps$-incentive compatible algorithm exists with $\poly(n,\frac{1}{\eps})$  average case (but unbounded worst-case)  sample complexity.

\begin{theorem}\label{thm:hardness:blackbox}
	The following 
	hold for every algorithm $\A$ for  Bayesian Persuasion  in the black-box model:
	\begin{enumerate}[label=(\alph*)]
		\item If $\A$ is incentive compatible and $c$-optimal for $c< 1$, then for every integer $K$ there is a distribution $\lambda=\lambda(K)$ on 2 actions and 2 states of nature such that $\Ex_{\theta \sim \lambda} [SC_\A(\lambda,\theta) ] > K$. 
		\item If $\A$ is optimal and $c$-incentive compatible for $c<\frac{1}{4}$, then for every integer $K$ there is a distribution $\lambda=\lambda(K)$ on 3 actions and 3 states of nature, and $\theta$ in the support of $\lambda$,  such that  $SC_\A(\lambda,\theta)>K$.
	\end{enumerate}
\end{theorem}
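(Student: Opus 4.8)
The plan is to construct, for each part, a small family of instances that are information-theoretically indistinguishable under few oracle queries, yet force a wedge between what the algorithm can output and what optimality (resp.\ incentive compatibility) demands. For part (a), I would work with $n=2$ actions and two states of nature $\theta^0, \theta^1$. The idea is to make the prior put almost all its mass on a ``bland'' state $\theta^0$ under which the optimal scheme does something trivial (say, always recommends action $1$), while a tiny probability $\delta$ is placed on a ``pivotal'' state $\theta^1$ whose presence changes the IC constraint in a way that forces the optimal scheme to sometimes recommend action $2$ and extract strictly more sender utility. The key quantitative point: as $\delta \to 0$, an algorithm making only $K$ queries will, with probability $\ge (1-\delta)^K \ge 1 - K\delta$, never see $\theta^1$ at all, and conditioned on that event its behavior is identical to its behavior on the $\delta=0$ instance. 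On the $\delta=0$ instance exact incentive compatibility pins down the scheme (or at least its sender value) to something bounded away — by a constant factor — from the optimum of the $\delta>0$ instance. Choosing $\delta = \delta(K)$ small enough that $K\delta$ is negligible, any $c$-optimal IC algorithm with $c<1$ must therefore make more than $K$ queries in expectation over $\theta\sim\lambda$; since the input $\theta$ is $\theta^0$ with probability $1-\delta$, bounding the \emph{average} sample complexity (not just worst case) is exactly what this argument gives. The instance must be designed so that the gap between ``IC-optimal given only $\theta^0$-knowledge'' and ``true optimum'' is a fixed constant independent of $\delta$; I would engineer the payoffs so the pivotal state flips a binding IC constraint, e.g.\ by a zero-sum-like gadget on action $2$ whose conditional receiver payoff only clears the IC threshold once the $\theta^1$ event is correctly accounted for.

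For part (b), I would similarly use $n=3$ actions and three states, but now the algorithm is promised to be exactly optimal and we attack its incentive compatibility. Here the construction should make the optimal sender utility \emph{robust} — achievable by many schemes, in particular without needing to distinguish certain rare states — but make \emph{exact} IC require knowing the rare state precisely. Concretely: two of the three actions behave symmetrically and the third is a decoy; there is a rare state $\theta^\star$ (probability $\delta$) under which the receiver's posterior payoff for the recommended action is barely above the alternative, and an optimal-value-preserving scheme that never queries often enough to locate $\theta^\star$ must, on $\theta^\star$, recommend an action that violates IC by a constant amount — I expect the natural construction to yield a violation approaching $\tfrac14$, matching the stated threshold $c<\tfrac14$. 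Because we only need \emph{some} $\theta$ in the support with large sample complexity (not an average-case statement), we may take $\theta=\theta^\star$ itself: the algorithm, to be $c$-IC at $\theta^\star$ for $c<\tfrac14$, must detect that it is in the $\delta$-probability regime, which forces $\omega(1/\delta)$ queries; setting $\delta=\delta(K)$ small gives $SC_\A(\lambda(K),\theta^\star)>K$.

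The technical backbone common to both parts is a coupling / indistinguishability lemma: if two priors $\lambda, \lambda'$ agree except on an event of probability $\le \delta$, then for any algorithm making at most $K$ queries, the total variation distance between its output distributions on a fixed common input $\theta$ is $O(K\delta)$; consequently an algorithm cannot simultaneously meet a constraint on $\lambda$ that is violated (by a constant) on $\lambda'$ unless its query count blows up. I would prove this by the standard ``bad event = seeing a sample in the $\delta$-differing region'' argument and a union bound over the $K$ samples. The main obstacle I anticipate is the \emph{gadget design}: getting the payoff vectors so that (i) the optimal sender utility jumps by a constant (part a), or the minimal IC violation of any optimal-value scheme is $\ge \tfrac14 - o(1)$ (part b), while (ii) everything stays inside $[-1,1]^{2n}$ and uses only $2$ or $3$ states. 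I would start from the prosecutor-style binary examples and the zero-sum regular-action gadget already used in the proof of Theorem~\ref{thm:hardness}, since those are exactly the primitives that make a single rare state pivotal for an IC constraint, and tune the constants to hit the $c<1$ and $c<\tfrac14$ boundaries.
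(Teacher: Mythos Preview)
Your overall strategy matches the paper's exactly: for each part, exhibit two distributions $\lambda,\lambda'$ within total variation distance $O(\delta)$ such that the postulated guarantees force the algorithm's output to differ between them on some common input, then invoke the standard $\Omega(1/\delta)$ sample lower bound for distinguishing. Your plan for part~(a) is essentially identical to the paper's construction (the paper uses a ``rain/shine'' gadget in which exact IC on the point-mass distribution pins sender utility to $0$, while the $\delta$-perturbed distribution admits an IC scheme with sender utility $1$).

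For part~(b), however, your framing reverses the roles of optimality and IC relative to what the construction actually needs, and this would make the gadget hard to build as you describe it. You propose to make the optimal sender utility ``robust --- achievable by many schemes'' and have IC be the constraint that requires detecting the rare state. But if optimality is robust, it imposes no constraint on the algorithm's behavior, and IC alone (being a feasibility condition, not an objective) will typically be satisfiable without distinguishing. The paper does the opposite: on $\lambda'$, \emph{exact optimality} (together with $c$-IC) is the brittle constraint --- it forces the scheme to recommend a particular action on a test state $\theta_2$ with probability at least $1-4c>0$ --- while on the nearby $\lambda$, $c$-IC (for any $c<\tfrac12$) forbids ever recommending that action. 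The test state $\theta_2$ is in the support of \emph{both} distributions, so receiving it as input reveals nothing; only the oracle samples can distinguish. Your phrase ``$c$-IC at $\theta^\star$'' is also imprecise, since IC is a property of a signal's posterior, not of a single input state. None of this invalidates your high-level plan, but when you build the three-action instance you should let optimality be the constraint that pins down behavior on one distribution and IC the constraint that is violated by that behavior on the other.
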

Our proof of each part of this theorem involves constructing a pair of distributions $\lambda$ and $\lambda'$ which are arbitrarily close in statistical distance, but with the property that any algorithm with the postulated guarantees must distinguish between $\lambda$ and $\lambda'$. 
We defer the proof to Appendix~\ref{app:general:barriers}.

\section*{Acknowledgments}
We thank David Kempe for helpful comments on an earlier draft of this paper. We also thank the anonymous reviewers for helpful feedback and suggestions.


{
\bibliography{refer}
\bibliographystyle{abbrvnat}               
}

\appendix
\newpage

\section{Additional Discussion of Connections to Bayesian Mechanism Design}
\label{app:intro:related_bmd}


Section \ref{sec:iid}, which considers persuasion with independent and identically-distributed actions, relates to two ideas from auction theory. First, our symmetrization result in Section \ref{sec:iid:symmetry} is similar to  that of \citet{weinberg_symmetry}, but involves an additional ingredient which is necessary in our case: not only is the posterior type distribution for a recommended action (the winning bidder in the auction analogy) independent of the identity of the action, but so is the posterior type distribution of an unrecommended action (losing bidder). Second, our algorithm for computing the optimal scheme in Section \ref{sec:iid:optimal}  involves a connection  to Border's characterization of the space of feasible reduced-form single-item auctions \cite{Border91,Border07}, as well as its algorithmic properties \cite{Cai12a,Saeed12}. However, unlike in the case of single-item auctions, this connection hinges crucially on the symmetries of the optimal scheme, and fails to generalize to the case of persuasion with independent non-identical actions (analogous to independent non-identical bidders)  as we show in Section \ref{sec:indep}. We view this as evidence that persuasion and auction design --- while bearing similarities and technical connections --- are importantly different.

Section \ref{sec:indep} shows that our Border's theorem-based approach in Section \ref{sec:iid} can not be extended to independent non-identical actions.  Our starting point are the results of \citet{Gopalan15}, who rule out Border's-theorem like characterizations for a number of mechanism design settings by showing the \#P-hardness of computing the maximum expected  revenue or welfare. Our results similarly show that it is \#P hard to compute the maximum expected sender utility, but our reduction is much more involved. Specifically, whereas we also reduce from the \#P-hard problem of computing the Khintchine constant of a vector, unlike in \cite{Gopalan15} our reduction must go through the membership problem of a polytope which we use to encode the Khintchine constant computation. This detour seems unavoidable due to the different nature of the incentive-compatibility constraints placed on a signaling scheme.\footnote{In \cite{Gopalan15}, Myerson's characterization is used to show that optimal mechanism design in a public project setting directly encodes computation of the Khintchine constant. No analogous direct connection seems to hold here.} Specifically, we present an intricate reduction from membership testing in this ``Khintchine polytope'' to an optimal persuasion problem with independent actions.

Our algorithmic result for the black box model in Section \ref{sec:general} draws inspiration from, and is technically related to, the work in \cite{Cai12a, Saeed12,  Cai12b, Weinberg14} on algorithmically efficient mechanisms for multi-dimensional settings. Specifically, an alternative algorithm for our problem can be derived using the framework of \emph{reduced forms} and \emph{virtual welfare} of \citet{Cai12b} with significant additional work.\footnote{We thank an anonymous reviewer for pointing out this connection.} For this, a different reduced form is needed which allows for an unbounded ``type space'', and maintains the correlation information across actions necessary for evaluating the persuasion notion of incentive compatibility, which is importantly different from incentive compatibility in mechanism design. Such a reduced form exists, and the resulting algorithm is complex and invokes the ellipsoid algorithm as a subroutine. The algorithm we present here is much simpler and more efficient both in terms of runtime and samples from the distribution $\lambda$ over states of nature, with the main computational step being a single explicit linear program which solves for the optimal signaling scheme on a sample $\tilde{\lambda}$ from $\lambda$. The analysis of our algorithm is also more straightforward. This is possible in our setting due to our different notion of incentive compatibility, which permits reducing incentive compatibility on $\lambda$ to incentive compatibility on the sample $\tilde{\lambda}$ using the principle of deferred decisions.


\newpage

\section{Omissions from Section \ref{sec:iid}}
\label{app:iid}

\subsection{Symmetry of the Optimal Scheme (Theorem~\ref{thm:symmetry})}
\label{app:iid:symmetry}

To prove Theorem~\ref{thm:symmetry}, we need two closure properties of optimal signaling schemes --- with respect to permutations and convex combinations. We use $\pi$ to denote a permutation of $[n]$, and let $\SS_n$ denote the set of all such permutations. We define the permutation $\pi(\theta)$ of a state of nature $\theta \in [m]^n$ so that $(\pi(\theta))_j = \theta_{\pi(j)}$, and similarly the permutation of a signal $\sigma_i$ so that $\pi(\sigma_i) = \sigma_{\pi(i)}$. Given a signature  $\M=\{ (M^{\sigma_i},\sigma_i)\}_{i \in [n]}$, we define the permuted signature $\pi(\M) = \{ (\pi  M^{\sigma_i},\pi(\sigma_i))\}_{i \in [n]}$, where $\pi M$ denotes applying permutation $\pi$ to the rows of a matrix $M$.

\begin{lemma}
	\label{lem:Permu-opt}Assume the action payoffs  are i.i.d., and let $\pi \in \SS_n$ be an arbitrary permutation.  If $\M$ is the signature of a signaling scheme $\varphi$, then  $\pi(\M)$ is the signature of the scheme  $\varphi_\pi$ defined by $\varphi_\pi(\theta) = \pi(\varphi(\pi^{-1}(\theta)))$. Moreover, if $\varphi$ is incentive compatible and optimal, then so is $\varphi_\pi$.
\end{lemma}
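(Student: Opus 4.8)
\textbf{Proof plan for Lemma~\ref{lem:Permu-opt}.}
The plan is to verify two things in sequence: first, that $\pi(\M)$ really is the signature of $\varphi_\pi$ (a direct computation using the i.i.d.\ structure of $\lambda$), and second, that the transformation $\varphi \mapsto \varphi_\pi$ preserves incentive compatibility and the value of the sender's objective. I would begin by recalling that for an i.i.d.\ prior the probability $\lambda(\theta) = \prod_i q_{\theta_i}$ is invariant under permuting coordinates, i.e.\ $\lambda(\pi^{-1}(\theta)) = \lambda(\theta)$, and that the type-indicator matrix transforms as $M^{\pi^{-1}(\theta)} = \pi^{-1} M^{\theta}$ (permuting which action has which type permutes the rows of $M^\theta$ correspondingly). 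Using the definition $\varphi_\pi(\theta,\sigma_{\pi(i)}) = \varphi(\pi^{-1}(\theta),\sigma_i)$, I would compute the signature component of $\varphi_\pi$ for signal $\sigma_{\pi(i)}$:
\[
\sum_\theta \lambda(\theta)\,\varphi_\pi(\theta,\sigma_{\pi(i)})\,M^\theta
= \sum_\theta \lambda(\theta)\,\varphi(\pi^{-1}(\theta),\sigma_i)\,M^\theta,
\]
and then substitute $\theta' = \pi^{-1}(\theta)$ so that $\theta = \pi(\theta')$, $\lambda(\theta) = \lambda(\theta')$, and $M^\theta = M^{\pi(\theta')} = \pi M^{\theta'}$, obtaining $\pi \left( \sum_{\theta'} \lambda(\theta')\varphi(\theta',\sigma_i) M^{\theta'}\right) = \pi M^{\sigma_i}$. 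This is exactly the $\sigma_{\pi(i)}$-component of $\pi(\M)$, establishing the first claim. Along the way I'd note $\varphi_\pi$ is a valid signaling scheme (nonnegativity and the row-sum-to-one constraint are inherited since $\pi$ is a bijection on $[n]$).

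For the second claim, I would read both incentive compatibility and the sender's objective off the signature, as the paper sets up just before the theorem: the IC constraints are $\bvec{\rho}\cdot M^{\sigma_i}_i \ge \bvec{\rho}\cdot M^{\sigma_i}_j$ for all $i,j$, and the sender's utility is $\sum_i \bvec{\xi}\cdot M^{\sigma_i}_i$. The key observation is that the "recommended-action row" is preserved under the permutation: the signal $\sigma_{\pi(i)}$ of $\varphi_\pi$ has signature matrix $\pi M^{\sigma_i}$, whose row indexed by the recommended action $\pi(i)$ is $(\pi M^{\sigma_i})_{\pi(i)} = M^{\sigma_i}_i$. Hence the diagonal terms defining both the objective and the LHS of the IC constraints are merely reindexed by $\pi$, while the off-diagonal rows $(\pi M^{\sigma_i})_{\pi(j)} = M^{\sigma_i}_j$ are also just reindexed; so every IC constraint for $\varphi_\pi$ is a relabeling of a corresponding IC constraint for $\varphi$, and $u_s(\varphi_\pi,\lambda) = \sum_i \bvec{\xi}\cdot M^{\sigma_i}_i = u_s(\varphi,\lambda)$. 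Since $\varphi$ is optimal and $\varphi_\pi$ is feasible (IC) with the same sender utility, $\varphi_\pi$ is optimal too.

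I don't anticipate a serious obstacle here — the lemma is essentially a bookkeeping argument. The one place to be careful is getting the direction of the permutations consistent: the paper defines $(\pi(\theta))_j = \theta_{\pi(j)}$ and $\varphi_\pi(\theta) = \pi(\varphi(\pi^{-1}(\theta)))$, and I need the change of variables $\theta' = \pi^{-1}(\theta)$ to line up with the row-permutation convention $M^{\pi(\theta')} = \pi M^{\theta'}$ so that the $\pi$ factors come out on the correct side. I would double-check this by tracking a single coordinate: action $\pi(j)$ in state $\pi(\theta')$ has type $(\pi(\theta'))_{\pi(j)}$, which under the convention equals $\theta'_{\pi(\pi(j))}$ — here one must instead verify directly that action $a$ in $\pi(\theta')$ has type $\theta'_{?}$; the cleanest route is to note $M^{\pi(\theta')}_{a,k} = 1 \iff (\pi(\theta'))_a = k \iff \theta'_{\pi(a)} = k \iff M^{\theta'}_{\pi(a),k}=1$, i.e.\ row $a$ of $M^{\pi(\theta')}$ equals row $\pi(a)$ of $M^{\theta'}$, which is precisely "$\pi$ applied to the rows" in the paper's sense. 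With that convention pinned down, the substitution goes through cleanly and the rest is immediate.
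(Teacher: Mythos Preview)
Your proposal is correct and follows essentially the same route as the paper's proof: compute the signature of $\varphi_\pi$ by the change of variables $\theta' = \pi^{-1}(\theta)$, using $\lambda(\pi(\theta')) = \lambda(\theta')$ (from the i.i.d.\ assumption) and $M^{\pi(\theta')} = \pi M^{\theta'}$ to obtain $\pi M^{\sigma_i}$ as the matrix attached to signal $\sigma_{\pi(i)}$; then read incentive compatibility and the sender's objective off the signature via $(\pi M^{\sigma_i})_{\pi(j)} = M^{\sigma_i}_j$. The permutation-direction bookkeeping you flag as the one delicate point is indeed the only thing to keep straight, and the paper handles it the same way.
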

\begin{proof}
	Let  $\M=\set{ (M^{\sigma},\sigma)}_{\sigma \in \Sigma}$ be the signature of $\varphi$, as given in the statement of the lemma. We first show that $\pi(\M) =\set{ (\pi M^{\sigma},\pi(\sigma))}_{\sigma \in \Sigma}$ is realizable as the signature of the scheme $\varphi_\pi$. By definition, it suffices to show that  $\sum_{\theta} \lambda(\theta) \varphi_\pi(\theta, \pi(\sigma)) M^\theta  =\pi M^\sigma$ for an arbitrary signal $\pi(\sigma)$.
	\begin{align*}
	\sum_{\theta} \lambda(\theta) \varphi_\pi(\theta, \pi(\sigma)) M^\theta  &= \sum_{\theta} \lambda(\theta) \varphi(\pi^{-1}(\theta), \sigma) M^\theta & \mbox{(by definition of $\varphi_\pi$)}  \\
	&= \pi \sum_{\theta \in \Theta} \lambda(\theta) \varphi(\pi^{-1}(\theta), \sigma) (\pi^{-1} M^\theta ) & \mbox{(by linearity of permutation)} \\
	&= \pi \sum_{\theta \in \Theta} \lambda(\theta) \varphi(\pi^{-1}(\theta), \sigma)  M^{\pi^{-1}(\theta)} & \mbox{} \\
	&= \pi \sum_{\theta \in \Theta} \lambda(\pi^{-1}(\theta)) \varphi(\pi^{-1}(\theta), \sigma)   M^{\pi^{-1}(\theta)} & \mbox{(Since $\lambda$ is i.i.d.)} \\
	&= \pi \sum_{\theta' \in \Theta} \lambda(\theta') \varphi(\theta', \sigma)   M^{\theta'} & \mbox{(by renaming $\pi^{-1}(\theta)$ to $\theta'$)} \\
	&=  \pi M^\sigma & \mbox{(by definition of $M^\sigma$)}
	\end{align*}
	
	Now, assuming $\varphi$ is incentive compatible, we check that $\varphi_\pi$ is incentive compatible by verifying the relevant inequality for its signature.
	\begin{align*}
	\bvec{\rho} \cdot (\pi M^{\sigma_i})_{\pi(i)} -  \bvec{\rho} \cdot (\pi M^{\sigma_i})_{\pi(j)} =    \bvec{\rho} \cdot  M^{\sigma_i}_i - \bvec{\rho} \cdot  M^{\sigma_i}_j  
	\geq 0
	\end{align*}
	Moreover, we show that the sender's utility is the same for $\varphi$ and $\varphi_\pi$, completing the proof.
	\begin{align*}
	\bvec{\xi} \cdot (\pi M^{\sigma_i})_{\pi(i)} &=  \bvec{\xi} \cdot ( M^{\sigma_i})_{i}
	\end{align*}
\end{proof}
\begin{lemma}
	\label{lem:Convex-opt}Let $t \in [0,1]$. If  $\A= (A^{\sigma_1},\ldots, A^{\sigma_n})$ is the signature of scheme $\varphi_A$,  and $\B= (B^{\sigma_1},\ldots, B^{\sigma_n})$ is the signature of a scheme $\varphi_B$, then  their convex combination $\C = (C^{\sigma_1},\ldots, C^{\sigma_n})$ with $C^{\sigma_i} = t A^{\sigma_i} + (1-t) B^{\sigma_i}$ is the signature of the scheme $\varphi_C$ which, on input $\theta$, outputs $\varphi_A(\theta)$ with probability $t$ and $\varphi_B(\theta)$ with probability $1-t$. Moreover, if $\varphi_A$ and $\varphi_B$ are both optimal and incentive compatible, then so is $\varphi_C$.
\end{lemma}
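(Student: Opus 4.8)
The plan is to verify everything directly: $\varphi_C$ is a legitimate signaling scheme, its signature is the prescribed convex combination $\C$, and the two properties of interest — incentive compatibility and optimality — are preserved simply because both are governed by quantities that depend \emph{linearly} on the signature. So the proof is essentially an exercise in unwinding definitions plus linearity of $\sum_\theta \lambda(\theta)(\cdot)$.

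First I would note that $\varphi_C$ is well-defined: on input $\theta$ it outputs a signal drawn from the distribution assigning $\sigma_i$ probability $t\,\varphi_A(\theta,\sigma_i) + (1-t)\,\varphi_B(\theta,\sigma_i)$, which is a valid distribution over $\Sigma$ for every $\theta$, being a convex combination of two such distributions. Next I would compute its signature: by definition the $i$-th matrix is $\sum_\theta \lambda(\theta)\,\varphi_C(\theta,\sigma_i)\,M^\theta$, and substituting $\varphi_C(\theta,\sigma_i) = t\,\varphi_A(\theta,\sigma_i) + (1-t)\,\varphi_B(\theta,\sigma_i)$ and splitting the sum yields exactly $t A^{\sigma_i} + (1-t) B^{\sigma_i} = C^{\sigma_i}$. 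Hence $\C$ is realizable, with $\varphi_C$ as a witness.

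For incentive compatibility I would recall that, in signature space, the IC constraints read $\bvec{\rho}\cdot M^{\sigma_i}_i \ge \bvec{\rho}\cdot M^{\sigma_i}_j$ for all $i,j\in[n]$ — linear inequalities in the rows of the $M^{\sigma_i}$. Since $C^{\sigma_i}_k = t A^{\sigma_i}_k + (1-t) B^{\sigma_i}_k$ for every row $k$, and $\A$, $\B$ each satisfy these inequalities, so does their convex combination; thus $\varphi_C$ is IC. Finally, for optimality I would use that the sender's expected utility $u_{s}(\varphi,\lambda) = \sum_{i} \bvec{\xi}\cdot M^{\sigma_i}_i$ is likewise linear in the signature, so $u_{s}(\varphi_C,\lambda) = t\,u_{s}(\varphi_A,\lambda) + (1-t)\,u_{s}(\varphi_B,\lambda)$; if $\varphi_A$ and $\varphi_B$ both attain the optimal value $\mathrm{OPT}$ over IC schemes, then $\varphi_C$ — which is itself IC by the previous step — attains $t\cdot\mathrm{OPT} + (1-t)\cdot\mathrm{OPT} = \mathrm{OPT}$ and is therefore optimal.

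I do not anticipate a genuine obstacle; this is the routine ``linearity'' half of the symmetrization argument (the ``permutation'' half being Lemma~\ref{lem:Permu-opt}). The one point requiring care is ordering: since ``optimal'' is meant relative to the feasible set of incentive-compatible schemes, one must establish that $\varphi_C$ is IC \emph{before} concluding it is optimal, which is why the IC step precedes the optimality step above.
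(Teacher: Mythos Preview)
Your proposal is correct and matches the paper's approach: the paper simply notes that the optimization problem in Figure~\ref{fig:optiid} is a linear program with a convex feasible set and a convex family of optimal solutions, and explicitly omits the straightforward details. You have spelled out exactly those details (linearity of the signature map, the IC constraints, and the objective), so your write-up is a faithful expansion of the paper's one-line argument.
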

\begin{proof}
	This follows almost immediately from the fact that the optimization problem in Figure \ref{fig:optiid} is a linear program, with a convex feasible set and a convex family of optimal solutions. We omit the straightforward details.
\end{proof}

\subsubsection*{Proof of Theorem \ref{thm:symmetry}}
Given an optimal and incentive compatible signaling scheme $\varphi$ with signature  $\{ (M^{\sigma_i},\sigma_i)\}_{i \in [n]}$, we show  the existence of a symmetric optimal and incentive-compatible scheme  of the form
in Definition \ref{def:symmetry}. According to Lemma \ref{lem:Permu-opt}, for $\pi \in \SS_n$ the signature
$\{ (\pi M^{\sigma_i},\pi(\sigma_i))\}_{i \in [n]}$ ---  equivalently written as $\{ (\pi M^{\sigma_{\pi^{-1}(i)}}, \sigma_{i}\}_{i \in [n]}$ ---  corresponds to the optimal incentive compatible scheme $\varphi_\pi$.
Invoking Lemma \ref{lem:Convex-opt}, the signature
\[
\{ (A^{\sigma_i},\sigma_i)\}_{i \in [n]}= \{ (\frac{1}{n!}\sum_{\pi\in\SS_n} \pi M^{\sigma_{\pi^{-1}(i)}},\sigma_i)\}_{i \in [n]}
\]
also corresponds to an optimal and incentive compatible scheme, namely the scheme which draws a permutation $\pi$ uniformly at random, then signals according to $\varphi_\pi$.

Observe that the $i$th row of the matrix $\pi M^{\sigma_{\pi^{-1}(i)}}$
is the $\pi^{-1}(i)$th row of the matrix $M^{\sigma_{\pi^{-1}(i)}}$. Expressing $A^{\sigma_i}_i$ as a sum over permutations $\pi \in \SS_n$, and grouping the sum by $k=\pi^{-1}(i)$,  we can write
\begin{align*}
A^{\sigma_i}_i &= \frac{1}{n!}\sum_{\pi\in\SS_n} [\pi M^{\sigma_{\pi^{-1}(i)}}]_{i} \\
&= \frac{1}{n!}\sum_{\pi\in\SS_n}  M^{\sigma_{\pi^{-1}(i)}}_{\pi^{-1}(i)} \\
&= \frac{1}{n!} \sum_{k=1}^n    M^{\sigma_k}_{k} \cdot \left|\set{\pi \in \SS_n : \pi^{-1}(i) =k}\right| \\
&= \frac{1}{n!} \sum_{k=1}^n    M^{\sigma_k}_{k} \cdot (n-1)! \\
&= \frac{1}{n} \sum_{k=1}^n    M^{\sigma_k}_{k},
\end{align*}
%
%
which does not depend on $i$. Similarly, the $j$th row of the matrix $\pi M^{\sigma_{\pi^{-1}(i)}}$ is the $\pi^{-1}(j)$th row of the matrix $M^{\sigma_{\pi^{-1}(i)}}$. For $j \neq i$, expressing $A^{\sigma_i}_j$ as a sum over permutations $\pi \in \SS_n$, and grouping the sum by $k=\pi^{-1}(i)$ and $l = \pi^{-1}(j)$,  we can write
\begin{align*}
A^{\sigma_i}_j &= \frac{1}{n!}\sum_{\pi\in\SS_n} [\pi M^{\sigma_{\pi^{-1}(i)}}]_{j} \\
&= \frac{1}{n!}\sum_{\pi\in\SS_n}  M^{\sigma_{\pi^{-1}(i)}}_{\pi^{-1}(j)} \\
&= \frac{1}{n!} \sum_{k\neq l}    M^{\sigma_k}_{l} \cdot \left|\set{\pi \in \SS_n : \pi^{-1}(i) =k, \pi^{-1}(j)=l}\right| \\
&= \frac{1}{n!} \sum_{k \neq l}    M^{\sigma_k}_{l} \cdot (n-2)! \\
&= \frac{1}{n(n-1)} \sum_{k \neq l}    M^{\sigma_k}_{l},
\end{align*}
which does not depend on $i$ or $j$. Let 
\begin{eqnarray*}
	\bvec{x} & = & \frac{1}{n}\sum_{k=1}^{n} M^{\sigma_k}_{k}; \\
	\bvec{y} & = & \frac{1}{n(n-1)}\sum_{k \not = l} M^{\sigma_k}_{l}. 
\end{eqnarray*}
The signature $\{ (A^{\sigma_i},\sigma_i)\}_{i \in [n]}$ therefore describes  an optimal, incentive compatible, and symmetric scheme  with $s$-signature $(\bvec{x},\bvec{y})$.

\subsection{The Optimal Scheme}
\label{app:iid:opt}

\subsubsection*{Proof of Lemma \ref{lem:reduced_form}}
For the ``only if" direction, $||\bvec{x}||_1 =  \frac{1}{n}$ and $\bvec{x}+(n-1)\bvec{y}=\bvec{q}$ were established in Section \ref{sec:iid:symmetry}. To show that $\bvec{\tau}$ is a realizable symmetric reduced form for an allocation rule, let $\varphi$ be a signaling scheme with $s$-signature $(\bvec{x},\bvec{y})$. Recall from the definition of an $s$-signature that, for each $i \in [n]$, signal $\sigma_i$ has probability $1/n$, and  $n \bvec{x}$ is the posterior distribution of action $i$'s type conditioned on  signal $\sigma_i$. Now consider the following  allocation rule: Given a type profile $\theta \in [m]^n$ of the $n$ bidders,  allocate the item to bidder
$i$ with probability $\varphi(\theta,\sigma_{i})$ for any $i\in[n]$. By Bayes rule, 
\begin{align*}
\Pr [\text{$i$ gets  item}  | \text{$i$ has type $j$}] &=  \Pr [\text{$i$ has type $j$}  | \text{$i$ gets item}] \cdot \frac{\Pr [ \text{$i$ gets  item}]}{\Pr[\text{$i$ has type $j$}]}\\
&= nx_j \cdot \frac{1/n}{q_j} = \frac{x_j}{q_j}
\end{align*}
Therefore $\bvec{\tau}$ is indeed the reduced form of the described allocation rule.

For the ``if'' direction, let $\bvec{\tau}$, $\bvec{x}$, and $\bvec{y}$ be as  in the statement of the lemma, and consider an allocation rule $A$ with symmetric reduced form $\bvec{\tau}$. Observe that $A$ always allocates the item, since for each player $i \in [n]$ we have $\Pr [\text{$i$ gets the item}] = \sum_{j=1}^m q_j \tau_j = \sum_{j=1}^m x_j =\frac{1}{n}$. We define the direct signaling scheme $\varphi_A$  by $\varphi_A(\theta) = \sigma_{A(\theta)}$. Let $\M = (M^{\sigma_1}, \ldots, M^{\sigma_n})$ be the signature of $\varphi_A$. Recall that, for $\theta \sim \lambda$ and arbitrary  $i \in [n]$ and $j \in [m]$,  $M^{\sigma_i}_{ij}$ is the probability that  $\varphi_A(\theta)=\sigma_i$ and $\theta_i = j$; by definition, this equals the probability that $A$ allocates the item to player $i$ and her type is $j$, which is $\tau_j q_j = x_j$.   As a result, the signature $\M$ of $\varphi_A$ satisfies $M^{\sigma_i}_i = \bvec{x}$ for every action $i$. If $\varphi_A$ were symmetric, we would conclude that its $s$-signature is $(\bvec{x},\bvec{y})$ since every $s$-signature $(\bvec{x},\bvec{y}')$ must satisfy $\bvec{x}+(n-1)\bvec{y}'=\bvec{q}$ (see Section \ref{sec:iid:symmetry}). However, this is not guaranteed when the allocation rule $A$ exhibits some asymmetry. Nevertheless, $\varphi_A$ can be ``symmetrized'' into a signaling scheme $\varphi'_A$ which first draws a random permutation $\pi \in \SS_n$, and signals $\pi(\varphi_A(\pi^{-1}(\theta)))$. That $\varphi'_A$ has $s$-signature $(\bvec{x},\bvec{y})$ follows a similar argument to that used in the proof of Theorem \ref{thm:symmetry}, and we therefore omit the details here.

Finally, observe that the description of $\varphi'_A$ above is constructive assuming black-box access to~$A$, with runtime overhead that is polynomial in $n$ and $m$.

\subsubsection*{Proof of Lemma \ref{lem:iid_solve}}

By Lemma \ref{lem:reduced_form}, we can re-write LP~\eqref{lp:iid:opt} as follows:
\begin{lp}
	
	\maxi{n\bvec{\xi} \cdot \bvec{x} }   
	\st
	\con{ \bvec{\rho} \cdot \bvec{x} \geq \bvec{\rho} \cdot \bvec{y} }
	\con{ \bvec{x} + (n-1) \bvec{y} = \bvec{q} }
	\con{ ||\bvec{x}||_{1}=\frac{1}{n} }
	\con{ (\frac{x_1}{q_1},....,\frac{x_m}{q_m}) \text{ is a realizable symmetric reduced form} }
\end{lp}

From \cite{Border91,Border07,Cai12a,Saeed12},  we know that the family of all the realizable symmetric reduced forms constitutes a polytope, and moreover that this polytope admits an efficient  separation oracle. The runtime of this oracle is polynomial in $m$ and $n$, and as a result the above linear program can be solved in $poly(n,m)$ time using the Ellipsoid method. 

\subsection{A Simple $(1-1/e)$-approximate Scheme}
\label{app:iid:apx}
\subsubsection*{Proof of Theorem \ref{thm:iid_apx}}

Given a binary signal $\sigma=(o_1,\ldots,o_n) \in\{\high,\low\}^{n}$, the posterior type distribution for an action equals $n\bvec{x}^{*}$ if the corresponding component signal is $\high$, and equals $n\bvec{y}^{*}$ if the component signal is $\low$. This is simply a consequence of the independence of the action types,  the fact that the different component signals are chosen independently, and Bayes' rule.  The constraint $\bvec{\rho} \cdot  \bvec{x}^* \geq \bvec{\rho} \cdot  \bvec{y}^*$ implies that the receiver prefers actions $i$ for which $o_i=\high$, any one of which induces an expected utility of $n\bvec{\rho} \cdot \bvec{x}^*$ for the receiver and  $n\bvec{\xi} \cdot \bvec{x}^*$ for the sender. The latter quantity matches  the optimal value of LP~\eqref{lp:iid:apx}. The constraint $||\bvec{x}||_1= \frac{1}{n}$ implies that each component signal is $\high$ with probability $\frac{1}{n}$, independently. Therefore, the probability that at least one component signal is $\high$ equals $1 - (1-\frac{1}{n})^n \geq 1-\frac{1}{e}$. Since payoffs are nonnegative, and since a rational receiver selects a $\high$ action when one is available, the sender's overall expected utility is at least a $1-\frac{1}{e}$ fraction of the optimal value of LP \eqref{lp:iid:apx}. 



\newpage

\section{Proof of Theorem \ref{thm:hardness}}
\label{app:indep}

This section is devoted to proving Theorem \ref{thm:hardness}. Our proof starts from the ideas of  \citet{Gopalan15}, who show the \#P-hardness for revenue or welfare maximization in several mechanism design problems. In one case, \cite{Gopalan15} reduce from the $\#P$-hard problem of computing the \emph{Khintchine constant} of a vector. Our reduction also starts from this problem, but is much more involved: First, we exhibit a polytope which we term \emph{Khintchine polytope}, and show that computing the Khintchine constant reduces to linear optimization over the Khintchine polytope. Second, we present a reduction from the membership problem for the Khintchine polytope to the computation of optimal sender utility in a particularly-crafted instance of persuasion with independent actions. Invoking the polynomial-time equivalence between membership checking and optimization (see, e.g., \cite{Groetschel88}), we conclude the \#P-hardness of our problem. The main technical challenge we overcome is in the second step of our proof: given a point $x$ which may or may not be in the Khintchine polytope $\K$, we construct a persuasion instance and a threshold $T$ so that points in $\K$ encode signaling schemes, and the optimal sender utility is at least $T$ if and only if $x \in \K$ and the scheme corresponding to $x$ results in sender utility $T$.


\subsection*{The Khintchine Polytope}
We start by defining  the \emph{Khintchine problem}, which is shown to be \#P-hard in \cite{Gopalan15}.

\begin{definition}
	(Khintchine Problem) Given a vector $a \in \RR^n$, compute the \emph{Khintchine constant} $K(a)$ of $a$, defined as follows:
	\begin{equation*}
	K(a) = \Ex_{\theta \sim \{\pm 1\}^n} [ |\theta \cdot a|],
	\end{equation*} 
	where $\theta$ is drawn \emph{uniformly} at random from $\{\pm 1\}^n$.		
\end{definition}

To relate the Khintchine problem to Bayesian persuasion, we begin with a persuasion instance with $n$ \emph{i.i.d.}~actions. Moreover, there are only two action types,\footnote{Recall from Section \ref{sec:iid} that each type is associated with a pair $(\xi,\rho)$, where $\xi$ [$\rho$] is the payoff to the sender [receiver] if the receiver takes an action of that type. } which we refer to as \emph{type -1} and \emph{type +1}. The state of nature is a uniform random draw from the set $\{\pm 1 \}^n$, with the $i$th entry specifying the type of action~$i$. It is easy to see that these actions are \emph{i.i.d.}, with marginal probability  $\frac{1}{2}$ for each type. We call this instance the \emph{Khintchine-like} persuasion setting.  As in Section \ref{sec:iid}, we still use the \emph{signature} to capture the payoff-relevant features of a signaling scheme. A signature for the Khintchine-like persuasion problem is of the form $\M = (M^1,...,M^n)$ where $M^i \in \RR^{n \times 2}$ for any $i \in [n]$.   We pay special attention to signaling schemes which use only \emph{two} signals, in which case we represent them using a \emph{two-signal signature} of the form $(M^1,M^2) \in \RR^{n \times 2} \cross \RR^{n \times 2}$. Recall that such a signature is \emph{realizable} if there is a signaling scheme which uses only two signals, with the property that  $M^i_{jt}$ is the joint probability of the $i$th signal and the event that action $j$ has type $t$. We now define the \emph{Khintchine polytope}, consisting of a convex family of two-signal signatures.
\begin{definition}
	The \emph{Khintchine polytope}  is the family $\mathcal{K}(n)$ of  \emph{realizable}  two-signal signatures $(M^1,M^2)$ for the Khintchine-like persuasion setting which satisfy the additional constraints $M^1_{i,1} + M^1_{i,2} = \frac{1}{2} \,  \forall i\in[n]$.
\end{definition}

We sometimes use $\mathcal{K}$ to denote the Khintchine polytope $\mathcal{K}(n)$ when the dimension $n$ is clear from the context. Note that the constraints $M^1_{i,1} + M^1_{i,2} = \frac{1}{2}, \, \forall i\in[n]$ state that the first signal should be sent with probability $\frac{1}{2}$ (hence also the second signal).   We now show that optimizing over the Khintchine polytope is $\#P$-hard by reducing the Kintchine problem to Linear program \eqref{lp:Ka}.

\begin{figure}
	\centering
	\begin{lp}
		\label{lp:Ka}
		\maxi{\sum_{i=1}^n a_i (M^+_{i,+1} - M^+_{i,-1}) -  \sum_{i=1}^n a_i (M^-_{i,+1} - M^-_{i,-1})}
		\st
		\con{(M^+,M^-) \in \K(n)}
	\end{lp}
	Linear program for computing the Khintchine constant $K(a)$ for $a \in \RR^n$ 
\end{figure}

\begin{lemma}\label{lem:PolytopeHard}
	General linear optimization over the Khintchine polytope $\mathcal{K}$ is $\#P$-hard.
\end{lemma}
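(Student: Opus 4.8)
The plan is to show that solving the linear program \eqref{lp:Ka} over $\K(n)$ yields exactly the Khintchine constant $K(a)$, so that the \#P-hardness of the Khintchine problem transfers to general linear optimization over $\K(n)$. First I would unpack what a two-signal signature $(M^+,M^-)\in\K(n)$ encodes: a signaling scheme $\varphi$ using signals $\{\sigma_+,\sigma_-\}$, each sent with probability exactly $\tfrac12$ (this is the defining constraint $M^1_{i,1}+M^1_{i,2}=\tfrac12$, together with the fact that marginals sum to the prior, forcing $M^-$ to also have row-sums $\tfrac12$). For a fixed state of nature $\theta\in\{\pm1\}^n$, note that the quantity $\sum_i a_i(M^\cdot_{i,+1}-M^\cdot_{i,-1})$, when restricted to the deterministic scheme that always sends the same signal, evaluates to $\tfrac12\sum_i a_i\theta_i=\tfrac12(\theta\cdot a)$. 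More generally, for an arbitrary two-signal scheme $\varphi$, the objective of \eqref{lp:Ka} equals $\Ex_{\theta}\big[(\mathbf{1}[\varphi(\theta)=\sigma_+]-\mathbf{1}[\varphi(\theta)=\sigma_-])\,(\theta\cdot a)\big]$, where $\theta$ is uniform on $\{\pm1\}^n$ — this is a direct expansion using $M^\pm_{i,t}=\Ex_\theta[\mathbf{1}[\varphi(\theta)=\sigma_\pm]\mathbf{1}[\theta_i=t]]$ and linearity.

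Given this identity, the linear program \eqref{lp:Ka} asks us to choose, for each $\theta$, a (possibly randomized) assignment of $\theta$ to $\sigma_+$ or $\sigma_-$ so as to maximize $\Ex_\theta[w(\theta)(\theta\cdot a)]$ where $w(\theta)\in[-1,1]$ is the signed indicator, \emph{subject only} to the balance constraint that $\sigma_+$ and $\sigma_-$ each receive total probability $\tfrac12$, i.e. $\Ex_\theta[w(\theta)]=0$. The key observation is that the scheme $\varphi(\theta)=\sigma_{\operatorname{sign}(\theta\cdot a)}$ is feasible — by symmetry of the uniform distribution on $\{\pm1\}^n$ under negation $\theta\mapsto-\theta$, the event $\theta\cdot a>0$ has the same probability as $\theta\cdot a<0$, so this scheme sends each signal with probability exactly $\tfrac12$ (ties, a measure-related edge case, can be split evenly to restore exact balance) — and it is optimal, since pointwise $w(\theta)(\theta\cdot a)\le|\theta\cdot a|$ with equality when $w(\theta)=\operatorname{sign}(\theta\cdot a)$. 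Hence the optimal value of \eqref{lp:Ka} is $\Ex_\theta[|\theta\cdot a|]=K(a)$. Thus if one could optimize arbitrary linear objectives over $\K(n)$ in polynomial time, one could compute $K(a)$ in polynomial time, contradicting the \#P-hardness of the Khintchine problem established in \cite{Gopalan15}.

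The main obstacle I anticipate is not the optimality argument (which is the elementary pointwise bound above) but rather the \emph{realizability / feasibility} bookkeeping: one must verify carefully that every two-signal signature satisfying the stated linear constraints is genuinely realizable by a signaling scheme — equivalently, that the abstract polytope $\K(n)$ defined by linear constraints coincides with the set of signatures of actual two-signal schemes — and, conversely, that the specific scheme $\varphi(\theta)=\sigma_{\operatorname{sign}(\theta\cdot a)}$ together with a tie-splitting rule does land in $\K(n)$ with the balance constraint met exactly. The negation-symmetry of the uniform prior on $\{\pm1\}^n$ is what makes the balance constraint automatically satisfiable at the optimum; I would state this explicitly. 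I would also be slightly careful that the reduction is a genuine polynomial-time many-one (Turing) reduction in the \#P-hardness sense: the instance of \eqref{lp:Ka} is constructed in polynomial time from $a$, its optimum equals $K(a)$ exactly, and $K(a)$ has polynomially many bits, so an oracle for linear optimization over $\K(n)$ gives an oracle for the Khintchine constant.
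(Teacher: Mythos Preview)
Your proposal is correct and follows essentially the same approach as the paper: both reduce from the Khintchine problem by showing that the optimal value of LP~\eqref{lp:Ka} equals $K(a)$, using the scheme $\varphi(\theta)=\sigma_{\operatorname{sign}(\theta\cdot a)}$ for feasibility (via the $\theta\mapsto-\theta$ symmetry) and a pointwise bound for optimality. Your formulation via the signed weight $w(\theta)=\varphi(\theta,\sigma_+)-\varphi(\theta,\sigma_-)\in[-1,1]$ and the inequality $w(\theta)(\theta\cdot a)\le|\theta\cdot a|$ is a slightly cleaner packaging of the paper's conditional-expectation calculation, but the content is identical.

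One small clarification: your stated ``main obstacle'' about realizability is a non-issue. The polytope $\K(n)$ is \emph{defined} as the set of realizable two-signal signatures satisfying the balance constraint, not by an explicit list of linear inequalities, so every point in $\K(n)$ is realizable by construction and there is nothing to verify on that side. The only feasibility check needed is the one you already handle: that the sign scheme (with ties split evenly) lands in $\K(n)$.
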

\begin{proof}
	For any given $a\in \RR^n$, we reduce the computation of $K(a)$ -- the Khintchine constant for $a$ -- to a linear optimization problem over the Khintchine polytope $\mathcal{K}$. Since our reduction will use two signals $\sigma_+$ and $\sigma_{-}$ which correspond  to the sign of $\theta \cdot a$,  we will  use $(M^+,M^-)$ to denote the two matrices in the signature in lieu of $(M^1,\, M^2)$. Moreover, we use the two action types $+1$ and $-1$ to index the columns of each matrix. For example, $M^+_{i,-1}$ is the joint probability of signal $\sigma_+$ and the event that the $i$th action has type $-1$.
	
	We claim that the Kintchine constant $K(a)$ equals the  optimal objective value of the  implicitly-described linear program~\eqref{lp:Ka}. We denote this optimal objective value  by $OPT(LP~\eqref{lp:Ka})$.			
	%
	%
	We first prove that $K(a) \leq OPT(LP~\eqref{lp:Ka})$. Consider a signaling scheme $\varphi$ in the Kintchine-like persuasion setting which simply outputs $\sigma_{sign(\theta \cdot a)}$ for each state of nature $\theta \in \set{\pm 1}^n$ (breaking tie uniformly at random if $\theta \cdot a = 0$). Since $\theta$ is drawn uniformly from $\set{\pm 1}^n$ and $sign(\theta \cdot a) = -sign( -\theta \cdot a)$, this scheme outputs each of the signals $\sigma_{-}$ and $\sigma_+$ with probability $\frac{1}{2}$. 
	Consequently, the two-signal signature of $\varphi$ is a point in $\mathcal{K}$. Moreover, evaluating the objective function of  LP ~\eqref{lp:Ka} on the two-signal signature $(M^+,M^-)$ of $\varphi$ yields  $K(a)=\Ex_{\theta} [|\theta \cdot a|]$, as shown below.
	\begin{align*}
	\Ex_{\theta} [|\theta \cdot a|] &= \Ex_{\theta} [\theta \cdot a | \sigma_+] \cdot \Pr( \sigma_+) + \Ex_{\theta} [-\theta \cdot a | \sigma_{-}] \cdot \Pr( \sigma_{-}) \\
	&= \sum_{i=1}^{n} a_i \Ex_{\theta} [\theta_i | \sigma_+] \cdot \Pr( \sigma_+) - \sum_{i=1}^{n} a_i \Ex_{\theta} [\theta_i | \sigma_{-}] \times \Pr( \sigma_{-}) \\
	&=  \sum_{i=1}^{n} \bigg(  a_i[\Pr(\theta_i = 1 | \sigma_+)  - \Pr (\theta_i = -1 | \sigma_+)] \cdot \Pr( \sigma_+)   \bigg)\\
	& \qquad - \sum_{i=1}^{n} \bigg( a_i [\Pr(\theta_i = 1 | \sigma_{-}) 
	- \Pr (\theta_i = -1 | \sigma_{-})] \cdot \Pr( \sigma_{-}) \bigg) \\
	&= \sum_{i=1}^{n} \bigg(a_i [\Pr(\theta_i = 1, \sigma_+) - \Pr (\theta_i = -1, \sigma_+)] \bigg)
	- \sum_{i=1}^{n} \bigg( a_i [\Pr(\theta_i = 1 , \sigma_{-}) 
	- \Pr (\theta_i = -1, \sigma_{-})] \bigg) \\
	&= \sum_{i=1}^{n} a_i [M^+_{i,+1} 
	- M^+_{i,-1}] - \sum_{i=1}^{n} a_i [M^-_{i,+1} 
	- M^-_{i,-1}]
	\end{align*}
	This concludes the proof that $K(a) \leq OPT(LP~\eqref{lp:Ka})$.
	
	Now we prove $K(a) \geq OPT(LP~\eqref{lp:Ka})$. Take \emph{any} signaling scheme which uses only two signals $\sigma_+$ and $\sigma_-$, and let $(M^+, M^-)$ be its two-signal signature. Notice, however, that $\sigma_+$ now is only the ``name" of the signal, and does not imply that $\theta \cdot a$ is positive.  Nevertheless, it is still valid to reverse the above derivation until we reach
	\begin{eqnarray*}
		\sum_{i=1}^{n} a_i [M^+_{i,+1} 
		- M^+_{i,-1}] - \sum_{i=1}^{n} a_i [M^-_{i,+1} 
		- M^-_{i,-1}] &=& \Ex_{\theta} [\theta \cdot a | \sigma_+] \cdot \Pr( \sigma_+) + \Ex_{\theta} [-\theta \cdot a | \sigma_{-}] \cdot \Pr( \sigma_{-}). 
	\end{eqnarray*}   
	Since $\theta \cdot a $ and $-\theta \cdot a $ are each no greater than $|\theta \cdot a|$, we  have
	\begin{align*}
	\Ex_{\theta} [\theta \cdot a | \sigma_+] \cdot \Pr( \sigma_+) + \Ex_{\theta} [-\theta \cdot a | \sigma_{-}] \cdot \Pr( \sigma_{-}) &\leq \Ex_{\theta} [|\theta \cdot a| \mid \sigma_+] \cdot \Pr( \sigma_+) + \Ex_{\theta} [|\theta \cdot a |\mid  \sigma_{-}] \cdot \Pr( \sigma_{-})\\
	&=  \Ex_{\theta} [|\theta \cdot a|] = K(a).
	\end{align*}
	That is, the objective value of LP~\eqref{lp:Ka} is upper bounded by $K(a)$, as needed.
\end{proof}

Before we proceed to present the reduction from the membership problem for $\mathcal{K}$ to optimal persuasion, we point out an interesting corollary of Lemma \ref{lem:PolytopeHard}.
\begin{corollary}\label{cor:IIDhard}
	Let $\mathcal{P}$ be the polytope of realizable signatures for a persuasion problem with $n$ \emph{i.i.d.} actions and $m$  types (see Section \ref{sec:iid}). Linear optimization over $\mathcal{P}$ is $\#P$-hard, and this holds even when $m=2$. 
\end{corollary}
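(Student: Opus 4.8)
The plan is to observe that the reduction behind Lemma~\ref{lem:PolytopeHard} already lives inside the $n$-signal signature polytope $\mathcal{P}$, once one notices that \emph{neither} of the two extra restrictions built into the Khintchine polytope $\mathcal{K}(n)$ --- that only two signals are used, and that each is sent with probability exactly $\tfrac12$ --- is actually needed for that argument. Concretely, I would instantiate $\mathcal{P}$ with the Khintchine-like parameters of Section~\ref{sec:indep}'s reduction: $n$ i.i.d.\ actions, $m=2$ types (call them $+1$ and $-1$), and uniform type distribution $\bvec{q}=(\tfrac12,\tfrac12)$. A realizable signature is then $\mathcal{M}=(M^{\sigma_1},\dots,M^{\sigma_n})$ with each $M^{\sigma_k}\in\RR^{n\times2}$, columns indexed by the two types, and $M^{\sigma_k}_{i,t}$ the joint probability that the scheme emits $\sigma_k$ and action $i$ has type $t$.

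Given an arbitrary $a\in\RR^n$, I would run linear optimization over $\mathcal{P}$ with the objective
\[
g(\mathcal{M})=\sum_{i=1}^n a_i\bigl(M^{\sigma_1}_{i,+1}-M^{\sigma_1}_{i,-1}\bigr)-\sum_{i=1}^n a_i\bigl(M^{\sigma_2}_{i,+1}-M^{\sigma_2}_{i,-1}\bigr),
\]
which depends only on the first two blocks of the signature and is exactly the objective of LP~\eqref{lp:Ka}. The claim to establish is $\max_{\mathcal{M}\in\mathcal{P}}g(\mathcal{M})=K(a)$. For the upper bound I would rewrite $g$: if $\mathcal{M}$ is realized by a scheme $\varphi$ and $\theta\sim\lambda$, then $\theta_i\in\{\pm1\}$ gives $M^{\sigma_k}_{i,+1}-M^{\sigma_k}_{i,-1}=\Ex_\theta[\theta_i\,\one[\varphi(\theta)=\sigma_k]]$, hence $g(\mathcal{M})=\Ex_\theta\bigl[(\theta\cdot a)\,(\one[\varphi(\theta)=\sigma_1]-\one[\varphi(\theta)=\sigma_2])\bigr]$; since the bracketed indicator difference takes values in $\{-1,0,1\}$ and $\lvert\theta\cdot a\rvert\ge0$, this is at most $\Ex_\theta[\lvert\theta\cdot a\rvert]=K(a)$. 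Crucially this step uses only $\Pr(\sigma_1)+\Pr(\sigma_2)\le1$, so it is insensitive to how many signals $\varphi$ uses or with what probabilities. For the matching lower bound I would exhibit, exactly as in Lemma~\ref{lem:PolytopeHard}, the scheme $\varphi^*$ that emits $\sigma_1$ when $\theta\cdot a>0$, $\sigma_2$ when $\theta\cdot a<0$, breaking ties uniformly; its signature lies in $\mathcal{P}$ (it is a perfectly legal $n$-signal scheme that simply never uses $\sigma_3,\dots,\sigma_n$), and the Bayes computation from that proof gives $g(\mathcal{M}^*)=\Ex_\theta[\lvert\theta\cdot a\rvert]=K(a)$.

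Combining the two bounds, a subroutine for linear optimization over $\mathcal{P}$ computes the Khintchine constant of an arbitrary vector, which is $\#P$-hard by \citet{Gopalan15} (and underlies Lemma~\ref{lem:PolytopeHard}); since the construction has $m=2$, the corollary follows. I expect the only point requiring care --- the ``obstacle'', such as it is --- to be checking that the upper-bound inequality from Lemma~\ref{lem:PolytopeHard} genuinely survives the passage from two equiprobable signals to arbitrary $n$-signal schemes over $\mathcal{P}$, which the expectation rewriting of $g$ makes transparent, together with the minor bookkeeping that $\mathcal{K}(n)$ and $\mathcal{P}$ formally live in different ambient spaces (two-signal versus $n$-signal signatures), so the witness scheme's signature should be read directly as a point of $\mathcal{P}$ with its last $n-2$ blocks equal to zero.
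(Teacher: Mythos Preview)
Your argument is correct, and it takes a genuinely different route from the paper's. The paper argues indirectly: since $\mathcal{K}(n)$ is a linear slice of $\mathcal{P}$ (impose $M^{\sigma_i}=0$ for $i\ge3$ and the equiprobability constraints, then project), membership in $\mathcal{K}(n)$ reduces to membership in $\mathcal{P}$; then the polynomial-time equivalence between membership and optimization (via the ellipsoid method, as in \cite{Groetschel88}) transfers the $\#P$-hardness of Lemma~\ref{lem:PolytopeHard} from $\mathcal{K}(n)$ to $\mathcal{P}$. You instead bypass both the Khintchine polytope and the membership/optimization equivalence entirely, showing directly that the very same objective $g$, now maximized over all of $\mathcal{P}$, still equals $K(a)$. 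Your expectation rewriting $g(\mathcal{M})=\Ex\bigl[(\theta\cdot a)(\one[\varphi(\theta)=\sigma_1]-\one[\varphi(\theta)=\sigma_2])\bigr]$ makes the upper bound immediate and, as you note, independent of the number of signals or their probabilities; the witness scheme then gives the matching lower bound. The payoff of your approach is that it is more elementary --- no appeal to ellipsoid-based equivalences --- and it pinpoints a single explicit hard linear objective over $\mathcal{P}$. The paper's approach is more modular, reusing Lemma~\ref{lem:PolytopeHard} as a black box, but at the cost of invoking heavier machinery.
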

\begin{proof}
	Consider the Khintchine-like persuasion setting. It is easy to see that the Khintchine polytope $\mathcal{K}$ can be obtained from $\P$ by adding the constraints $M^{\sigma_i} = 0$ for  $i \geq 3$ and $M^{\sigma_1}_{i,1}+M^{\sigma_1}_{i,2} = \frac{1}{2}$ for $i \in [n]$, followed by a simple projection.  Therefore, the membership problem for $\mathcal{K}$ can be reduced in polynomial time to the membership problem for $\mathcal{P}$, since the additional linear constraints can be explicitly checked in polynomial time.  By the polynomial-time equivalence between optimization and membership, it follows that general linear optimization over $\mathcal{P}$ is $\#P$-hard. 
\end{proof}

\begin{remark}
	It is interesting to compare Corollary \ref{cor:IIDhard} to single item auctions with i.i.d. bidders, where the problem does admit a polynomial-time separation oracle for the polytope of realizable signatures via Border's Theorem \cite{Border91,Border07} and its algorithmic properties \cite{Cai12a,Saeed12}. In contrast,  the polytope of realizable signatures for Bayesian persuasion is $\#$P-hard to optimize over. Nevertheless, in Section \ref{sec:iid} we were indeed able to compute the optimal signaling scheme and sender utility for persuasion with i.i.d. actions. Corollary~\ref{cor:IIDhard} conveys that it was crucial for our algorithm to exploit the special structure of the persuasion objective and the symmetry of the optimal scheme, since optimizing a general objective over $\mathcal{P}$ is \#P-hard.         		
\end{remark}

\subsection*{Reduction}
We now present a reduction from the membership problem for the Khintchine polytope to the computation of optimal sender utility for persuasion with independent actions. As the output of our reduction, we construct a persuasion instance of the following form. There are $n+1$ actions. Action $0$ is \emph{special} -- it deterministically results in sender utility $\epsilon$ and receiver utility $0$. Here, we think of $\epsilon > 0$ as being small enough for our arguments to go through. The other $n$ actions are \emph{regular}. Action $i>0$ \emph{independently} results in sender utility $- a_i$ and receiver utility $a_i$ with probability $\frac{1}{2}$ (call this the type $1_i$), or sender utility $-  b_i$ and receiver utility $b_i$ with probability $\frac{1}{2}$ (call this the type $2_i$).  Note that the sender and receiver utilities are \emph{zero-sum} for both types.  Notice that, though each regular action's type distribution is uniform over its two types, the actions here are \emph{not} identical because the associated payoffs --- specified by $a_i$ and $b_i$ for each action $i$ ---  are different for different actions. Since the special action is deterministic and the probability of its (only) type is $1$ in any signal, we  can interpret any $(M^1,M^2)\in \mathcal{K}(n)$ as a two-signal signature for our persuasion instance (the row corresponding to the special action $0$ is implied). For example, $M^1_{i,2}$ is the joint probability of the first signal and the event that  action $i$ has type $2_i$. Our goal is to reduce membership checking for $\mathcal{K}(n)$ to computing the optimal expected sender utility for a persuasion  instance with carefully chosen parameters $\set{a_i}_{i=1}^n$, $\set{b_i}_{i=1}^n$, and~$\epsilon$.

In relating optimal persuasion to the Khintchine polytope,
there are two main difficulties: (1) $\mathcal{K}$ consists of two-signal signatures, so there should be an optimal scheme to our persuasion instance which uses only two signals; (2) To be consistent with the definition of $\mathcal{K}$, such an optimal scheme should send each signal with probability exactly $\frac{1}{2}$. We will design specific $\epsilon,a_i, b_i$ to accomplish both goals.


For notational convenience, we will again use $(M^+,M^-)$ to denote a typical element in $\mathcal{K}$ instead of $(M^1,M^2)$ because, as we will see later, the two constructed signals will induce positive and negative sender utilities, respectively.  Notice that there are only $n$ degrees of freedom in $(M^+,M^-)\in \mathcal{K}$. This is because $M^+ + M^-$ is the all-$\frac{1}{2}$ matrix in $\RR^{n\times2}$, corresponding to the prior distribution of states of nature (by the definition of realizable signatures). Moreover, $M^+_{i,1}+M^-_{i,2} = \frac{1}{2}$ for all $i \in [n]$ (by the definition of $\mathcal{K}$).  Therefore, we must have 
\begin{equation*}
M^{+}_{i,1}=M^-_{i,2} = \frac{1}{2} - M^{+}_{i,2} = \frac{1}{2} - M^{-}_{i,1}.
\end{equation*}
This implies that we can parametrize signatures $(M^+,M^-) \in \K$ by a vector $\bvec{x} \in [0,\frac{1}{2}]^n$, where $M^+_{i,1} = M^{-}_{i,2}=x_i$ and 
$M^+_{i,2} = M^-_{i,1} = \frac{1}{2} - x_i$ for each $i \in [n]$. For any $\bvec{x} \in [0,\frac{1}{2}]^{n}$, let $\M(\bvec{x})$ denote the signature $(M^+,M^-)$ defined by $\bvec{x}$ as just described. 

We can now restate the membership problem for $\K$ as follows: given $\bvec{x} \in [0,\frac{1}{2}]^n$, determine whether $\M(\bvec{x}) \in \K$. When any of the entries of $\bvec{x}$ equals $0$ or $\frac{1}{2}$ this problem is trivial,\footnote{If $x_i$ is $0$ or $\frac{1}{2}$, then $\M(x) \in \mathcal{K}$ if and only if $x_j =\frac{1}{4}$ for all $j \neq i$. This is because the corresponding signaling scheme must choose its signal based solely on the type of action $i$.} so we assume without loss of generality that $\bvec{x} \in (0,\frac{1}{2})^n$. Moreover, when $x_i=\frac{1}{4}$ for some $i$, it is easy to see that a signaling scheme with signature $\M(\bvec{x})$, if one exists, must choose its signal independently of the type of action $i$, and therefore $\M(\bvec{x}) \in \K(n)$ if and only if $\M(\bvec{x_{-i}}) \in \K(n-1)$. This allows us to assume without loss of generality that $x_i \neq \frac{1}{4}$ for all $i$.

Given $\bvec{x} \in (0,\frac{1}{2})^n$ with $x_i \neq \frac{1}{4}$ for all $i$, we construct specific $\epsilon$ and $a_i,b_i$ for all $i$ such that we can determine whether $\M(\bvec{x})\in \mathcal{K}$ by simply looking at the optimal sender utility in the corresponding persuasion instance. We choose parameters $a_i$ and $b_i$ to satisfy the following two equations.
\begin{eqnarray}\label{eq:ConstructCond1}
&& x_i a_i + (\frac{1}{2} - x_i) b_i = 0. \\ \label{eq:ConstructCond2}
&& (\frac{1}{2} - x_i)a_i + x_i b_i = \frac{1}{2}.
\end{eqnarray}
We note that the above linear system always has a solution when $x_i \neq \frac{1}{4}$, which we assumed previously. We make two observations about our choice of $a_i$ and $b_i$. First, the \emph{prior} expected receiver utility $\frac{1}{2}(a_i+b_i)$  equals $\frac{1}{2}$ for all actions $i$ (by simply  adding Equation \eqref{eq:ConstructCond1} and \eqref{eq:ConstructCond2}). Second, $a_i$ and $b_i$ are both non-zero, and this follows easily  from our assumption that $x_i \in (0, \frac{1}{2})$. 

Now we show how to determine whether $\M(\bvec{x})\in \mathcal{K}$ by only examining the optimal sender utility in the constructed persuasion instance.  We start by showing that restricting to two-signal schemes is without loss of generality in our instance.

\begin{lemma}\label{lem:TwoSignal}
	There exists an optimal incentive-compatible signaling scheme which uses at most \emph{two} signals: one signal recommends the special action, and the other recommends some regular action.
\end{lemma}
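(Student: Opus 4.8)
The plan is to start from an arbitrary optimal incentive-compatible (direct) signaling scheme $\varphi$ with signals $\sigma_0,\sigma_1,\dots,\sigma_n$ (recommending actions $0,1,\dots,n$ respectively), and to transform it into one that uses only two signals without decreasing the sender's utility and without violating incentive compatibility. The key structural fact to exploit is that all regular actions are \emph{zero-sum} between sender and receiver: for action $i>0$ in state $\theta$, $s_i(\theta) = -r_i(\theta)$, while the special action $0$ gives $s_0 = \epsilon > 0$ and $r_0 = 0$ deterministically. First I would observe that whenever signal $\sigma_i$ (for regular $i$) is sent, the IC constraint $\bvec{\rho}\cdot M^{\sigma_i}_i \ge \bvec{\rho}\cdot M^{\sigma_i}_0 = 0$ (using that action $0$'s posterior utility is always $0$) means the conditional expected receiver utility from the recommended regular action is nonnegative, hence the conditional sender utility is nonpositive; and whenever $\sigma_0$ is sent, the sender gets exactly $\epsilon$.

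Next I would argue for merging all the ``regular'' signals into one. Form the scheme $\varphi'$ with just two signals $\sigma_+$ (recommending action $0$) and $\sigma_-$: set $\varphi'(\theta,\sigma_+) = \varphi(\theta,\sigma_0)$ and $\varphi'(\theta,\sigma_-) = \sum_{i=1}^n \varphi(\theta,\sigma_i)$. The signal $\sigma_+$ is trivially IC since action $0$'s recommendation imposes the constraint $0 = \bvec{\rho}\cdot M^{\sigma_+}_0 \ge \bvec{\rho}\cdot M^{\sigma_+}_j$, i.e.\ that conditioned on $\sigma_+$ every regular action has nonpositive expected receiver utility; I need to verify this is automatically inherited, which follows because $M^{\sigma_+} = M^{\sigma_0}$ exactly and $\varphi$ was IC at $\sigma_0$. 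For $\sigma_-$, incentive compatibility requires exhibiting a single regular action $i^\star$ to recommend with $\bvec{\rho}\cdot M^{\sigma_-}_{i^\star} \ge \bvec{\rho}\cdot M^{\sigma_-}_j$ for all $j$ (including $j=0$, which is the constraint $\bvec{\rho}\cdot M^{\sigma_-}_{i^\star}\ge 0$). Here I would simply pick $i^\star \in \argmax_i \bvec{\rho}\cdot M^{\sigma_-}_i$; the remaining point is that $\bvec{\rho}\cdot M^{\sigma_-}_{i^\star} \ge 0$, which I would get by noting $\sum_i \bvec{\rho}\cdot M^{\sigma_-}_i$ decomposes, via the original IC constraints at each $\sigma_i$, into a sum of nonnegative terms plus the (zero) contribution consistent with the prior — more carefully, conditioning on $\sigma_-$ is a convex combination of conditioning on the $\sigma_i$'s, each of which gave the recommended action nonnegative receiver utility, and a convex combination of ``some coordinate is nonnegative'' vectors has its max coordinate nonnegative. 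I should double-check this averaging step rather than hand-wave it.

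Finally I would check the sender's utility does not drop. Under $\varphi$, the contribution from $\sigma_0$ is $\alpha_{\sigma_0}\cdot\epsilon$ and the total contribution from $\sigma_1,\dots,\sigma_n$ is $\sum_{i\ge1}\alpha_{\sigma_i}\, s_{i}(\sigma_i) = -\sum_{i\ge1}\alpha_{\sigma_i}\, r_{i}(\sigma_i) = -\sum_{i\ge1}\bvec{\rho}\cdot M^{\sigma_i}_i$. Under $\varphi'$, the contribution from $\sigma_+$ is unchanged, and from $\sigma_-$ it is $-\bvec{\rho}\cdot M^{\sigma_-}_{i^\star} = -\max_i \bvec{\rho}\cdot M^{\sigma_-}_i \ge -\sum_{i\ge1}\bvec{\rho}\cdot M^{\sigma_-}_i \cdot(\text{nonneg weights})$... the clean way: $\bvec{\rho}\cdot M^{\sigma_-}_{i^\star} = \max_i \bvec{\rho}\cdot\big(\sum_{k\ge1} M^{\sigma_k}\big)_i \le \sum_{k\ge1}\max_i \bvec{\rho}\cdot M^{\sigma_k}_i = \sum_{k\ge1}\bvec{\rho}\cdot M^{\sigma_k}_k$ (the last equality by IC of $\varphi$ at each $\sigma_k$), so $-\bvec{\rho}\cdot M^{\sigma_-}_{i^\star} \ge -\sum_{k\ge1}\bvec{\rho}\cdot M^{\sigma_k}_k$, i.e.\ the sender's utility on the merged signal is at least what the $n$ regular signals gave in aggregate. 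Hence $u_s(\varphi',\lambda)\ge u_s(\varphi,\lambda)$, and since $\varphi$ was optimal, $\varphi'$ is optimal too. The main obstacle I anticipate is getting the two inequalities involving $\max_i \bvec{\rho}\cdot M^{\sigma_-}_i$ exactly right — both the sign ($\ge 0$, needed for IC against action $0$) and the subadditivity bound against the sum of the original recommended-action utilities (needed for the utility comparison); everything else is bookkeeping with Bayes' rule and the zero-sum identity $\bvec{\xi} = -\bvec{\rho}$ on regular actions.
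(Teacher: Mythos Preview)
Your approach is the same as the paper's: merge all regular-action signals into one and argue via the zero-sum structure that the sender does not lose. Your sender-utility comparison via subadditivity of the max, $\max_i \sum_k v^k_i \le \sum_k \max_i v^k_i = \sum_k v^k_k$, is exactly the paper's inequality~\eqref{eq:MergeBetter} rephrased, and is correct.

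The one real gap is precisely the step you flagged. The claim ``a convex combination of vectors, each having some nonnegative coordinate, has nonnegative maximum coordinate'' is false: take $(0,-10)$ and $(-10,0)$, whose average is $(-5,-5)$. So your argument that $\bvec{\rho}\cdot M^{\sigma_-}_{i^\star}\ge 0$ does not go through as written. The paper handles this by a case split over where the receiver's best response on the merged signal lands: if $\argmax_{j\ge 0}\sum_{k\ge 1}\alpha_k r_j(\sigma_k)$ is the special action $0$, let the merged signal recommend action $0$; the sender then gets $\epsilon\cdot\alpha_-\ge 0$ from that signal, which dominates the nonpositive total $\sum_{k\ge 1}\alpha_k s_k(\sigma_k)$ it replaces. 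Otherwise the best response is some regular $j^\star$, and your subadditivity inequality gives the comparison. Alternatively, in this specific instance you can bypass the case split entirely: since $\tfrac12(a_i+b_i)=\tfrac12$ and IC at $\sigma_0$ gives $\alpha_0 r_i(\sigma_0)\le 0$, summing $\alpha_0 r_i(\sigma_0)+\alpha_- r_i(\sigma_-)=\tfrac12$ yields $\alpha_- r_i(\sigma_-)\ge \tfrac12>0$ for \emph{every} regular $i$, so the problematic case never occurs. Either fix completes your proof.
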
 
\begin{proof}
	Recall that an optimal incentive-compatible scheme uses $n+1$ signals, with signal $\sigma_i$ recommending action $i$ for $i=0,1,...,n$. Fix such a scheme, and let $\alpha_i$ denote the probability of signal $\sigma_i$. Signal $\sigma_i$ induces posterior expected receiver utility $r_j(\sigma_i)$ and sender utility $s_j(\sigma_i)$ for each action $j$. For a regular action  $j \neq 0$, we have $s_j(\sigma_i) = - r_j(\sigma_i)$ for all $i$ due to the zero-sum nature of our construction.  Notice that $r_i(\sigma_i) \geq 0$ for all regular actions $i \neq 0$, since otherwise the receiver would prefer action $0$ over action $i$. Consequently, for each signal $\sigma_i$ with $i \neq 0$, the receiver derives non-negative utility and the sender derives non-positive utility.
	
	We claim that merging signals $\sigma_1, \sigma_2, \ldots, \sigma_n$ --- i.e., modifying the signaling scheme to output the same signal $\sigma^*$ in lieu of each of them --- would not decrease the sender's expected utility. Recall that incentive compatibility implies that $r_i(\sigma_i)=\max_{j=0}^n r_j(\sigma_i)$.  Using Jensen's inequality, we get  
	\begin{equation}\label{eq:MergeBetter}
	\sum_{i=1}^n \alpha_i r_i(\sigma_i) \geq  \max_{j=0}^n \left[ \sum_{i=1}^{n} \alpha_i  r_j(\sigma_i)  \right].
	\end{equation} 
	If the maximum in the right hand side expression of \eqref{eq:MergeBetter} is attained at $j^*=0$, the receiver will choose the special action $0$ when presented with the merged signal $\sigma^*$. Recalling that $s_i(\sigma_i)$ is non-positive for $i \neq 0$, this can only improve the sender's expected utility. Otherwise, the receiver chooses a regular action $j^* \neq 0$ when presented with $\sigma^*$, resulting in a total contribution of $ \sum_{i=1}^{n} \alpha_i  r_{j^*}(\sigma_i)$ to the receiver's expected utility from the merged signal, down from the total contribution of $\sum_{i=1}^n \alpha_i r_i(\sigma_i)$ by the original signals $\sigma_1,\ldots,\sigma_n$. Recalling the zero-sum nature of our construction for regular actions, the merged signal $\sigma^*$ contributes $\sum_{i=1}^{n} \alpha_i  s_{j^*}(\sigma_i) = - \sum_{i=1}^{n} \alpha_i  r_{j^*}(\sigma_i)$ to the sender's expected utility, up from a total contribution of $\sum_{i=1}^n \alpha_i s_i(\sigma_i) = - \sum_{i=1}^n \alpha_i r_i(\sigma_i)$ by the original signals $\sigma_1,\ldots,\sigma_n$. Therefore, the sender is not worse off by merging the signals. Moreover, interpreting $\sigma^*$ as a recommendation for action $j^*$ yields incentive compatibility.
	%
	%
\end{proof}

Therefore, in characterizing the optimal solution to our constructed persuasion instance, it suffices to analyze two-signal schemes of the the form guaranteed by Lemma~\ref{lem:TwoSignal}. For such a scheme, we denote the  signal that recommends the special action $0$ by $\sigma_+$ (indicating that the sender derives positive utility $\epsilon$), and denote the other signal by $\sigma_-$ (indicating that the sender derives negative utility, as we will show). 	For convenience, in the following discussion we use the expression ``payoff from a signal" to signify the expected payoff of a player conditioned on that signal  multiplied by the probability of that signal. For example, the \emph{sender's expected payoff from signal $\sigma_-$} equals the sender's expected payoff conditioned on signal $\sigma_-$ multiplied by the overall probability that the scheme outputs $\sigma_-$, assuming the receiver follows the scheme's (incentive compatible) recommendations. We also use the expression ``payoff from an action in a signal'' to signify the posterior expected payoff of a player for that action conditioned on the signal, multiplied by the probability that the scheme outputs the signal. For example, the \emph{receiver's expected payoff from action $i$ in signal $\sigma_+$} equals  $ \alpha_+ \cdot r_i(\sigma_+)$, where $r_i(\sigma_+)$ is the receiver's posterior expected payoff from action $i$ given signal $\sigma_+$, and $\alpha_+$ is the overall probability of signal $\sigma_+$.

\begin{lemma}\label{lem:SenderUpperBound}
	Fix an incentive-compatible scheme with signals $\sigma_-$ and $\sigma_+$ as described above. The sender's expected payoff from signal $\sigma_-$ is at most $-\frac{1}{2} $. Moreover, if the sender' expected payoff from $\sigma_-$ is exactly $-\frac{1}{2}$, then for each regular action $i$ the expected payoff of both the sender and the receiver from action $i$ in signal $\sigma_+$ equals $0$.
\end{lemma}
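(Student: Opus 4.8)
The plan is to exploit two features of the constructed instance: the zero-sum nature of the regular actions, and the calibration $a_i+b_i=1$ that follows from adding Equations~\eqref{eq:ConstructCond1} and~\eqref{eq:ConstructCond2}, which (as already observed) makes the prior expected receiver utility of every regular action exactly $\frac12$. Fix the given two-signal incentive-compatible scheme, and recall its structure: $\sigma_+$ recommends the special action $0$ and $\sigma_-$ recommends some regular action, say action $k$; write $\alpha_+,\alpha_-$ for the probabilities of the two signals. The conceptual core of the argument is that two incentive-compatibility constraints must be combined: IC at $\sigma_+$ pushes essentially all of the receiver's utility from regular actions into $\sigma_-$, and IC at $\sigma_-$ then forces (at least) that entire amount onto the recommended action $k$, which by zero-sumness is exactly a loss of $\frac12$ for the sender.

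I would carry this out in four short steps, keeping everything in terms of the quantities $\alpha_\sigma\, r_i(\sigma)$ (``receiver's payoff from action $i$ in signal $\sigma$'') so the bookkeeping is valid even when $\alpha_+=0$. First, since action $0$ yields the receiver $0$ in every state, $r_0(\sigma_+)=0$, and IC of the recommendation at $\sigma_+$ gives $r_i(\sigma_+)\le 0$, hence $\alpha_+\, r_i(\sigma_+)\le 0$ for every regular action $i$. Second, splitting the prior expectation of $r_i$ over the two signals (law of total expectation) and using that this prior expectation equals $\frac12$, we get $\alpha_+\, r_i(\sigma_+)+\alpha_-\, r_i(\sigma_-)=\frac12$, so $\alpha_-\, r_i(\sigma_-)\ge\frac12$ for every regular $i$. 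Third, IC at $\sigma_-$ says action $k$ is receiver-optimal there, so $\alpha_-\, r_k(\sigma_-)\ge \alpha_-\, r_i(\sigma_-)\ge\frac12$. Fourth, since $\sigma_-$ recommends $k$ and regular actions are zero-sum, the sender's payoff from $\sigma_-$ equals $\alpha_-\, s_k(\sigma_-)=-\alpha_-\, r_k(\sigma_-)\le-\frac12$, proving the first claim.

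For the ``moreover'' part, equality forces $\alpha_-\, r_k(\sigma_-)=\frac12$; chaining this with $\alpha_-\, r_k(\sigma_-)\ge \alpha_-\, r_i(\sigma_-)\ge\frac12$ collapses every inequality, so $\alpha_-\, r_i(\sigma_-)=\frac12$ and hence $\alpha_+\, r_i(\sigma_+)=0$ for every regular $i$, and zero-sumness gives $\alpha_+\, s_i(\sigma_+)=-\alpha_+\, r_i(\sigma_+)=0$ as well. I do not expect a real obstacle here; the only points requiring care are applying the IC inequalities at \emph{both} signals (and in the correct direction) and using the action-in-signal quantities rather than raw posteriors so the degenerate case $\alpha_+=0$ needs no separate treatment.
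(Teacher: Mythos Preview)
Your proposal is correct and follows essentially the same argument as the paper: use IC at $\sigma_+$ against action $0$ to get $\alpha_+\, r_i(\sigma_+)\le 0$, combine with the law of total expectation and the calibration $\tfrac12(a_i+b_i)=\tfrac12$ to get $\alpha_-\, r_i(\sigma_-)\ge\tfrac12$, then invoke zero-sumness on the recommended regular action; for the equality case, IC at $\sigma_-$ collapses the chain. The only cosmetic difference is that you flag IC at $\sigma_-$ explicitly in step~3, whereas the paper only invokes it in the ``moreover'' clause; since $\alpha_-\, r_k(\sigma_-)\ge\tfrac12$ already follows from step~2 with $i=k$, your step~3 is not needed for the first claim but does set up the squeeze cleanly for the second.
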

\begin{proof}
	Assume that signal $\sigma_+$ [$\sigma_-$] is sent with probability $\alpha_+$ [$\alpha_-$] and induces posterior expected receiver payoff $r_i(\sigma_+) $ [$r_i(\sigma_-)$] for each action $i$. Recall from our construction that the \emph{prior} expected payoff of each regular action $i \neq 0$ equals  $\frac{1}{2}a_i + \frac{1}{2}b_i=\frac{1}{2}$. Since the prior expectation must equal the expected posterior expectation, it follows that  $\alpha_+ \cdot r_i(\sigma_+) + \alpha_- \cdot r_i(\sigma_-) = \frac{1}{2}$ when $i$ is regular. The receiver's reward from the special action is deterministically $0$, and therefore incentive compatibility implies that $r_i(\sigma_+) \leq 0$ for each regular action $i$. It follows that $\alpha_- \cdot r_i(\sigma_-) = \frac{1}{2} - \alpha_+ \cdot r_i(\sigma_+) \geq \frac{1}{2}$ for regular actions $i$. In other words, the receiver's expected payoff from each regular action  in signal $\sigma_-$ is at least $\frac{1}{2}$. By the zero-sum nature of our construction, the sender's expected payoff from each regular action in signal $\sigma_-$ is at most $-\frac{1}{2}$. Since $\sigma_-$ recommends a regular action, we conclude that the sender's expected payoff from $\sigma_-$ is at most  $-\frac{1}{2}$.
	
	Now assume that the sender's expected payoff from $\sigma_-$ is exactly $-\frac{1}{2}$. By the zero-sum property, incentive compatibility, and the above-established fact that $\alpha_- \cdot r_i(\sigma_-) \geq \frac{1}{2}$ for regular actions $i$, it follows that the receiver's expected payoff from each regular action in signal $\sigma_-$ is \emph{exactly} $\frac{1}{2}$.  Recalling that $\alpha_+ \cdot r_i(\sigma_+) + \alpha_- \cdot r_i(\sigma_-) = \frac{1}{2}$ when $i$ is regular, we conclude that the receiver's expected payoff from a regular action in signal $\sigma_+$ equals $0$. By the zero-sum property for regular actions, the same is true for the sender.

	
\end{proof}

The key to the remainder of our reduction is to choose a small enough value for the parameter $\epsilon$ --- the sender's utility from the special action --- so that the optimal signaling scheme satisfies the property mentioned in Lemma \ref{lem:SenderUpperBound}: The sender's expected payoff from signal $\sigma_-$ is exactly equal to its maximum possible value of $-\frac{1}{2}$. In other words, we must make $\epsilon$  so small so that the sender prefers to not sacrifice \emph{any} of her payoff from $\sigma_-$ in order to gain utility from the special action recommended by $\sigma_+$.  Notice that this upper bound of $-\frac{1}{2}$ is indeed achievable: the uninformative signaling scheme which recommends an arbitrary regular action has this property. 	We now show that  a ``small enough" $\epsilon$ indeed exists. The key idea behind this existence proof is the following: We start with a signaling scheme which maximizes the sender's payoff from $\sigma_-$ at $-\frac{1}{2}$, and moreover corresponds to a vertex of the polytope of incentive-compatible signatures. When $\eps>0$ is smaller than the ``bit complexity'' of the vertices of this polytope, moving to a different vertex  --- one with lower sender payoff from $\sigma_-$ ---  will result in more utility loss from $\sigma_-$ than utility gain from $\sigma_+$. We show that $\eps >0$ with polynomially many bits suffices, and can be computed in polynomial time.


Let $\mathcal{P}_2$ be the family of all \emph{realizable} two-signal signatures (again, ignoring action $0$). It is easy to see that $\mathcal{P}_2$ is a polytope, and importantly, all entries of any vertex of $\mathcal{P}_2$ are integer multiples of $\frac{1}{2^n}$. This is because every vertex of $\mathcal{P}_2$ corresponds to a deterministic signaling scheme which partitions the set of states of nature, and every state of nature occurs with probability $1/2^n$. As a result, all vertices of $\mathcal{P}_2$ have $\mathcal{O}(n)$ bit complexity. 

To ease our discussion, we use a compact representation for points in $\mathcal{P}_2$. In particular, any point in $\mathcal{P}_2$ can be captured by $n+1$ variables: variable $p$ denotes the probability of sending signal $\sigma_+$, and variable $y_i$  denotes the joint probability of signal $\sigma_+$ and the event that action $i$ has type $1_i$. It follows that joint probability of type $2_i$ and signal $\sigma_+$ is $p-y_i$, and the probabilities associated with signal  $\sigma_-$ are determined by the constraint that $M^++M^-$ is the all-$\frac{1}{2}$ matrix. With some abuse of notation, we use $\M(p,\bvec{y}) = (M^+,M^-)$ to denote the  signature in $\mathcal{P}_2$ corresponding to the probability $p$ and n-dimensional vector $\bvec{y}$. Now we consider the following two linear programs.

\begin{lp}\label{lp:Usigma+-}
	\maxi{p\epsilon + u}
	\st
	\con{\M(p,\bvec{y}) \in \mathcal{P}_2}
	\qcon{y_i a_i + (p-y_i)b_i \leq 0}{i = 1,\ldots,n}
	\qcon{u \leq -[(\frac{1}{2}-y_i)a_i + (\frac{1}{2}-p+y_i)b_i]}{ i = 1,\ldots,n}
\end{lp}
\begin{lp}\label{lp:Usigma+}
	\maxi{u}
	\st
	\con{\M(p,\bvec{y}) \in \mathcal{P}_2}
	\qcon{y_i a_i + (p-y_i)b_i \leq 0}{i = 1,\ldots,n}
	\qcon{u \leq -[(\frac{1}{2}-y_i)a_i + (\frac{1}{2}-p+y_i)b_i]}{ i = 1,\ldots,n}
\end{lp}
Linear programs \eqref{lp:Usigma+-} and \eqref{lp:Usigma+} are identical except for the fact that the objective of LP \eqref{lp:Usigma+-} includes the additional term $p\epsilon$. LP \eqref{lp:Usigma+-} computes precisely the optimal expected sender utility in our constructed persuasion instance: The first set of inequality constraints are the incentive-compatibility constraints for the signal $\sigma_+$ recommending action $0$; The second set of inequality constraints state that the sender's payoff from signal $\sigma_-$ is the minimum  among all actions, as implied by the zero-sum nature of our construction; The objective is the sum of the sender's payoffs from signals $\sigma_+$ and $\sigma_-$. Notice that the incentive-compatibility constraints for signal $\sigma_-$, namely $(\frac{1}{2}-y_i)a_i + (\frac{1}{2}-p+y_i)b_i \geq 0$ for all $i \neq 0$, are implicitly satisfied  because $\frac{1}{2}a_i + \frac{1}{2}b_i = \frac{1}{2}$  by our construction and $ (\frac{1}{2}-y_i)a_i + (\frac{1}{2}-p+y_i)b_i = \frac{1}{2}a_i + \frac{1}{2}b_i - [y_i a_i + (p-y_i)b_i] \geq \frac{1}{2} - 0 > 0$. On the other hand, LP \eqref{lp:Usigma+} maximizes the sender's expected payoff from signal $\mathit{\sigma_-}$. Observe that the optimal objective value of LP \eqref{lp:Usigma+} is precisely $-\frac{1}{2}$ because $u \leq -[(\frac{1}{2}-y_i)a_i + (\frac{1}{2}-p+y_i)b_i] \leq -\frac{1}{2}$ for all $i\neq 0$,  and equality is attained, for example, at $p=0$ and $\bvec{y}=0$. 

Let $\tilde{\mathcal{P}_2}$ be the set of all feasible $(u,\M(p,\bvec{y}))$ for LP \eqref{lp:Usigma+-} (and LP \eqref{lp:Usigma+}). Obviously, $\tilde{\mathcal{P}_2}$ is a polytope. We now argue that all vertices of $\tilde{\mathcal{P}_2}$ have  bit complexity polynomial in $n$ and the bit complexity of  $\bvec{x} \in (0,\frac{1}{2})^n$. In particular, denote  the bit complexity of $\bvec{x}$ by $\ell$. Since  $a_i,b_i$ are computed by a two-variable two-equation linear system involving $x_i$ (Equations \eqref{eq:ConstructCond1} and \eqref{eq:ConstructCond2}), they each have $O(\ell)$ bit complexity. Consequently, all the explicitly described facets of $\tilde{\mathcal{P}_2}$ have $\O(\ell)$ bit complexity. Moreover, since each vertex of $\mathcal{P}_2$ has $\mathcal{O}(n)$ bit complexity, each facet of $\mathcal{P}_2$ then has $\mathcal{O}(n^3)$ bit complexity, i.e., the coefficients of inequalities that determine the facets have $\mathcal{O}(n^3)$ bit complexity. This is due to the fact that facet complexity of a rational polytope is upper bounded by a \emph{cubic} polynomial of the vertex complexity and \emph{vice versa} (see, e.g., \cite{Schrijver03}). To sum up, any facet of polytope $\tilde{\mathcal{P}_2}$ has bit complexity $\O(n^3 + \ell)$, and therefore any vertex of $\tilde{\mathcal{P}_2}$ has $\O(n^9 \ell^3)$ bit complexity.

Let the polynomial $B(n, \ell) = O(n^9 \ell^3)$ be an upper bound on the maximum bit complexity of vertices of $\tilde{\mathcal{P}_2}$. Now we are ready to set the value of $\epsilon$. LP \eqref{lp:Usigma+-} always has an optimal vertex solution which we denote as $(u^*,\M^*)$. Recall that $u \leq -\frac{1}{2}$ for all points $(u,\M(p,\bvec{y}))$ in $\tilde{\mathcal{P}_2}$ and $u = -\frac{1}{2}$ is attainable at some vertices. Since all vertices of $\tilde{\mathcal{P}_2}$ have $B(n, \ell)$ bit complexity,  $(u^*,M^*)$ must either satisfy either $u^* = -\frac{1}{2}$ or $u^* \leq -\frac{1}{2} - 2^{-B(n,\ell)}$. Therefore, it suffices to set  $\epsilon = 2^{- n \cdot B(n,\ell)}$, which is a number with polynomial bit complexity. As a result, any optimal vertex solution to LP \eqref{lp:Usigma+-} must  satisfy $u^* = -\frac{1}{2}$, since the loss incurred by moving to any other vertex with $u < -\frac{1}{2}$ can never be compensated for by the other term $p\epsilon < \eps$.

With such a small value of $\epsilon$, the sender's goal is to send signal $\sigma_+$ with probability as high as possible, subject to the constraint that her utility from $\sigma_-$ is precisely $-\frac{1}{2}$. In other words, signal $\sigma_+$ must induce expected receiver/sender utility precisely $0$ for each regular action $i \neq 0$ (see Lemma \ref{lem:SenderUpperBound}). This characterization of the optimal scheme now allows us to determine whether $\M(\bvec{x}) \in \mathcal{K}$ by inspecting the sender's optimal expected utility. The following Lemma completes our proof of Theorem \,\ref{thm:hardness}.

\begin{lemma}\label{lem:MembershipJudge}
	Given the small enough value of $\epsilon$ described above, the sender's expected utility in the optimal signaling scheme for our constructed persuasion instance is at least $\frac{1}{2}(\epsilon-1)$ \emph{if and only if} $\M(\bvec{x}) \in \mathcal{K}$.
\end{lemma}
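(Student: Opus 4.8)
The plan is to exploit the structure of the optimal scheme that has already been pinned down. By Lemma~\ref{lem:TwoSignal}, Lemma~\ref{lem:SenderUpperBound}, and the choice of $\epsilon$ (which forces $u^* = -\tfrac12$ at any optimal vertex of LP~\eqref{lp:Usigma+-}), an optimal incentive-compatible scheme uses two signals $\sigma_+,\sigma_-$ (recommending action $0$ and some regular action, respectively), its expected sender payoff from $\sigma_-$ is exactly $-\tfrac12$, and consequently $\sigma_+$ induces expected receiver and sender utility $0$ from every regular action. Writing $p=\Pr[\sigma_+]$ and $y_i=M^+_{i,1}$, this last property is the equation $y_i a_i+(p-y_i)b_i=0$; using the identity $2x_i(b_i-a_i)=b_i$ that follows from~\eqref{eq:ConstructCond1}, it forces $y_i=2px_i$ whenever $p>0$. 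Hence the two-signal signature of the optimal scheme lies on the affine curve $p\mapsto\M\bigl(p,(2px_i)_i\bigr)$ inside the polytope $\mathcal{P}_2$ of realizable two-signal signatures, and the corresponding sender utility is $p\epsilon-\tfrac12$.

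The first step I would carry out is a short computation, using~\eqref{eq:ConstructCond1}, \eqref{eq:ConstructCond2} and $\tfrac12(a_i+b_i)=\tfrac12$, showing that along this curve the (unnormalized) receiver payoff from every regular action is $0$ in $\sigma_+$ and $\tfrac12$ in $\sigma_-$, independently of $p$. This has two consequences: first, for every $p$ with $\M(p,(2px_i)_i)\in\mathcal{P}_2$ this point is the signature of a genuine incentive-compatible scheme for our instance --- $\sigma_+$ recommends action $0$ (the receiver is indifferent among all actions and ties are broken toward the sender, who strictly prefers action $0$), $\sigma_-$ recommends an arbitrary regular action (which the receiver strictly prefers to action $0$), and the sender payoff from $\sigma_-$ is $-\tfrac12$ --- with sender utility $p\epsilon-\tfrac12$; second, conversely, the optimal scheme's signature is one such point. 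Since $\mathcal{P}_2$ is a polytope containing the uninformative signature $\M(0,\bvec{0})$ and the curve is affine, the set of feasible $p$ is a segment $[0,p_{\max}]$, where $p_{\max}<1$ (the value $p=1$ would force $\M(1,(2x_i)_i)$ to equal the all-$\tfrac12$ prior matrix, hence $x_i=\tfrac14$ for all $i$, which was excluded). Therefore the optimal sender utility equals $p_{\max}\epsilon-\tfrac12$, and the lemma reduces to the equivalence $p_{\max}\ge\tfrac12\iff\M(\bvec{x})\in\mathcal{K}$.

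To finish, observe that $\M\bigl(\tfrac12,(x_i)_i\bigr)$ is precisely the signature $\M(\bvec{x})$, and that $\M(\bvec{x})$ automatically satisfies the defining constraint $M^+_{i,1}+M^+_{i,2}=\tfrac12$ of the Khintchine polytope; hence $\M(\bvec{x})$ is realizable (lies in $\mathcal{P}_2$) if and only if $\M(\bvec{x})\in\mathcal{K}$. If $\M(\bvec{x})\in\mathcal{K}$, then $p=\tfrac12$ is feasible, so $p_{\max}\ge\tfrac12$; equivalently, one can exhibit the scheme realizing $\M(\bvec{x})$ directly and verify, as above, that it is incentive compatible with sender utility $\tfrac12(\epsilon-1)$. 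Conversely, if $p_{\max}\ge\tfrac12$, then by convexity of $\mathcal{P}_2$ the point $\M(\tfrac12,(x_i)_i)=\M(\bvec{x})$ lies on the segment between $\M(0,\bvec{0})$ and $\M(p_{\max},(2p_{\max}x_i)_i)$, so $\M(\bvec{x})\in\mathcal{P}_2$ and thus $\M(\bvec{x})\in\mathcal{K}$. Chaining the equivalences gives the claim.

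The main obstacle is not an isolated hard step but the reduction of the optimal scheme to the one-parameter family $\M(p,(2px_i)_i)$: this rests on combining the zero-sum construction, Lemma~\ref{lem:SenderUpperBound}, the small-$\epsilon$/vertex argument for LP~\eqref{lp:Usigma+-}, and the algebraic identities designed into $a_i,b_i$ through~\eqref{eq:ConstructCond1}--\eqref{eq:ConstructCond2}, while keeping careful track of normalized versus unnormalized payoffs. Once that family is identified, the connection to the Khintchine polytope is exactly the observation that its midpoint $p=\tfrac12$ is $\M(\bvec{x})$, together with convexity of $\mathcal{P}_2$.
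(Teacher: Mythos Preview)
Your proposal is correct and follows essentially the same route as the paper: both directions hinge on the identity $y_i=2px_i$ (equivalently, the paper's ratio $\frac{y_i}{x_i}=\frac{p-y_i}{0.5-x_i}$) forced by $y_ia_i+(p-y_i)b_i=0$ together with~\eqref{eq:ConstructCond1}, and on the observation that $\M(\bvec{x})=\M\bigl(\tfrac12,(x_i)_i\bigr)$. The only cosmetic difference is in the $\Rightarrow$ direction: where you invoke convexity of $\mathcal{P}_2$ to place $\M(\bvec{x})$ on the segment between $\M(0,\bvec{0})$ and $\M(p_{\max},(2p_{\max}x_i)_i)$, the paper gives the explicit post-processing that realizes this convex combination (resend $\sigma_+$ with probability $\tfrac{1}{2p}$, otherwise send $\sigma_-$); these are the same step phrased geometrically versus constructively.
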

\begin{proof}
	$\Leftarrow$: If $\M(\bvec{x}) \in \mathcal{K}$, then by our choice of $a_i,b_i$ (recall Equations \eqref{eq:ConstructCond1} and \eqref{eq:ConstructCond2}), the signaling scheme implementing $\M(\bvec{x})$ is incentive compatible,  the sender's payoff from signal $\sigma_+$ is $\frac{1}{2} \epsilon$, and her payoff from $\sigma_-$ is  $-\frac{1}{2}$. Therefore, the optimal sender utility is at least  $\frac{1}{2} \epsilon - \frac{1}{2} $.
	
	$\Rightarrow$: Let $\M (p,\bvec{y})$ be the signature of a vertex optimal signaling scheme in LP \eqref{lp:Usigma+-}. By our choice of $\epsilon$ we know that the sender payoff from signal $\sigma_-$ must be exactly $ - \frac{1}{2}$. Therefore, to achieve overall sender utility at least $\frac{1}{2} \epsilon - \frac{1}{2}$, signal $\sigma_+$ must be sent with probability $p \geq \frac{1}{2}$, and the receiver's payoff from each regular action $i \neq 0$ in signal $\sigma_+$ is exactly $0$. That is, $y_i a_i + (p-y_i)b_i = 0$. By construction, we also have that $x_i a_i + (0.5 - x_i)b_i=0$  and $a_i,b_i \neq 0$, which imply that $\frac{y_i}{x_i} = \frac{p-y_i}{0.5 - x_i}$ and, furthermore, that $y_i \geq x_i$ since $p \geq \frac{1}{2}$. Now let $\varphi$ be a signaling scheme with the signature $\M(p,\bvec{y})$. We can post-process $\varphi$ so it has signature  $\M(\bvec{x})$ as follows: whenever $\varphi$ outputs the  signal $\sigma_+$, flip a biased random coin to output $\sigma_+$  with probability  $\frac{0.5}{p}$  and output $\sigma_-$ otherwise.  By using the identity $\frac{y_i}{x_i} = \frac{p-y_i}{0.5 - x_i}$, it is easy to see that this adjusted signaling scheme has signature $\M(\bvec{x})$.
\end{proof}

\newpage

\section{Omitted Proofs from Section \ref{sec:general}}
\label{app:general}

\subsection{A Bicriteria FPTAS}
\label{app:general:fptas}

\subsubsection*{Proof of Lemma \ref{lem:blackbox:ic}}
Fix $\eps$, $K$, and $\lambda$, and let $\varphi$ denote the resulting signaling scheme implemented by Algorithm~\ref{alg:blackbox}. Let $\theta \sim \lambda$ denote the input to $\varphi$, and $\sigma \sim \varphi(\theta)$ denote its output. First, we condition on the empirical sample $\tilde{\lambda} = \set{\theta_1,\ldots,\theta_K}$ without conditioning on the index $\ell$ of the input state of nature $\theta$, and show that $\eps$-incentive compatibility holds subject to this conditioning. The principle of deferred decisions implies that, subject to this conditioning, $\theta$ is uniformly distributed in  $\tilde{\lambda}$. By definition of linear program \eqref{lp:blackbox:empirical}, the signaling scheme $\tilde{\varphi}$ computed in Step \ref{step:lpempirical} is $\eps$-incentive compatible scheme for the empirical distribution $\tilde{\lambda}$. Since $\sigma \sim \tilde{\varphi}(\theta)$ and $\theta$ is conditionally distributed according to $\tilde{\lambda}$, this implies that all $\eps$-incentive compatibility constraints conditionally hold; formally, the following holds for each pair of actions $i$ and $j$:
\[ \Ex [ r_i(\theta) | \sigma=\sigma_i , \tilde{\lambda} ] \geq \Ex [ r_j(\theta) | \sigma=\sigma_i , \tilde{\lambda} ]  - \eps\]

Removing the conditioning on $\tilde{\lambda}$ and invoking linearity of expectations shows that $\varphi$ is $\eps$-incentive compatible for $\lambda$, completing the proof.

\subsubsection*{Proof of Lemma \ref{lem:blackbox:util}}
As in the proof of Lemma \ref{lem:blackbox:ic}, we condition on the empirical sample $\tilde{\lambda} = \set{\theta_1,\ldots,\theta_K}$ and observe that $\theta$ is uniformly distributed in $\tilde{\lambda}$ after this conditioning. The conditional expectation of sender utility then equals $\sum_{k=1}^K \sum_{i=1}^n \frac{1}{K} \tilde{\varphi}(\theta_k,\sigma_i) s_i(\theta_k)$, where $\tilde{\varphi}$ is the signaling scheme computed  in Step \ref{step:lpempirical} based on $\tilde{\lambda}$. Since this is precisely the optimal value of the LP \eqref{lp:blackbox:empirical} solved in Step \ref{step:lpempirical}, removing the conditioning and invoking linearity of expectations completes the proof.

\subsubsection*{Proof of Lemma \ref{lem:blackbox:apx}}
Recall that linear program \eqref{lp:persuasion} solves for the optimal incentive compatible scheme for $\lambda$. It is easy to see that the linear program \eqref{lp:blackbox:empirical} solved in step \ref{step:lpempirical} is simply the instantiation of LP \eqref{lp:persuasion} for the empirical distribution $\tilde{\lambda}$ consisting of $K$ samples from $\lambda$. To prove the lemma, it would suffice to show that the optimal incentive-compatible scheme $\varphi^*$ corresponding to LP \eqref{lp:persuasion} remains $\eps$-incentive compatible  and $\eps$-optimal for the distribution $\tilde{\lambda}$, with high probability. Unfortunately, this approach fails because polynomially-many samples from $\lambda$ are not sufficient to approximately preserve the incentive compatibility constraints corresponding to low-probability signals (i.e., signals which are output with probability smaller than inverse polynomial in $n$). Nevertheless, we show in Claim \ref{claim:probsignals} that there exists an approximately optimal solution $\hat{\varphi}$ to LP \eqref{lp:persuasion} with the property that every signal $\sigma_i$ is either \emph{large}, which we define as being output by $\hat{\varphi}$ with probability at least $\frac{\eps}{4n}$ assuming $\theta \sim \lambda$, or \emph{honest} in that only states of nature $\theta$ with $i \in \argmax_j r_j(\theta)$ are mapped to it.  It is easy to see that sampling preserves incentive-compatibility exactly for honest signals. As for large signals, we employ tail bounds and the union bound to show that  polynomially many samples suffice to approximately preserve incentive compatibility (Claim \ref{claim:preservesignals}).

\begin{claim}\label{claim:probsignals}
	There is a signaling scheme $\hat{\varphi}$ which is incentive compatible for $\lambda$,  induces sender utility $u_s(\hat{\varphi},\lambda) \geq OPT - \frac{\eps}{2}$ on $\lambda$, and such that every signal of $\hat{\varphi}$ is either large or honest.
\end{claim}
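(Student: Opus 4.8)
The plan is to start from the optimal IC scheme $\varphi^*$ for $\lambda$ (the optimum of LP~\eqref{lp:persuasion}) and repair it so that every signal becomes either large or honest, losing at most $\frac{\eps}{2}$ in sender utility. The natural operation is to take every signal $\sigma_i$ that is \emph{small} --- output with probability $\alpha_i < \frac{\eps}{4n}$ under $\theta\sim\lambda$ --- and redistribute its mass. Specifically, for each state of nature $\theta$ currently mapped (with some probability) to a small signal $\sigma_i$, instead re-route that probability mass to the ``honest'' signal $\sigma_{j}$ for some $j \in \argmax_k r_k(\theta)$ (ties broken toward the sender, matching the model's convention). Call the resulting scheme $\hat\varphi$. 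By construction, every signal that was small is now honest (it only receives states $\theta$ that are genuinely maximizing for the recommended action), every signal that was already large remains large (re-routing only adds mass), and honesty trivially implies incentive compatibility for those signals; so $\hat\varphi$ is IC for $\lambda$ and each of its signals is large or honest.

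The remaining point is the utility bound $u_s(\hat\varphi,\lambda)\ge OPT - \frac{\eps}{2}$. The change in sender utility is confined to the states of nature whose mass was moved, and the total probability of all such (state, small-signal) pairs is at most $\sum_{i\ \mathrm{small}} \alpha_i \le n\cdot \frac{\eps}{4n} = \frac{\eps}{4}$. Since payoffs lie in $[-1,1]$, the sender's utility on the re-routed mass changes by at most $2$ per unit of probability in the worst case, giving a loss of at most $2\cdot\frac{\eps}{4} = \frac{\eps}{2}$; hence $u_s(\hat\varphi,\lambda)\ge u_s(\varphi^*,\lambda) - \frac{\eps}{2} = OPT - \frac{\eps}{2}$. (One can tighten the constant, but $\frac{\eps}{2}$ is all that is needed here and is consistent with the $\frac{\eps}{4n}$ threshold chosen above.)

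I expect the only real subtlety is the bookkeeping around \emph{which} signals count as honest after re-routing and making sure the re-routing does not accidentally turn a large signal small or violate the ``large'' threshold for signals that gain mass --- but since we only ever \emph{add} probability mass to surviving signals, monotonicity handles this cleanly. A minor point is ensuring the re-routing target $\sigma_j$ with $j\in\argmax_k r_k(\theta)$ is itself a surviving (large or honest) signal; if $\sigma_j$ happened to be small as well, we simply iterate the argument, or equivalently define $\hat\varphi$ directly by: map each $\theta$ to a maximizing action's signal whenever $\varphi^*$ would have sent a small signal. No fixed point issue arises because the total re-routed mass is bounded by $\frac{\eps}{4}$ regardless of the order of operations. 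This is why I would phrase $\hat\varphi$ as a single explicit modification of $\varphi^*$ rather than an iterative process, and then verify the three properties (IC, large-or-honest, utility) in turn.
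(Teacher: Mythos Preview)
Your proposal is correct and matches the paper's proof essentially line for line: start from the optimal IC scheme $\varphi^*$, re-route every state mapped to a small signal to the receiver-optimal action's signal in a single pass, then check IC (honest additions preserve it), the large-or-honest property, and the $\frac{\eps}{2}$ utility bound via $\sum_{i\text{ small}}\alpha_i \le \frac{\eps}{4}$ and payoffs in $[-1,1]$. The only difference is cosmetic---you spend a paragraph worrying about the re-routing target itself being small and then (correctly) resolve it by defining $\hat\varphi$ in one shot, whereas the paper simply states the one-shot definition upfront and leaves the verification as ``easy to see.''
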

\begin{proof}
	Let $\varphi^*$ be the optimal incentive-compatible scheme for $\lambda$ --- i.e. the optimal solution to LP \eqref{lp:persuasion}.  We call a signal $\sigma$ \emph{small} if it is output by $\varphi^*$ with probability less than $\frac{\eps}{4n}$, i.e. if $\sum_{\theta \in \Theta} \lambda_\theta \varphi^*(\theta,\sigma) < \frac{\eps}{4n}$,  and otherwise we call it \emph{large}. Let $\hat{\varphi}$ be the scheme which is defined as follows: on input $\theta$, it first samples $\sigma \sim \varphi^*(\theta)$; if $\sigma$ is large then $\hat{\varphi}$ simply outputs $\sigma$, and otherwise it recommends an action maximizing receiver utility in state of nature $\theta$ ---- i.e., outputs $\sigma_{i'}$ for $i' \in \argmax_i r_i(\theta)$. It is easy to see that every signal of $\hat{\varphi}$ is either large or honest. Moreover, since $\varphi^*$ is incentive compatible and $\hat{\varphi}$ only replaces recommendations of $\varphi^*$ with ``honest'' recommendations, it is easy to check that $\hat{\varphi}$ is incentive compatible for $\lambda$. Finally, since the total probability of small signals in $\varphi^*$ is at most $\frac{\eps}{4}$, and utilities are in $[-1,1]$, the sender's expected utility from $\hat{\varphi}$ is no worse than $\frac{\eps}{2}$ smaller than her expected utility from $\varphi^*$.
\end{proof}

\begin{claim}\label{claim:preservesignals}
	Let $\hat{\varphi}$ be the signaling scheme from Claim \ref{claim:probsignals}. With probability at least $1-\frac{\eps}{8}$ over the sample $\tilde{\lambda}$, $\hat{\varphi}$  is $\eps$-incentive compatible for $\tilde{\lambda}$, and moreover $u_s(\hat{\varphi}, \tilde{\lambda}) \geq u_s(\hat{\varphi},\lambda) - \frac{\eps}{4}$. 
\end{claim}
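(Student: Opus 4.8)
The plan is to prove the two assertions --- $\eps$-incentive compatibility of $\hat\varphi$ for $\tilde\lambda$, and $u_s(\hat\varphi,\tilde\lambda)\ge u_s(\hat\varphi,\lambda)-\tfrac{\eps}{4}$ --- as separate high-probability events and combine them by a union bound. Throughout, I would use the observation already exploited in the proofs of Lemmas~\ref{lem:blackbox:ic} and~\ref{lem:blackbox:util}: by the principle of deferred decisions the augmented empirical distribution $\tilde\lambda=\{\theta_1,\dots,\theta_K\}$ is distributed as $K$ independent draws from $\lambda$, and all probabilities below are over this sample. Since $\hat\varphi$ is supported on the whole support of $\lambda$, it is well defined on each $\theta_k$, and $u_s(\hat\varphi,\tilde\lambda)=\frac1K\sum_{k=1}^K\sum_{i=1}^n\hat\varphi(\theta_k,\sigma_i)s_i(\theta_k)$ is the empirical average of the i.i.d.\ random variables $Z_k=\sum_i\hat\varphi(\theta_k,\sigma_i)s_i(\theta_k)$, each lying in $[-1,1]$ (a convex combination of payoffs in $[-1,1]$) and with mean $u_s(\hat\varphi,\lambda)$. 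By Hoeffding's inequality, $u_s(\hat\varphi,\tilde\lambda)<u_s(\hat\varphi,\lambda)-\tfrac{\eps}{4}$ occurs with probability at most $\exp(-K\eps^2/32)\le\tfrac{\eps}{16}$ for the stated $K$.

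For incentive compatibility I would split the signals of $\hat\varphi$ into the two classes supplied by Claim~\ref{claim:probsignals}. If $\sigma_i$ is \emph{honest}, then $\hat\varphi(\theta,\sigma_i)>0$ forces $i\in\argmax_j r_j(\theta)$, so $\hat\varphi(\theta_k,\sigma_i)r_i(\theta_k)\ge\hat\varphi(\theta_k,\sigma_i)r_j(\theta_k)$ holds for every $k$ and every action $j$; summing over $k$ shows the (even unrelaxed) empirical IC constraint for $\sigma_i$ holds deterministically. If $\sigma_i$ is \emph{large}, i.e.\ $\alpha_i:=\sum_\theta\lambda_\theta\hat\varphi(\theta,\sigma_i)\ge\tfrac{\eps}{4n}$, fix an action $j$ and set $W_k=\hat\varphi(\theta_k,\sigma_i)\bigl(r_i(\theta_k)-r_j(\theta_k)+\eps\bigr)$. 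Then $\tfrac1K\sum_k W_k\ge 0$ is precisely the relaxed empirical IC constraint for the pair $(i,j)$ appearing in LP~\eqref{lp:blackbox:empirical}; each $W_k$ takes values in an interval of length at most $4$ (namely $[-2+\eps,2+\eps]$); and $\Ex[W_k]=\sum_\theta\lambda_\theta\hat\varphi(\theta,\sigma_i)\bigl(r_i(\theta)-r_j(\theta)\bigr)+\eps\alpha_i\ge\eps\alpha_i\ge\tfrac{\eps^2}{4n}$, where the first inequality uses that $\hat\varphi$ is (exactly) IC for $\lambda$. Hoeffding's inequality then bounds $\Pr[\tfrac1K\sum_k W_k<0]$ by $\exp\!\bigl(-K\eps^4/(128n^2)\bigr)$. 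Union-bounding over the at most $n$ large signals and the $n$ actions $j$, and substituting $K\ge\tfrac{256 n^2}{\eps^4}\log\tfrac{4n}{\eps}$ (using $\log(16n^2/\eps)\le 2\log(4n/\eps)$ for $\eps\le 1$), the total probability that any relaxed IC constraint of $\hat\varphi$ fails on $\tilde\lambda$ is at most $\tfrac{\eps}{16}$. A final union bound over the two bad events gives the claimed probability $1-\tfrac{\eps}{8}$.

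I expect the only genuinely delicate point --- and the reason Claim~\ref{claim:probsignals} is invoked rather than arguing directly about $\hat\varphi=\varphi^*$ --- to be the large-signal case: a polynomial number of samples cannot preserve the IC constraint of a signal whose probability is sub-inverse-polynomial, so one cannot simply bound the total-variation distance between $\tilde\lambda$ and $\lambda$. What rescues the argument is that, after relaxation, the population slack of a large signal's IC constraint is at least $\eps\alpha_i\ge\eps^2/(4n)$ purely from the additive $\eps$ term --- even when $\hat\varphi$'s exact IC slack is zero --- and this inverse-polynomial margin is exactly what makes $K=\poly(n,\tfrac1\eps)$ Hoeffding samples sufficient; the low-probability signals, which would violate this, have already been converted to honest signals in Claim~\ref{claim:probsignals}.
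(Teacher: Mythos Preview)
Your proof is correct and follows essentially the same approach as the paper: honest signals satisfy the empirical IC constraints deterministically, while for large signals and for sender utility you apply Hoeffding's inequality together with a union bound, exploiting the margin $\eps\alpha_i\ge\eps^2/(4n)$ that the relaxation provides. The only cosmetic difference is that the paper concentrates the unrelaxed quantities $\Ex_\theta[\hat\varphi(\theta,\sigma_i)(r_i(\theta)-r_j(\theta))]$ to additive error $\eps^2/(4n)$ and then argues this translates to $\eps$-IC for large signals, whereas you fold the $+\eps$ term into the random variable $W_k$ and concentrate directly; your formulation is arguably cleaner since it avoids a separate appeal to the empirical signal probability.
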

\begin{proof}
	Recall that $\hat{\varphi}$ is incentive compatible for $\lambda$, and every signal is either large or honest. Since $\tilde{\lambda}$ is a set of samples from $\lambda$, it is easy to see that incentive compatibility constraints pertaining to the honest signals continue to hold over $\tilde{\lambda}$. It remains to show that incentive compatibility constraints for large signals, as well as expected sender utility, are approximately preserved when replacing $\lambda$ with $\tilde{\lambda}$. 
	
	Recall that incentive-compatibility requires that $\Ex_{\theta}[ \hat{\varphi}(\theta,\sigma_i) (r_i(\theta) - r_j(\theta))] \geq 0$ for each $i,j \in [n]$. Moreover, the sender's expected utility can be written as $\Ex_{\theta} [\sum_{i=1}^n \hat{\varphi}(\theta,\sigma_i) s_i(\theta)]$. The left hand side of each incentive compatibility constraint evaluates the expectation of a fixed function of $\theta$ with range $[-2,2]$, whereas the sender's expected utility evaluates the expectation of a function of $\theta$ with range in $[-1,1]$. Standard tail bounds and the union bound, coupled with our careful choice of the number of samples $K$, imply that replacing distribution $\lambda$ with $\tilde{\lambda}$ approximately preserves each of these $n^2+1$ quantities to within an additive error of $\frac{\eps^2}{4n}$ with probability at least $1-\frac{\eps}{8}$. This bound on the additive loss translates to $\eps$-incentive compatibility for the large signals, and is less than the   permitted decrease of $\frac{\eps}{4}$ for expected sender utility.
\end{proof}

The above claims, coupled with the fact that sender payoffs are bounded in $[-1,1]$, imply that the expected optimal value of linear program \eqref{lp:blackbox:empirical} is at least $OPT- \eps$, as needed.

\subsection{Information-Theoretic Barriers}
\label{app:general:barriers}

\subsection*{Impossibility of Incentive Compatibility (Proof of Theorem\,\ref{thm:hardness:blackbox} (a))}
Consider a setting with two states of nature, which we will conveniently refer to as \emph{rainy} and \emph{sunny}. The receiver, who we may think of as a daily commuter, has two actions: \emph{walk} and \emph{drive}. The receiver slightly prefers driving on a rainy day, and strongly prefers walking on a sunny day. We summarize the receiver's payoff function, parametrized by $\delta > 0$, in Table \ref{table:rainshine}. The sender, who we will think of as a municipality  with black-box sample access to weather reports drawn from the same distribution as the state of nature, strongly prefers that the receiver chooses walking regardless of whether it is sunny or rainy: we let $s_{walk}=1$ and $s_{drive}=0$ in both states of nature.

\begin{table}
	\small
	\begin{center}
		\begin{tabular}{ | c | c | c | }
			\hline
			& Rainy & Sunny \\ \hline 
			Walk & $1- \delta$ & 1 \\ \hline
			Drive & 1 & 0 \\
			\hline
		\end{tabular}
	\end{center}
	\caption{Receiver's Payoffs in Rain and Shine Example}
	\label{table:rainshine}
\end{table}

Let $\lambda_r$ be the point distribution on the rainy state of nature, and let $\lambda_s$ be such that $\Pr_{\lambda_s} [\mbox{rainy}] = \frac{1}{1+2\delta}$ and $\Pr_{\lambda_s} [\mbox{sunny}] =\frac{2 \delta}{1+2 \delta}$. It is easy to see that the unique direct incentive-compatible scheme for $\lambda_r$ always recommends driving, and hence results in expected sender utility of $0$. In contrast, a simple calculation shows that always recommending walking is incentive compatible for $\lambda_s$, and results in expected sender utility $1$. If algorithm $\A$ is incentive compatible and $c$-optimal for a constant $c<1$, then $\A(\lambda_r)$ must never recommend walking whereas $\A(\lambda_s)$ must recommend walking with constant probability at least $(1-c)$ overall (in expectation over the input state of nature $\theta \sim \lambda_s$  as well as all other internal randomness). Consequently, given a black box distribution $\D \in \set{\lambda_r,\lambda_s}$, evaluating $\A(\D,\theta)$ on a random draw $\theta \sim \D$ yields a tester which distinguishes between $\lambda_r$ and $\lambda_s$ with constant probability~$1-c$.

Since the total variation distance between $\lambda_r$ and $\lambda_s$ is $O(\delta)$, it is well known (and easy to check) that any black-box algorithm which distinguishes between the two distributions with $\Omega(1)$ success probability must take $\Omega(\frac{1}{\delta})$ samples in expectation when presented with one of these distributions. As a consequence, the average-case sample complexity of $\A$ on either of $\lambda_r$ and $\lambda_s$ is $\Omega(\frac{1}{\delta})$. Since $\delta > 0$ can be made arbitrarily small, this completes the proof.

\subsection*{Impossibility of Optimality (Proof of Theorem\,\ref{thm:hardness:blackbox} (b))}

Consider a setting with three actions $\set{1,2,3}$ and three corresponding states of nature $\theta_1, \theta_2, \theta_3$. In each state $\theta_i$, the receiver derives utility $1$ from action $i$ and utility $0$ from the other actions. The sender, on the other hand, derives utility $1$ from action $3$ and utility $0$ from actions $1$ and $2$. For an arbitrary parameter $\delta > 0$, we define two distributions $\lambda$ and $\lambda'$ over states of nature with total variation distance $\delta$, illustrated in Table \ref{table:twodists}.

Assume algorithm $\A$ is optimal and $c$-incentive compatible for a constant $c < \frac{1}{4}$.   The optimal incentive-compatible scheme for $\lambda'$ results in expected sender utility $3\delta$ by recommending action $3$ whenever the state of nature is $\theta_2$ or $\theta_3$, and with probability $\frac{\delta}{1-2 \delta}$ when the state of nature is $\theta_1$. Some calculation reveals that in order to match this expected sender utility subject to $c$-incentive compatibility, signaling scheme $\varphi'=\A(\lambda')$ must satisfy $\varphi'(\theta_2, \sigma_3) \geq \mu$ for $\mu =  1-4c > 0$. In other words, $\varphi'$ must recommend action $3$ a constant fraction of the time when given state $\theta_2$ as input. In contrast, since $c < \frac{1}{2}$ it is easy to see that $\varphi=\A(\lambda)$ can never recommend action $3$: for any signal, the posterior expected receiver reward for action $3$ is $0$, whereas one of the other two actions must have posterior expected receiver reward at least $\frac{1}{2}$. It follows that given $D \in \set{\lambda,\lambda'}$, a call to $\A(\D,\theta_2)$ yields a tester which distinguishes between $\lambda$ and $\lambda'$ with constant probability $\mu$. Since $\lambda$ and $\lambda'$ have statistical distance $\delta$, we conclude that the worst case sample complexity of $\A$ on either of $\lambda$ or $\lambda'$ is $\Omega(\frac{1}{\delta})$.  Since $\delta > 0$ can be made arbitrarily small, this completes the proof.

\begin{table}
	\small
	\begin{center}
		\begin{tabular}{ | c | c | c | c | }
			\hline
			& $\Pr[\theta_1]$ & $\Pr[\theta_2]$ & $\Pr[\theta_3]$ \\ \hline 
			$\lambda$ & $1- 2 \delta$ & $2 \delta$ & 0 \\ \hline
			$\lambda'$ & $1- 2 \delta$ & $\delta$ & $\delta$ \\
			\hline
		\end{tabular}
	\end{center}
	\caption{Two Distributions on Three Actions}
	\label{table:twodists}
\end{table}

\end{document}